\def\Span{\textup{span}}
\def\mS{\mathbb{\Sigma}}
\numberwithin{equation}{section}
\definecolor{SmartBlue}{RGB}{51, 51, 255}
\setlist[itemize]{topsep=0pt, partopsep=0pt, parsep=0pt, itemsep=0pt}
\newtheorem{theorem}{Theorem}[section]
\newtheorem{proposition}[theorem]{Proposition}
\newtheorem{lemma}[theorem]{Lemma}
\newtheorem{definition}[theorem]{Definition}
\newtheorem{remark}[theorem]{Remark}
\newtheorem*{orient*}{Orientation Condition}
\newtheorem*{junction*}{Junction Conditions}
\newtheorem*{nonexphor*}{Non-expanding horizons}
\newtheorem*{weakisolhor*}{Weakly isolated horizons}
\newtheorem*{isolhor*}{Isolated horizons}
\def\ovkil{\ov{\teta}}
\def\ovkil{\ov{\teta}}
\def\teta{\eta}%{\widetilde{\eta}}
\def\kil{\teta}
\def\defi{{\stackrel{\mbox{\tiny {\textbf{def}}}}{\,\, = \,\, }}}
\def\bsff{\textbf{\textup{K}}}%{\textup{I}\hspace{-0.019cm}\textup{I}}%{\mathfrak{R}}
\def\btsff{\bsff{}}%{\widetilde{\bsff}{}}
\def\nablao{{\stackrel{\circ}{\nabla}}}
\def\Riemo{{\stackrel{\circ}{R}}}
\def\sone{\hat{s}}
\def\bsone{\bs{\hat{s}}}
\def\metdata{\{\mathcal{N},\gamma,\ellc,\elltwo\}}
\def\G{\mathfrak{G}}
\def\Hemb{\hor}%{\widetilde{\mathcal H}}
\def\H{\mathcal H}
\def\bomega{\bs{\omega}}
\def\bkilone{\bs{\upvarpi}}
\def\trP{\textup{tr}_P}
\def\n{\mathfrak{n}}
\def\G{\mathcal G}
\def\Fcal{\mathcal F}
\def\bY{\textup{\textbf{Y}}}
\def\Y{\textup{Y}}
\def\bU{\textup{\textbf{U}}}
\def\U{\textup{U}}
\def\bF{\textup{\textbf{F}}}
\def\F{\textup{F}}
\def\Yn{r}
\def\Q{Q}%{\varkappa_n}
\def\sigo{\stackrel{\circ}{\Sigma}}
\newcommand{\nn}{\nonumber}
\newcommand{\bm}[1]{\mbox{\boldmath $#1$}}
\newcommand\ovnabla{\ov{\nabla}}
\newcommand\N{\mathcal H}
\newcommand\M{\mathcal M}
\newcommand\elltwo{\ell^{(2)}}
\newcommand\hypdata{\{ \mathcal{H},\gamma,\ellc, \elltwo, \bY\}}
\newcommand\rig{\xi}
\newcommand\A{\mathcal A}
\def\bmell{\bm{\ell}}
\def\ellc{\bmell}
\def\nablao{{\stackrel{\circ}{\nabla}}}
\def\K{\mathcal K}
\def\Kkil{\K[\kil]}
\def\SigmaZ{\mS[Z]}
\def\Sigmakil{\Sigma[\kil]}%{\Sigma[\kil]}
\newcommand{\Rtensor}{\mathcal{R}}
\def\defi{{\stackrel{\mbox{\tiny \textup{\textbf{def}}}}{\,\, = \,\, }}}
\def\nablao{{\stackrel{\circ}{\nabla}}}
\def\Riemo{{\stackrel{\circ}{R}}}
\def\sone{s}
\def\bsone{\bs{s}}
\def\metdata{\{\mathcal{H},\gamma,\ellc,\elltwo\}}
\def\trP{\textup{tr}_P}
\def\n{\mathfrak{n}}
\def\G{\mathcal G}
\def\Fcal{\mathcal F}
\def\bY{\textup{\textbf{Y}}}
\def\Y{\textup{Y}}
\def\bU{\textup{\textbf{U}}}
\def\U{\textup{U}}
\def\bF{\textup{\textbf{F}}}
\def\F{\textup{F}}
\def\Yn{r}
\def\Q{\kappa_n}%{\varkappa_n}
\def\bmell{\bm{\ell}}
\def\ellc{\bmell}
\def\nablao{{\stackrel{\circ}{\nabla}}}
\def\K{\mathcal K}
\newcommand\Rad{\mbox{Rad}}
\newcommand{\textbothdata}{(metric) $\mathcal{K}$-tuple\xspace}%{pre-horizon (metric) hypersurface data\xspace}
\newcommand{\textnonedata}{$\mathcal{K}$-tuple\xspace}%{pre-horizon data\xspace}
\newcommand{\textpmdata}{metric $\mathcal{K}$-tuple\xspace}%{pre-horizon metric data\xspace}
\newcommand{\pmdata}{\{\N,\gamma,\ellc,\elltwo,\alpha,\p,\bqone\}}
\newcommand{\textphdata}{$\mathcal{K}$-tuple\xspace}%{pre-horizon hypersurface data\xspace}
\newcommand{\phdata}{\{\N,\gamma,\ellc,\elltwo,\bY,\alpha,\p,\bqone\}}
\newcommand{\isotensor}{non-isolation tensor\xspace}
\newcommand{\ov}{\overline}
\newcommand{\bs}{\boldsymbol}
\newcommand{\lp}{\left(}
\newcommand{\rp}{\right)}
\newcommand{\cv}{\mathcal{V}}
\newcommand{\lieo}{\mathsterling}
\def\metdataa{\{\gamma,\ellc,\elltwo\}}
\newcommand{\wt}{\widetilde}
\newcommand{\nullhyp}{\widetilde{\N}}
\newcommand{\spc}{\textup{ }}
\newcommand{\ovsigmakil}{\ov{\Sigma}[\ovkil]}
\newcommand{\w}{\mathfrak{w}}%{\lightfrak{w}}%{\mathfrak{w}}
\newcommand{\p}{\mathfrak{p}}%{\lightfrak{p}}%{\mathfrak{p}}
\newcommand{\qone}{\mathfrak{q}}
\newcommand{\bqone}{\bs{\qone}}
\def\calP{\Pi}%{\mathfrak{B}}%{\frakYkil}
\def\bcalP{\bs{\Pi}}%{\bs{\mathfrak{B}}}%{\frakYkil}
\newsavebox\myboxA
\newsavebox\myboxB
\newlength\mylenA
\newcommand*\xoverline[2][0.75]{%
    \sbox{\myboxA}{$\m@th#2$}%
    \setbox\myboxB\null% Phantom box
    \ht\myboxB=\ht\myboxA%
    \dp\myboxB=\dp\myboxA%
    \wd\myboxB=#1\wd\myboxA% Scale phantom
    \sbox\myboxB{$\m@th\overline{\copy\myboxB}$}%  Overlined phantom
    \setlength\mylenA{\the\wd\myboxA}%   calc width diff
    \addtolength\mylenA{-\the\wd\myboxB}%
    \ifdim\wd\myboxB<\wd\myboxA%
       \rlap{\hskip 0.5\mylenA\usebox\myboxB}{\usebox\myboxA}%
    \else
        \hskip -0.5\mylenA\rlap{\usebox\myboxA}{\hskip 0.5\mylenA\usebox\myboxB}%
    \fi}
\DeclareFontFamily{U}{rcjhbltx}{}
\DeclareFontShape{U}{rcjhbltx}{m}{n}{<->rcjhbltx}{}
\DeclareSymbolFont{hebrewletters}{U}{rcjhbltx}{m}{n}
\DeclareMathSymbol{\aleph}{\mathord}{hebrewletters}{39}
\DeclareMathSymbol{\beth}{\mathord}{hebrewletters}{98}
\DeclareMathSymbol{\gimel}{\mathord}{hebrewletters}{103}
\DeclareMathSymbol{\lamed}{\mathord}{hebrewletters}{108}
\DeclareMathSymbol{\mem}{\mathord}{hebrewletters}{109}
\DeclareMathSymbol{\ayin}{\mathord}{hebrewletters}{96}
\DeclareMathSymbol{\tsadi}{\mathord}{hebrewletters}{118}
\DeclareMathSymbol{\qof}{\mathord}{hebrewletters}{113}
\DeclareMathSymbol{\shin}{\mathord}{hebrewletters}{152}
\DeclareMathSymbol{\pe}{\mathord}{hebrewletters}{80}
\DeclareMathSymbol{\heh}{\mathord}{hebrewletters}{104}
\DeclareMathSymbol{\peh}{\mathord}{hebrewletters}{112}
\newcommand{\hor}{\mathscr{H}}
 \newcounter{mnotecount}
 \newcommand{\mnote}[1]%{}
 {\protect{\stepcounter{mnotecount}}$^{\mbox{\tiny
 $\,\bullet$\themnotecount}}$ \marginpar{%\color{red}%
 \raggedright\tiny\em
 $\,\bullet$\themnotecount: #1} }
\title{Horizon Data: Existence Results and a\\ Near-Horizon Equation on General Null Hypersurfaces}
\author{
	Miguel Manzano$^1$\thanks{{\tt m.manzano.rod@gmail.com}}\ \ and
	Marc Mars$^2$\thanks{{\tt marc@usal.es}}\\ \\
	$^1$ Faculty of Mathematics, University of Vienna, \\
	Oskar-Morgenstern-Platz 1, 1090 Vienna, Austria. \\ \\
    $^2$ Faculty of Sciences, University of Salamanca,\\
    Plaza de la Merced 18, 37008 Salamanca, España.
}
\newcommand\subsubsubsection{\@startsection{paragraph}{4}{\z@}{-2.5ex\@plus -1ex \@minus -.25ex}{1.25ex \@plus .25ex}{\normalfont\normalsize\bfseries}}
\newcommand\subsubsubsubsection{\@startsection{subparagraph}{5}{\z@}{-2.5ex\@plus -1ex \@minus -.25ex}{1.25ex \@plus .25ex}{\normalfont\normalsize\bfseries}}
\begin{document}

\setlength{\abovedisplayskip}{0.15cm}
\setlength{\belowdisplayskip}{0.15cm}

\maketitle

\begin{abstract}
In a spacetime $(\mathcal{M},g)$, a horizon is a null hypersurface where the deformation tensor $\mathcal{K}:=\pounds_{\eta}g$ of a null and tangent vector $\eta$ satisfies certain restrictions. In this work, we develop a formalism to study the geometry of \textit{general} horizons (i.e.\ characterized by any $\mathcal{K}$), based on encoding the zeroth and first transverse derivatives of $\mathcal{K}$ on null hypersurfaces detached from any ambient spacetime. We introduce the notions of \textit{$\mathcal{K}$-tuple} and \textit{non-isolation tensor}. The former encodes the order zero of $\mathcal{K}$, while the latter is a symmetric $2$-covariant tensor that codifies the ``degree of isolation" of a horizon. In particular, the non-isolation tensor vanishes for homothetic, Killing and isolated horizons. As an application we derive a \textit{generalized near-horizon equation}, i.e., an identity that holds on any horizon (regardless of its topology or whether it contains fixed points), which relates the non-isolation tensor, a certain torsion one-form, and curvature terms. By restricting this equation to a cross-section one can recover the near-horizon equation of isolated horizons and the master equation of multiple Killing horizons. Our formalism allows us to prove two existence theorems for horizons. Specifically, we establish the necessary and sufficient conditions for a non-degenerate totally geodesic horizon with any prescribed non-isolation tensor to be embeddable in a spacetime satisfying any (non-necessarily $\Lambda$-vacuum) field equations. We treat first the case of arbitrary topology, and then show how the result can be strengthened when the horizon admits a cross-section. 
\end{abstract}

\section{Introduction}\label{c:Introduction}

For the purposes of this paper, horizons are null hypersurfaces endowed with a privileged null and tangent vector field $\kil$ which encodes certain symmetry property of the spacetime. They play a fundamental role in many gravitational contexts such as gravitational collapse, black hole uniqueness, singularity theorems, or null infinity, see e.g.\ 
\cite{
  penrose1969gravitational,
  wald1984general,
  senovilla1998singularity,
  chrusciel2012stationary,
  frolov2012black,
  kunduri2013classification,
  ashtekar2024null,
  hounnonkpe2025horizon} 
and references therein. Prime examples of horizons are homothetic, conformal Killing and Killing horizons, but there are many other useful notions such as non-expanding, weakly isolated and isolated horizons \cite{ashtekar2000generic,ashtekar2000isolated, ashtekar2002geometry,krishnan2002isolated,gourgoulhon20063+, jaramillo2009isolated}, 
or multiple Killing horizons \cite{mars2018multiple}. Horizons usually mark the boundary of causally relevant regions of a spacetime, and their presence reveals interesting connections between its causal structure and the behaviour of symmetries, which often change causal character across the horizons. In particular, Killing horizons are linked to (local) isometries and hence to equilibrium states, whereas homothetic horizons arise for instance in self-similar evolution, most notably in critical gravitational collapse \cite{gundlach2007critical,rodnianski2018asymptotically}. In a more general context, homothetic horizons are also prominent objects in conformal geometry via the Fefferman-Graham ambient construction \cite{fefferman1985conformal,fefferman2012ambient}.
\medskip
 
The link between horizons and spacetime symmetries raises fundamental questions, such as what are the conditions for a spacetime to admit a horizon, or what data must be prescribed on a horizon so that it induces a specific type of symmetry. These issues have been widely studied in the literature, starting with the seminal work \cite{moncrief1982neighborhoods} where analytic, Ricci-flat ($4$-dimensional) spacetimes with a non-degenerate Killing horizon were shown to be uniquely determined by six functions on the horizon (a result that was later reformulated in more geometric terms, and extended to smooth spacetimes in \cite{kroencke2024asymptotic}). For \textit{degenerate} Killing horizons, several non-existence and uniqueness results have been proved, 
including non-existence of such horizons in static $\Lambda$-vacuum spacetimes \cite{chrusciel2005non,bahuaud2022static,wylie2023rigidity}, uniqueness of Schwarzchild-de Sitter \cite{katona2024uniqueness} and (local) uniqueness of the Kerr-horizon data in the (electro)vacuum rotating case 
\cite{lewandowski2003extremal,kunduri2009classification,kunduri2009uniqueness,matejov2021uniqueness,dunajski2025intrinsic}. 
In the case where the horizon is of bifurcate type, i.e.\ it consists of two transverse null hypersurfaces, the problem can be analyzed from a characteristic initial value perspective. Relevant results in this regard are the formulation of the characteristic problem on bifurcate Killing horizons \cite{racz2007stationary} (see also \cite{chrusciel2012manyways}), the analyticity of the spacetime metric for bifurcate Killing horizons defined by a hypersurface-orthogonal Killing vector \cite{chrusciel2004analyticity}, and the fact that bifurcate Killing horizons with closed torsion one-form give rise to (and arise from) static Killing vectors \cite{mars2023staticity}. Motivated by these works, the purpose of the present paper is to introduce a formalism for describing horizons as manifolds a priori \textit{detached from any ambient spacetime}, and to exploit it to study the interplay between the existence of  spacetimes with symmetries and the geometric properties of the horizons. The key point of the formalism is that it allows one to work with \textit{generic} notions of horizons (i.e.\ far less restrictive than those of homothetic or a Killing horizons),  
as well as to deal with spacetimes satisfying \textit{any} (i.e.\ non-necessarily $\Lambda$-vacuum) field equations. 
\medskip  

A particularly relevant object in this context is the so-called deformation tensor $\Kkil\defi \pounds_{\kil}g$ of a spacetime $(\M,g)$, which accounts for the ``degree of symmetry" of $(\M,g)$ along $\kil$. In fact, every notion of horizon imposes constraints on $\Kkil$ and/or its derivatives. For instance, \(\Kkil\) and all its derivatives vanish on Killing horizons, 
and are proportional to derivatives of \(g\) on homothetic horizons, whereas isolated horizons impose restrictions only up to the first transverse derivative of \(\Kkil\) \cite{manzano2023field}. We base our approach on capturing the geometric information of $\Kkil$ (more specifically, its zeroth and first orders at the horizon) in terms of \textit{data}---i.e.\ a set of tensor fields---on a detached null hypersurface. In this way, the formalism enables us to consider \textit{completely general} horizons (i.e.\ with arbitrary $\Kkil$ and with no restrictions of the topology of the horizon), and then study specific types of horizon by particularizing the data set. Besides, we shall maintain full generality regarding the generator $\kil$, which in particular is allowed to vanish on the horizon.
\medskip

To capture the order zero of $\Kkil$ on a 
detached null hypersurface $\N$ we introduce the concept of \textit{\textpmdata} (cf.\ Definitions \ref{def:phdata} and \ref{def:embedded:pdata}), which involves two functions $\alpha,\p$ and a covector $\bqone$, subject to certain restrictions. This notion is tailored so that, once $\N$ is embedded in a spacetime, the deformation tensor at the horizon is given by the quantities $\alpha \bU, \p,\bqone$, where $\bU$ is computable from the data and a posteriori agrees with the second fundamental form of $\N$. In particular, a Killing horizon of order zero is described at a non-embedded level by a \textphdata with $\{\alpha\bU=0,\p=0,\bqone=0\}$. A key result in this context is that, for arbitrary $\{\alpha,\p,\bqone\}$, any embedded null hypersurface can be turned into a zeroth-order horizon with deformation tensor given by $\alpha\bU,\p,\bqone$ (Proposition \ref{prop:extension:eta}). This is achieved by proving that the vector $\kil$ can always be extended off the horizon in such a way that its deformation tensor matches the prescribed data. The extension of $\kil$ is of course not unique, but its first order transverse derivative at the horizon is uniquely determined (equation \eqref{lie:rig:eta}).
\medskip

The task of capturing the order one of $\Kkil$ (i.e.\ its first transverse derivative) at a non-embedded level is more complicated, and involves commutators of Lie and covariant derivatives.  For instance, isolated horizons \cite{ashtekar2002geometry} are totally geodesic null hypersurfaces for which the commutator $[\pounds_{\kil},\ovnabla]$ of the connection $\ovnabla$ induced from the spacetime and the Lie derivative along $\kil$ vanishes. For totally geodesic null hypersurfaces, $\ovnabla$ is intrinsic to the horizon, hence $[\pounds_{\kil},\ovnabla]=0$ is a property of the horizon and one can understand the tensor
$[\pounds_{\kil},\ovnabla]$ as a measure of the ``degree of non-isolation". For general null hypersurfaces, however, the induced connection $\ovnabla$ depends strongly on the choice of a \textit{rigging} (i.e.\ a transverse vector along the hypersurface). Consequently,  $[\pounds_{\kil},\ovnabla]$ is in general not suitable for characterizing horizon geometry, and one must find a tensor which encodes the same information as $[\pounds_{\kil},\ovnabla]$ in the totally geodesic case, but exhibits a better behavior under changes of rigging. In Section \ref{sec:hor:data}, we provide one such notion of \isotensor $\bcalP^{\kil}$ (Definition \ref{Def:calP:sigma:new}), and show that in the embedded case it is related to the order one of $\Kkil$ (Lemma \ref{lem:order_one}). In particular, $\bcalP^{\kil}$ vanishes for isolated, homothetic and Killing horizons.
\medskip

An interesting application of the results above is the so-called \textit{generalized near-horizon equation} (Theorem \ref{thm:master:equation}). This is an identity that holds on \textit{any}  \textphdata, which relates the \isotensor, the surface gravity $\kappa$ of $\kil$, the extrinsic curvature of $\N$, a one-form $\bomega$, and the so-called \textit{constraint tensor} $\bs{\Rtensor}$ \cite{mars2023first,manzano2023constraint} (namely a tensor that encodes the tangential part of the ambient Ricci tensor at a non-embedded level). 
The one-form $\bomega$ is an extension of the well-known \textit{torsion one-form} in the sense that $\bomega$ restricted to cross-sections (when they exist) coincides with it. Remarkably, the generalized near-horizon equation is valid for arbitrary topology (in particular, it does not require the existence of a cross-section), and for \textnonedata{}s possibly containing fixed points of $\kil$. It also holds for arbitrary $\bs{\Rtensor}$, hence for any field equations one may wish to impose. More important is the fact that its restriction to cross-sections enables to recover both the \textit{near-horizon equation} of isolated horizons \cite[Eq.\ (5.3)]{ashtekar2002geometry} and the \textit{master equation} of multiple Killing horizons \cite[Eq.\ (60)]{mars2018multiple} as particular cases. For the purposes of this work, one key aspect of the generalized near-horizon equation is that, when $\kappa$ is no-where zero, it allows one to determine algebraically the extrinsic curvature of the horizon (i.e.\ first transverse derivatives of the spacetime metric in the embedded case) from $\bomega$, $\bcalP^{\kil}$ and $\bs{\Rtensor}$. This result turns out to be essential to prove existence of a spacetime with a symmetry from initial data prescribed on a horizon (Section \ref{sec:existence:results}).
\medskip

Two main results of the paper arise from applying the framework to study the problem of how to construct a spacetime from data on a totally geodesic horizon with nowhere zero $\kappa$. We obtain two existence results (Theorems \ref{thm:exist:pi:R} and \ref{thm:existence:sections}). The first one establishes the necessary and sufficient conditions to be satisfied by the data to ensure existence of a spacetime $(\M,g)$ realizing the prescribed quantities. More precisely, the data consists of the order zero of the spacetime metric together with freely prescribed quantities $\alpha$, $\bomega$, $\bcalP^{\kil}$ and $\bs{\Rtensor}$ subject only to certain differential conditions, and the spacetime $(\M,g)$ is such that $\alpha$ determines the vector field $\kil$ on the embedded horizon, $\bomega$ is the generalized torsion one-form, $\bcalP^{\kil}$ is the \isotensor of $\kil$, and $\bs{\Rtensor}$ coincides with the tangential components of the ambient Ricci tensor. The necessary and sufficient conditions, on the other hand, are 
a set of differential equations that, in particular, provide evolution equations for $\alpha$, $\bomega$ and $\bcalP^{\kil}$ (or $\bs{\Rtensor}$) along the null generators of the horizon (cf.\ \eqref{Pi(n,dot)=Rtensor}-\eqref{Lie:Pi}). 
These results therefore determine how to build spacetimes containing totally geodesic horizons with any ``degree of non-isolation", any topology, possibly with fixed points, and such that the spacetime Ricci tensor satisfies any field equations. Since horizons admitting a cross-section $S$ are of particular relevance in General Relativity, in Theorem \ref{thm:existence:sections} we provide the corresponding existence theorem with initial data on $S$. As a particular case, Theorems \ref{thm:exist:pi:R} and \ref{thm:existence:sections} can be applied to construct spacetimes with non-expanding, weakly isolated or Killing horizons.\medskip

In the particular case when the non-isolation tensor $\bcalP^{\kil}$ vanishes identically and the constraint tensor $\bs{\Rtensor}$ is $\Lambda$-vacuum, the necessary and sufficient conditions \eqref{Pi(n,dot)=Rtensor}-\eqref{Lie:Pi} simplify drastically. The third condition becomes trivial, while the first one just states that the surface gravity is constant. Thus, the only relevant condition that remains is \eqref{Pi(n,dot)=Lie:omega}. Up to a simple redefinition of variables, this is precisely the defining condition of the notion of abstract Killing horizon data (AKH) introduced in \cite{mars2024transverseII}. In \cite{mars2024transverseII} and its follow up paper \cite{mars2025KID}, AKH data with constant non-zero surface gravity was shown to give rise to a spacetime   $(\M,g)$ where the data can be embedded as a Killing horizon, and such that the $\Lambda$-vacuum field equations are satisfied to infinite order on the horizon. Asymptotic uniqueness of $(\M,g)$ at the horizon was also proven.  Given that in the present paper we are not assuming any field equations, the existence theorem we establish is necessarily weaker concerning the properties of the spacetime one constructs (nothing can be said about higher order transverse derivatives at the horizon). However, it is stronger in the sense that it allows for general non-isolation tensor $\bcalP^{\kil}$ and constraint tensor $\bs{\Rtensor}$. In combination with suitable field equations, our result can potentially give rise to existence and uniqueness theorems to all orders for spacetimes beyond $\Lambda$-vacuum and/or admitting more general types of horizons.\medskip

The structure of the paper is as follows. In Section \ref{sec:prelim} we revisit the basic concepts and results of the formalism of hypersurface data 
\cite{mars2013constraint,mars2020hypersurface,mars1993geometry,manzano2023matching,manzano2023constraint}, which is the mathematical framework used in the paper. Section \ref{sec:tensors:sigma} is devoted to studying commutators of Lie and covariant derivatives.  
In Section \ref{sec:hor:data} we introduce the notions of \textphdata and \isotensor, which are then exploited in Section \ref{sec:Cov:ME:General:Hyp} to derive the generalized near-horizon equation. Finally, in Section \ref{sec:existence:results} we demonstrate the aforementioned existence results in the case of totally geodesic initial data. 

\subsection{Notation and conventions}\label{sec:notation}

In this paper, all manifolds are smooth, connected and without boundary. Given a manifold $\M$ we use $\mathcal{F}\lp\mathcal{M}\rp\defi C^{\infty}\lp\mathcal{M},\mathbb{R}\rp$, and $\mathcal{F}^{\star}\lp\mathcal{M}\rp\subset\mathcal{F}\lp\mathcal{M}\rp$ for the subset of no-where zero functions. The tangent bundle is denoted 
by $T\mathcal{M}$, and $\Gamma\lp T\mathcal{M}\rp$ is the set
of sections (i.e.\ vector fields). We use $\pounds$, $d$ for the Lie derivative and exterior derivative. Both tensorial and abstract index notation will be employed. We work in arbitrary dimension $\mathfrak{n}\geq 2$ and use the following sets of indices:
\begin{equation}
\label{notation}
\alpha,\beta,...=0,1,2,...,\mathfrak{n};\qquad a,b,...=1,2,...,\mathfrak{n};\qquad A,B,...=2,...,\mathfrak{n}.
\end{equation}
When index-free notation is used (and only then) we shall distinguish
covariant tensors with boldface.\ As usual, parenthesis (resp.\ brackets) denote symmetrization (resp.\ antisymmetrization) of indices.\ The symmetrized tensor product is defined by $A\otimes_s B\equiv\frac{1}{2}(A\otimes B+B\otimes A)$.\  
We write $\text{tr}_B\bs{A}$ for the trace of a symmetric $(0,2)$-tensor $\bs{A}$ with respect to a $(2,0)$-tensor $B$. Our notation and convention for the curvature operator of a connection $D$ is 
\begin{align}
\label{curvoperator} R^{D}(X,W) Z \defi
\left ( D_X D_W - D_W D_X - D_{[X,W]} \right ) Z, 
\end{align}
and we write $\textbf{\textup{Hess}}^D$ for the Hessian operator of $D$. In any Lorentzian manifold $(\mathcal{M},g)$,
the scalar product of two vectors is written as $g(X,Y)$, and we use $g^{\sharp}$ and $\nabla$ for the inverse and the Levi-Civita derivative of $g$ respectively. Our signature for Lorentzian manifolds $\lp \mathcal{M},g\rp$ is $(-,+, ... ,+)$.

\section{Preliminaries: The formalism of hypersurface data}\label{sec:prelim}
In this section, we present the basic notions of the hypersurface data formalism, and collect a number of results for later use. Since this work deals with null hypersurfaces, we only present the formalism in the null case. For further details we refer to \cite{mars2013constraint,mars2020hypersurface} (see also \cite{mars1993geometry,manzano2023matching,manzano2023constraint}).

Let $\N$ be a smooth $\n$-manifold endowed with a symmetric $(0,2)$-tensor field $ \gamma $ of signature $(0,+,...,+)$, a covector field $\ellc$ and a scalar function $\ell^{(2)}$. The tuple $\metdata$ defines \textbf{null metric hypersurface data} provided that the square $(\n+1)$-matrix
\begin{equation}
\label{def:A}\bs{\A}\defi \lp \hspace{-0.1cm}
\begin{array}{cc}
\gamma_{ab} & \ell_a\\
\ell_b & \elltwo
\end{array}
\hspace{-0.1cm}\rp
\end{equation}
is non-degenerate everywhere on $\N$.\  
A null metric data $\metdata$ equipped with an additional symmetric $(0,2)$-tensor $\bY$ defines \textbf{null hypersurface data} $\hypdata$. 

Given null metric hypersurface data, one can uniquely define a symmetric $(2,0)$-tensor field $P$ and a vector field $n$ as the entries of the inverse of $\bs{\A}$ \cite{mars2013constraint}. By construction, they satisfy

\vspace{-0.55cm}

\noindent
\begin{minipage}[t]{0.18\textwidth}
	\begin{align}
		\gamma_{ab} n^b & = 0, \label{prod1} 
	\end{align}
\end{minipage}
\hfill
\hfill
\begin{minipage}[t]{0.18\textwidth}
	\begin{align}
		\ell_a n^a & = 1, \label{prod2}  %\\
	\end{align}
\end{minipage}
\hfill
\begin{minipage}[t]{0.3\textwidth}
	\begin{align}
		P^{ab} \ell_b + \elltwo n^a & = 0,  \label{prod3} %\\
	\end{align}
\end{minipage}
\hfill
\begin{minipage}[t]{0.3\textwidth}
	\begin{align}
		P^{ab} \gamma_{bc} + n^a \ell_c & = \delta^a_c. \label{prod4}
	\end{align}
\end{minipage}

\vspace{0.1cm}

Note that $n$ is no-where zero and spans the radical $\text{Rad}\gamma\vert_p\defi\{X\in T_p\N\hspace{0.05cm}\vert\hspace{0.05cm} \gamma(X,\cdot)=0\}$ of the degenerate tensor $\gamma$ (cf.\ \eqref{prod1}-\eqref{prod2}). We call \textbf{generators} of $\N$ the integral curves of $n$. 

\begin{remark}
We have fixed the signature of $\gamma$ to be $(0,+,\dots,+)$  for concreteness. All results below apply when $\gamma$ has signature $(0,-,\dots,-,+,\dots,+)$. The only difference is that the ambient spaces $(\M,g)$ are no longer Lorentzian but  semi-Riemannian (with same signature as the tensor $\bs{\mathcal{A}}$ in \eqref{def:A}).
\end{remark}

The following tensor fields play an important role in the hypersurface data formalism \cite{mars2020hypersurface}:

\vspace{-0.6cm}

\noindent
\begin{minipage}[t]{0.48\textwidth}
\begin{align}
\label{threetensors} \bF & \defi \frac{1}{2} d \ellc, & \bm{\sone} &\defi  \bF(n,\cdot), & \bU  &\defi  \frac{1}{2}\pounds_{n} \gamma ,
\end{align}
\end{minipage}
\hfill
\hfill
\begin{minipage}[t]{0.52\textwidth}
\begin{align}
\label{defY(n,.)andQ}
\bs{\Yn}&\defi\bY(n,\cdot), & \bomega&\defi\bsone-\bs{\Yn}, & \Q &\defi -\bY(n,n).
\end{align}
\end{minipage}

\vspace{-0.cm}

Observe that $\bU$ is symmetric and $\bF$ is a $2$-form, hence $\bsone(n)=0$.\ Moreover, $\bsone$ and $\bU$ verify 
\cite{mars2020hypersurface}

\vspace{-0.6cm}

\noindent
\begin{minipage}[t]{0.4\textwidth}
	\begin{align}
		\pounds_{n} \ellc & = 2 \bm{\sone}, \label{soneprop}
	\end{align}
\end{minipage}
\hfill
\hfill
\begin{minipage}[t]{0.6\textwidth}
	\begin{align}
		\bU (n, \cdot ) & = 0. \label{Un} 
	\end{align}
\end{minipage}

Since $\N$ is not endowed with a metric tensor, we cannot define a Levi-Civita covariant derivative. However, it turns out that 
there exists a canonical notion of covariant derivative on $\N$. Specifically, given null metric hypersurface data $\metdata$, the two conditions \cite[Prop. 4.3]{mars2020hypersurface}

\vspace{-0.55cm}

\noindent
\begin{minipage}[t]{0.4\textwidth}
	\begin{align}
		\nablao_{a} \gamma_{bc} & = - \ell_b \U_{ac} - \ell_c \U_{ab}, \label{nablaogamma} 
	\end{align}
\end{minipage}%
\hfill
\hfill
\begin{minipage}[t]{0.6\textwidth}
	\begin{align}
		\nablao_a \ell_b & = \F_{ab} - \elltwo \U_{ab} \label{nablaoll},
	\end{align}
\end{minipage}

uniquely define a torsion-free covariant derivative $\nablao$ on $\N$, called
\textbf{metric hypersurface connection}. 
The curvature and Ricci tensors of $\nablao$ are denoted by $\Riemo{}^{a}{}_{bcd}$ and $\Riemo_{ab}$ respectively.  
The tensor $\Riemo_{ab}$ verifies $\Riemo_{[ab]} = \nablao_{[a} \sone_{b]}$ \cite{mars2020hypersurface}, so it is not symmetric in general. The following three identities involving $\nablao$ and $\Riemo_{ab}$ will be used later \cite{mars2020hypersurface,manzano2023constraint}:
\begin{align}
\label{nablao:n}\nablao_{a}n^b=&\spc n^b\sone_a+P^{bf}\U_{af},\\
n^b\nablao_{(a}\theta_{b)}=&\spc\frac{1}{2}\pounds_{n}\theta_a  +  \frac{1}{2} \nablao_{a}(\bs{\theta}({n}))  - \bs{\theta}(n) \sone_a    -   P^{bc}\theta_{b}\U_{ac},\quad \forall \bs{\theta}\in\Gamma(T^{\star}\N),
\label{n:nablao:theta:sym}\\
\label{Riemosym(n,-)} \Riemo_{(ab)}n^b& =
    \dfrac{1}{2}\pounds_{n}\sone_a-2P^{cd}\U_{ac}\sone_{d}+P^{cd}\nablao_c\U_{ad} -\nablao_a(\textup{tr}_P\bU)+ (\textup{tr}_P\bU)\sone_a.
\end{align}
The connection $\nablao$ is not the only useful covariant derivative that can be constructed from a given data set. From hypersurface data one can  define another torsion-free connection $\ovnabla$ by \cite{mars2013constraint}
\begin{align}
\label{def:ovnabla} \ovnabla_XW\defi \nablao_XW-\bY(X,W)n,\qquad \forall X,W\in\Gamma(T\N).
\end{align}
We call $\ovnabla$ \textbf{hypersurface} 
\textbf{connection} since, as opposed to $\nablao$, it depends on the tensor field $\bY$.\ 

The notions of rigging and embedded data \cite{mars2013constraint,mars2020hypersurface} connect the hypersurface data formalism with the geometry of embedded hypersurfaces. 
A null metric data is $\bs{(\phi,\rig)}$\textbf{-embedded} in a Lorentzian $(\n+1)$-manifold $(\M,g)$ if there exists an embedding $\phi:\mathcal{N}\longhookrightarrow\mathcal{M}$ and a \textbf{rigging} $\rig$---i.e.\ a vector field along $\phi \lp\mathcal{N}\rp$, everywhere transverse to it---satisfying  
\begin{equation}
	\label{emhd}
	\phi ^{\star}\lp g\rp= \gamma , \qquad\phi ^{\star}\lp g\lp\rig,\cdot\rp\rp=\ellc, \qquad\phi ^{\star}\lp g\lp\rig,\rig\rp\rp=\elltwo.
\end{equation}
Embeddedness of hypersurface data $\hypdata$ requires the additional condition
\begin{equation}
\label{YtensorEmbDef}
\dfrac{1}{2}\phi^{\star}\lp \pounds_{\rig}g\rp=\bY.
\end{equation}
When no ambiguity arises, we shall identify scalars and vectors on $\N$ with their counterparts on $\phi(\N)$. We will also use the word ``abstract" for mathematical objects that can be defined irrespectively of whether or not the data is embedded in an ambient space, i.e., for quantities defined solely in terms of hypersurface data. Examples of abstract objects are $\gamma$, $\Q$ or the manifold $\N$. 

In the embedded case, 
the tensors $\gamma$ and $\bU$ are the first and second fundamental forms of $\phi(\N)$ respectively, while $\nu\defi \phi_{\star}n$ is the only vector field normal to $\phi(\N)$ and satisfying $g(\rig,\nu)=1$ \cite{mars2020hypersurface}. Moreover, the Levi-Civita derivative $\nabla$ of $g$ is related to the connections $\nablao$, $\ovnabla$ by \cite{mars2013constraint,mars2020hypersurface}
\begin{align}
\nabla_{X}W&=\ovnabla_{X}W - \bU(X,W)\rig
=\nablao_{X}W-\bY(X,W)\nu - \bU(X,W)\rig, \qquad\forall X,W\in\Gamma(T\N).
\label{nablaXYnablao}
\end{align}
Thus, $\ovnabla$ is the covariant derivative induced from the Levi-Civita connection of the ambient space when the rigging $\rig$ is used to split $T_{p}M$, $p\in\phi(\N)$ as a direct sum \cite{mars1993geometry}, i.e.\ $T_{p} M = T_p \phi(\H) \oplus \Span(\xi\vert_p)$.

The hypersurface data formalism encodes information about the ambient curvature without the need of considering the hypersurface as embedded.  
Specifically, from (non-embedded) null hypersurface data $\hypdata$ one can define a symmetric $(0,2)$-tensor $\bs{\Rtensor}$, called \textbf{constraint tensor}, as	\cite{mars2023first,mars2023zcovariant,manzano2023constraint}
\begin{align}
\nn \Rtensor_{ab} \defi  &\spc \Riemo_{(ab)}- 2 \pounds_{{n}} \Y_{ab}
- \lp 2\bomega(n)+\trP\bU \rp\Y_{ab}-2\nablao_{(a}  \omega_{b)}-2\omega_a\omega_b \\
\label{defabsRicci} & + 3\nablao_{(a} \sone_{b)} +\sone_a\sone_b-(\trP\bY)\U_{ab}+ 2P^{cd}\U_{d(a}\lp 2\Y_{b)c}+\F_{b)c}\rp.
\end{align} 
When the data is $(\phi,\rig)$-embedded in $(\M,g)$, the Ricci tensor $\textbf{\textup{Ric}}_g$ of $g$ and $\bs{\Rtensor}$ satisfy $\phi^{\star}(\textbf{\textup{Ric}}_g)=\bs{\Rtensor}$ \cite{manzano2023constraint}. Thus, $\bs{\Rtensor}$ encodes \textit{abstractly} the pull-back to $\N$ of the ambient Ricci tensor. This makes the constraint tensor a core object within the formalism. The following expressions for the contractions $\bs{\Rtensor}(n,\cdot)$ and $\bs{\Rtensor}(n,n)$ will be needed later \cite{manzano2023constraint}:
\begin{align}
\label{ConstTensror(n,-)} \Rtensor_{ab}n^b &=  -\nablao_{a}\Q  + \pounds_n\omega_a+ (\trP\bU)\omega_a - \nablao_a(\textup{tr}_P\bU) + P^{cd} \left ( \nablao_c\U_{ad} - 2 \U_{ac}\sone_{d} \right ),\\
\label{ConstTensror(n,n)} \Rtensor_{ab}n^an^b &= -n(\textup{tr}_P\bU)+(\trP\bU) \Q -P^{ad}P^{bc}\U_{ac}\U_{bd}.
\end{align}
Observe that the combination of \eqref{Riemosym(n,-)} and \eqref{ConstTensror(n,-)} yields
\begin{align}
\label{(R-Rico)(n,.)}\lp \Rtensor_{ab} -\Riemo_{(ab)}\rp n^b&=  \pounds_n\lp  \omega_a-\dfrac{1}{2}\sone_a\rp-\nablao_{a}\Q  + \lp\omega_a - \sone_a\rp\trP\bU.
\end{align}
Any null (metric) hypersurface data is subject to a built-in gauge freedom that accounts for the fact that, at the embedded level, the choice of rigging is non-unique \cite{mars2013constraint, mars2020hypersurface}. The precise definition works as follows.\  
Let $\hypdata$ be null hypersurface data, $z\in\mathcal{F}^{\star}(\mathcal{N})$ a no-where zero function and $V$ a vector field on $\N$. We call $z,V$ \textbf{gauge parameters}, and define the \textbf{gauge-transformed data} by
\begin{align}
	\label{gaugegamma&ell2} \hspace{-0.6cm}\mathcal{G}_{\lp z,V\rp}(\gamma)& \defi  \gamma ,\hspace{0.42cm}\mathcal{G}_{\lp z,V\rp}\lp\ellc\rp \defi z\lp\ellc+ \gamma \lp V,\cdot\rp\rp,\hspace{0.42cm}\mathcal{G}_{\lp z,V\rp}\big(\ell^{(2)}\big) \defi z^2\big( \ell^{(2)}+2\ellc\lp V\rp+ \gamma \lp V,V\rp\big),\\
	\label{gaugeY}\hspace{-0.6cm}\mathcal{G}_{\lp z,V\rp}(\bY)  & \defi z\bY+ \big( \ellc+\gamma(V,\cdot)\big)\otimes_s dz+\frac{z}{2}\lieo_{V} \gamma. 
\end{align}
The set $\{\N,\G_{(z,V)}(\gamma),\G_{(z,V)}(\ellc),\G_{(z,V)}(\elltwo),\G_{(z,V)}(\bY)\}$ 
is also null hypersurface data. Moreover, if $\hypdata$ is $(\phi,\rig)$-embedded in $(\M,g)$, then the gauge-transformed data is also embedded in $(\M,g)$ with same embedding but rigging $\G_{(z,V)}(\rig)\defi z\lp \rig+\phi_{\star}V\rp$.  
The transformations \eqref{gaugegamma&ell2}-\eqref{gaugeY} induce the following behaviour of $n$, $\bU$, $\bsone$, $\bomega$, $\nablao$ and 
$\ovnabla$ \cite{mars2020hypersurface,mars2023zcovariant,manzano2023constraint}:

\vspace{-0.6cm}

\noindent
\begin{minipage}[t]{0.5\textwidth}
\begin{align}
\label{gaugen}\G_{(z,V)}(n)=z^{-1}{n}{},
\end{align}
\end{minipage}
\hfill
\hfill
\begin{minipage}[t]{0.5\textwidth}
\begin{align}
\label{Uprime} \G_{(z,V)}(\bU)=z^{-1}{\bU}{},
\end{align}
\end{minipage}

\begin{minipage}[t]{0.62\textwidth}
\begin{align}
\label{gauge:sone}\hspace{-0.15cm}\G_{(z,V)}\big(\bsone\big)= \bsone+\frac{1}{2}\pounds_n\lp \gamma(V,\cdot)\rp+\frac{n(z)}{2z}\lp\ellc+\gamma(V,\cdot)\rp-\frac{dz}{2z},
\end{align}
\end{minipage}
\hfill
\begin{minipage}[t]{0.38\textwidth}
\begin{align}
\label{gauge:omega} \G_{(z,V)}(\bomega)=\bomega-\frac{dz}{z}+\bU(V,\cdot),
\end{align}
\end{minipage}

\noindent
\begin{minipage}[t]{0.66\textwidth}
\begin{align}
\label{gauge:nablao}
\G_{(z,V)}\big(\nablao\big)-\nablao=\frac{1}{2z}\lp  V\otimes \pounds_{zn}\gamma + n\otimes \lp \pounds_{zV}\gamma+2\ellc\otimes_s dz\rp \rp,
\end{align}
\end{minipage}
\hfill
\hfill
\begin{minipage}[t]{0.34\textwidth}
\begin{align}
\label{gauge:ovnabla}\G_{(z,V)}\big(\ovnabla\big)-\ovnabla= V\otimes \bU.
\end{align}
\end{minipage}

An interesting case occurs when a null hypersurface data 
is equipped with a \textbf{gauge-invariant vector field} $\ovkil$ along the degenerate direction of $\gamma$ (i.e.\ $\ovkil\in\Rad\gamma$). The gauge-invariance ensures that $\ovkil$ is a property of the hypersurface itself, and enables to describe mathematically relevant situations---e.g.\ horizons \cite{manzano2023field,mars2024transverseI,mars2024transverseII,mars2025KID}---abstractly. In particular, letting $\alpha\in\Fcal(\N)$ be the proportionality function between $\ovkil$ and $n$ (i.e.\ $\ovkil= \alpha n$), one can define the \textbf{surface gravity} $\kappa$ of $\ovkil$ as  \cite{manzano2023field}
\begin{equation}
\label{defkappaonN}\kappa\defi n(\alpha)+\alpha\Q.
\end{equation}
This quantity  is gauge-invariant. It is also well-defined and smooth everywhere on $\H$ (even at the zeroes of $\ovkil$), and in the embedded case it coincides with the standard notion of surface gravity, namely  
$\nabla_{\phi_{\star}\ovkil}\, \phi_{\star}\ovkil=\kappa \, \phi_{\star}\ovkil$ on $\{p\in\phi(\N)\spc\vert\spc\phi_{\star}\ovkil\vert_p\neq0\}$. 

\section{The Lie derivative of a connection}\label{sec:tensors:sigma}

Given a manifold endowed with a vector field $Z$ and a connection $D$, one can define the tensor ``\textbf{Lie derivative of $D$ along $Z$}", which we denote by $\SigmaZ\defi \pounds_{Z}D$. Tensor fields $\SigmaZ$ have been widely studied (e.g.\ in \cite{yano1957lie}, to which we refer for further details), albeit they are not so commonly used in the literature on hypersurface geometry. They, however, play a central role when the vector field $Z$ is privileged in some sense (e.g.\ when it defines a horizon), as they encode geometric properties of $Z$ and, at the same time, properties of the curvature tensor of $D$ (see e.g.\ equation \eqref{Sigma:Curvature} below). 

In the works  \cite{manzano2023field,mars2024transverseI,mars2024transverseII,mars2025KID}, various kinds of tensors ``Lie derivative of a connection" 
have been defined and analyzed, mostly for connections defined on the ambient space. In this paper, however, we take a slightly different approach and focus on tensors $\SigmaZ$ defined at the abstract level, i.e.\ for non-embedded hypersurface data. More specifically, 
we will introduce the tensor fields $\sigo\defi \pounds_n\nablao$ and $\ovsigmakil\defi \pounds_{\ovkil}\ovnabla$, where as before $\ovkil$ is a gauge-invariant vector field proportional to $n$. 
We will obtain their explicit expressions and analyze some of their properties. As we shall see, $\ovsigmakil$ 
plays an essential role in the study of abstract horizons. For instance, it is key for the derivation of the so-called generalized near-horizon equation (see Section \ref{sec:Cov:ME:General:Hyp}), an identity that generalizes the well-known near-horizon equation of isolated horizons \cite{ashtekar2000generic,  ashtekar2000isolated,ashtekar2002geometry, krishnan2002isolated,gourgoulhon20063+, jaramillo2009isolated} as well as the so-called master equation of multiple Killing horizons \cite{mars2018multiple}, to generic null hypersurfaces with any topology and with a possibly non-empty set of zeroes of $\ovkil$.

Consider a smooth manifold $\cv$ endowed with an affine connection $D$ and
a vector field $Z$. The $1$-contravariant, $2$-covariant tensor field $\SigmaZ \defi \pounds_{Z}D$ is defined by \cite{yano1957lie}
\begin{align}
\SigmaZ(X,W) \defi  \pounds_{Z}D_XW-D_X\pounds_{Z}W-D_{\pounds_ZX}W,\qquad \forall X,W\in\Gamma(T\cv).
\label{defderD}
\end{align}
This tensor is symmetric when $D$ is torsion-free, which we shall assume from now on. The tensor $\SigmaZ$ is related to the commutator of Lie and covariant derivatives by \cite{yano1957lie}
\begin{align}
\pounds_Z D_{\alpha} T_{\beta_1 \cdots \beta_\mathfrak{p}} = D_{\alpha} \pounds_Z T_{\beta_1 \cdots \beta_\mathfrak{p}} - \sum_{\mathfrak{i}=1}^\mathfrak{p} \SigmaZ^{\mu}{} _{\alpha \beta_\mathfrak{i}} T_{\beta_1 \cdots \beta_{\mathfrak{i}-1} \mu \beta_{\mathfrak{i}+1} \cdots \beta_\mathfrak{p}}. \label{commutation}
\end{align} 
When $T$ is a symmetric $(0,2)$-tensor field, performing a standard cyclic permutation of indices in equation \eqref{commutation} yields 
\begin{equation}
\label{Sigma:times:Sym:tensor}T_{\mu\sigma}\SigmaZ^{\sigma}{} _{\alpha \beta}  =\frac{1}{2}\lp \mathcal{S}_{\alpha\beta\mu}+\mathcal{S}_{\beta\mu\alpha} -\mathcal{S}_{\mu\alpha\beta}\rp, \quad\textup{where}\quad \mathcal{S}_{\alpha\beta\mu}\defi D_{\alpha} \pounds_Z T_{\beta\mu}-\pounds_Z D_{\alpha} T_{\beta\mu}. 
\end{equation}
The following two identities link 
the tensor $\SigmaZ$ and the curvature tensor ${R^{D}}\hspace{0.03cm}{}^{\mu}{}_{\beta\nu\alpha}$ of $D$  \cite{yano1957lie}:
\begin{align}
\label{Sigma:Curvature} \SigmaZ^{\mu}{}_{\alpha\beta}&=D_{\alpha} D_{\beta}Z^{\mu}+{R^{D}}\hspace{0.03cm}{}^{\mu}{}_{\beta\nu\alpha}Z^{\nu},\\
\label{Lie:Sigma:and:Curvature} \pounds_Z {R^D}{}^{\rho}{}_{\mu\alpha \beta}&=D_{\alpha}\SigmaZ^{\rho}{}_{\beta\mu}-D_{\beta}\SigmaZ^{\rho}{}_{\alpha\mu}.
\end{align}
For later use, we now provide the relation between the tensors $\SigmaZ$, $\mS[\alpha Z]$ for any scalar function $\alpha$.
\begin{lemma}\label{lem:sig:fZ:sigZ}
Let $\cv$ be a manifold, $Z$ a vector field, $D$ a torsion-free connection, and $\alpha$ a scalar function. For any $X,W\in\Gamma(T\cv)$, the vector fields $\mS[\alpha Z](X,W)$ and $\SigmaZ(X,W)$ are related by
\begin{align}
\label{sig:fZ:sigZ}\mS[\alpha Z](X,W)= \alpha\SigmaZ(X,W)+ \textup{\textbf{Hess}}^D(\alpha)(X,W) Z
+X(\alpha)D_WZ
+W(\alpha)D_X Z.
\end{align}
\end{lemma}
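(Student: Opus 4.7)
The plan is to prove the identity by a direct expansion of both sides, starting from the definition \eqref{defderD} applied to the vector field $\alpha Z$. The key algebraic tool will be the elementary identity
\begin{equation}
\pounds_{\alpha Z} Y = \alpha\,\pounds_{Z} Y - Y(\alpha)\, Z, \qquad \forall\, Y\in\Gamma(T\cv),\label{eq:proplem1}
\end{equation}
which follows immediately from the Leibniz rule for the Lie bracket. Since $D$ is assumed torsion-free, one also has $D_X Y - D_Y X = [X,Y]$.

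First I would apply the definition to get
\begin{equation*}
\mS[\alpha Z](X,W) = \pounds_{\alpha Z} D_X W - D_X \pounds_{\alpha Z} W - D_{\pounds_{\alpha Z} X} W,
\end{equation*}
and then rewrite each of the three summands by using \eqref{eq:proplem1} on every occurrence of a Lie derivative along $\alpha Z$. In the middle term one also needs the Leibniz rule for $D$ in the form $D_X(\alpha\,\pounds_Z W - W(\alpha) Z) = X(\alpha)\pounds_Z W + \alpha D_X \pounds_Z W - X(W(\alpha))\,Z - W(\alpha)\, D_X Z$. In the third term the expansion is purely linear in the argument. Collecting the terms proportional to $\alpha$ yields precisely $\alpha\bigl(\pounds_Z D_X W - D_X \pounds_Z W - D_{\pounds_Z X} W\bigr) = \alpha\,\SigmaZ(X,W)$.

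The remaining pieces split naturally into three groups. One group involves the factor $X(\alpha)$ and equals $X(\alpha)\bigl(D_Z W - \pounds_Z W\bigr)$; using the torsion-free condition this reduces to $X(\alpha)\, D_W Z$. A second group involves the factor $W(\alpha)$ and simply gives $W(\alpha)\, D_X Z$ directly. The third group collects the terms proportional to $Z$, namely $\bigl[X(W(\alpha)) - (D_X W)(\alpha)\bigr] Z$, which by definition of the Hessian operator equals $\textbf{Hess}^D(\alpha)(X,W)\,Z$. Putting the four contributions together produces \eqref{sig:fZ:sigZ}.

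I expect no real obstacle: the only subtle point is keeping track of where the torsion-free assumption is used (in converting $D_Z W - \pounds_Z W$ into $D_W Z$) and identifying the Hessian in its operator convention. Since the identity must hold pointwise and is tensorial in $X$ and $W$, performing the computation for arbitrary vector fields $X,W$ (as above) is sufficient; no extension or localization argument is needed.
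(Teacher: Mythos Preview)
Your proposal is correct and follows essentially the same approach as the paper: both expand $\mS[\alpha Z](X,W)$ directly from the definition \eqref{defderD} using the identity $\pounds_{\alpha Z}Y=\alpha\pounds_Z Y-Y(\alpha)Z$, then invoke the torsion-free condition to convert $D_ZW-\pounds_ZW$ into $D_WZ$ and recognize the Hessian term. The organization into three groups is exactly the structure of the paper's computation.
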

\begin{proof} 
In the computation, we shall use the identity $\pounds_{\alpha Z}W=\alpha\pounds_{Z}W-W(\alpha)Z$ several times. From the definition \eqref{defderD}, it follows
\begin{align*}
\mS[\alpha Z](X,W)
=&\spc \alpha\pounds_{Z}D_XW-(D_XW)(\alpha)Z - D_X\big( \alpha\pounds_{Z}W-W(\alpha)Z\big) -D_{\alpha\pounds_{Z}X-X(\alpha)Z}W\\
=&\spc \alpha\pounds_{Z}D_XW
-(D_XW)(\alpha)Z
-X(\alpha)\pounds_{Z}W
-\alpha D_X\pounds_{Z}W\\
&+X\big(W(\alpha)\big)Z
+W(\alpha)D_X Z
-\alpha D_{\pounds_{Z}X}W
+X(\alpha)D_{Z}W,
\end{align*}
which upon using \eqref{defderD}, the fact that $D$ is torsion-free (so that 
$-\pounds_Z W + D_Z W = D_W Z$) and the definition of the hessian operator of $D$, yields \eqref{sig:fZ:sigZ}.
\end{proof}
Now, any null metric hypersurface data $\metdata$ gives rise to a vector field $n$ that is privileged in the sense that its direction (but not its scale) remains unchanged by gauge transformations (cf.\ \eqref{gaugen}). It therefore makes sense to introduce the tensor $\sigo\defi \pounds_n\nablao$, which will in fact be useful later for the calculation of $\ovsigmakil$. Observe that we do not reflect the $n$-dependence in the notation for $\sigo$. The explicit expression of $\sigo$ in terms of metric data is obtained next. 
\begin{lemma}
\label{ExpreSigo}
Let $\metdata$ be null metric hypersurface data and define $\sigo\defi \pounds_{n}\nablao$. Then, 
\begin{align}
\label{sigo} 
{\sigo}{}^d{}_{ab} = &\spc n^d \left ( 2 \nablao_{(a} \sone_{b)} + n (\elltwo) \U_{ab}  \right ) + P^{dc} \left (\nablao_a \U_{bc} + \nablao_b \U_{ca} - \nablao_c \U_{ab} + 2 \sone_c \U_{ab}\right ). 
\end{align}
In particular, the contractions $\ell_f  {\sigo}{}^f{}_{ab}$ and $\gamma_{cf} {\sigo}{}^f{}_{ab}$ read
\begin{align}
\label{ell:sigo}\ell_f  {\sigo}{}^f{}_{ab} &=  2\nablao_{(a} \sone_{b)} + n (\elltwo) \U_{ab} + \elltwo \pounds_n \U_{ab}, \\
\gamma_{cf} {\sigo}{}^f{}_{ab} = &\spc 
\nablao_a \U_{bc} + \nablao_b \U_{ca} - \nablao_c \U_{ab} + \ell_c 
 \pounds_n \U_{ab}+ 2 \sone_c  \U_{ab} .
\label{gamsigo}
\end{align}
\end{lemma}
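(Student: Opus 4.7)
The plan is to derive the three formulas in order \eqref{gamsigo} $\to$ \eqref{ell:sigo} $\to$ \eqref{sigo}, reconstructing the full tensor from its two independent contractions.

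First I would compute $\gamma_{cf}{\sigo}{}^f{}_{ab}$ by applying identity \eqref{Sigma:times:Sym:tensor} to the symmetric $(0,2)$-tensor $T=\gamma$ with $D=\nablao$ and $Z=n$. The two ingredients needed are $\nablao_a\gamma_{bc}=-\ell_b\U_{ac}-\ell_c\U_{ab}$ (from \eqref{nablaogamma}) and $\pounds_n\gamma_{ab}=2\U_{ab}$ (from \eqref{threetensors}), together with $\pounds_n\ell_a=2\sone_a$ (from \eqref{soneprop}). Substituting, the quantity $\mathcal{S}_{\alpha\beta\mu}=\nablao_\alpha\pounds_n\gamma_{\beta\mu}-\pounds_n\nablao_\alpha\gamma_{\beta\mu}$ becomes $2\nablao_\alpha\U_{\beta\mu}+2\sone_\beta\U_{\alpha\mu}+\ell_\beta\pounds_n\U_{\alpha\mu}+2\sone_\mu\U_{\alpha\beta}+\ell_\mu\pounds_n\U_{\alpha\beta}$. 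Forming the cyclic combination $\tfrac12(\mathcal{S}_{\alpha\beta\mu}+\mathcal{S}_{\beta\mu\alpha}-\mathcal{S}_{\mu\alpha\beta})$ makes the symmetry of $\bU$ kill the antisymmetric $\sone$- and $\ell\pounds_n\U$-contributions, yielding exactly \eqref{gamsigo}.

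Next I would obtain $\ell_f{\sigo}{}^f{}_{ab}$ by specializing \eqref{commutation} to the covector $T=\ellc$, which gives $\ell_\mu\sigo^\mu{}_{ab}=\nablao_a\pounds_n\ell_b-\pounds_n\nablao_a\ell_b$. Using $\pounds_n\ellc=2\bsone$ for the first term and $\nablao_a\ell_b=\F_{ab}-\elltwo\U_{ab}$ (equation \eqref{nablaoll}) for the second, I only need to compute $\pounds_n\F_{ab}$; this follows painlessly from $\bF=\tfrac12 d\ellc$ and the commutation $\pounds_n d=d\pounds_n$, giving $\pounds_n\F_{ab}=(d\bsone)_{ab}=2\nablao_{[a}\sone_{b]}$. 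Combining the two pieces the antisymmetric parts add to produce $2\nablao_{(a}\sone_{b)}$, yielding \eqref{ell:sigo}.

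Finally, \eqref{sigo} is obtained by decomposing the identity $\delta^d_c=P^{dc'}\gamma_{c'c}+n^d\ell_c$ (from \eqref{prod4}) applied to $\sigo^c{}_{ab}$:
\begin{equation*}
\sigo^d{}_{ab}=P^{dc}\bigl(\gamma_{cf}\sigo^f{}_{ab}\bigr)+n^d\bigl(\ell_f\sigo^f{}_{ab}\bigr).
\end{equation*}
Substituting \eqref{gamsigo} and \eqref{ell:sigo}, the only $\pounds_n\U$-contributions are $P^{dc}\ell_c\pounds_n\U_{ab}$ and $\elltwo n^d\pounds_n\U_{ab}$, which cancel by virtue of $P^{dc}\ell_c=-\elltwo n^d$ (equation \eqref{prod3}). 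What remains is exactly \eqref{sigo}.

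The whole argument is essentially bookkeeping: the step most prone to sign errors is the cyclic combination in step one, so I would double-check that the antisymmetric-in-$(\beta,\mu)$ contributions (the $\sone$- and $\ell\pounds_n\U$-pieces coming from $\pounds_n\nablao\gamma$) genuinely drop out under the symmetrizer defined by \eqref{Sigma:times:Sym:tensor}. The pleasing final cancellation of the $\elltwo\pounds_n\U_{ab}$ terms via \eqref{prod3} is a consistency check that the three formulas fit together correctly.
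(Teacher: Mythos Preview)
Your proposal is correct and follows essentially the same approach as the paper's proof: both compute the contractions $\gamma_{cf}\sigo^f{}_{ab}$ and $\ell_f\sigo^f{}_{ab}$ via \eqref{Sigma:times:Sym:tensor} and \eqref{commutation} respectively, and then reconstruct the full tensor using the decomposition $\delta^d_f=P^{dc}\gamma_{cf}+n^d\ell_f$, with the $\elltwo\pounds_n\U$ terms cancelling by \eqref{prod3}. The only difference is the order in which you treat the two contractions, and that the paper abstracts the cyclic cancellation into the observation that any tensor of the form $t_{abc}=2u_{(b}S_{c)a}$ with $S$ symmetric satisfies $t_{abc}+t_{bca}-t_{cab}=2u_cS_{ab}$, whereas you verify it by direct expansion.
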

\begin{proof}
Particularizing \eqref{commutation} (resp.\ \eqref{Sigma:times:Sym:tensor}) for $Z=n$, $D=\nablao$ and  $T=\ellc$ (resp.\ $T=\gamma$)  yields
\begin{align}
\ell_f  {\sigo}{}^f{}_{ab} &= \nablao_a \pounds_n \ell_b - \pounds_n  \nablao_a \ell_b \defi  Q_{ab},\label{ellsigo}\\ 
\label{gammasigo}\gamma_{c f}{\sigo}{}^{f}{} _{ab}  &=\frac{1}{2}\lp \widetilde{\mathcal{Q}}_{abc}+\widetilde{\mathcal{Q}}_{bca} -\widetilde{\mathcal{Q}}_{cab}\rp,
\end{align}
where $\widetilde{\mathcal{Q}}_{abc}\defi \nablao_a \pounds_n \gamma_{bc} - \pounds_n \nablao_a \gamma_{bc}$. 
It only remains to determine $Q_{ab}$,  $\widetilde{\mathcal{Q}}_{abc}$ to obtain \eqref{sigo}-\eqref{gamsigo}.   
For $Q_{ab}$ we use \eqref{soneprop}, \eqref{nablaoll} and 
$(\pounds_n \bF)_{ab}  
= \frac{1}{2} (d \pounds_n \ellc)_{ab} 
= (d \bm{\sone})_{ab}= \nablao_a \sone_b - \nablao_b \sone_a$: 
\begin{align}
Q_{ab} &= \nablao_a \pounds_n \ell_b - \pounds_n \nablao_a \ell_b
= 2 \nablao_a \sone_b  - \pounds_n \left ( \F_{ab} - \elltwo \U_{ab} \right ) = 2\nablao_{(a} \sone_{b)} + n (\elltwo) \U_{ab} + \elltwo \pounds_n \U_{ab}.\label{expreQ}
\end{align}
This proves \eqref{ell:sigo}. For the second we combine $\pounds_n\gamma=2\bU$ (cf.\ \eqref{threetensors}) with  \eqref{soneprop} to get
\begin{align}
\nn \widetilde{\mathcal{Q}}_{abc} =&\spc 2\nablao_a\U_{bc}  
  + \pounds_n \left ( \ell_b \U_{ac} + \ell_c \U_{ab} \right )  = 2 \nablao_a \U_{bc}+ 2 \ell_{(b} \pounds_n \U_{c)a} + 2 (\pounds_n \ell_{(b}) \U_{c)a} \\
\stackbin{\eqref{nablaogamma}}=&\spc 2 \nablao_a \U_{bc} + 2 \ell_{(b}  \pounds_n \U_{c)a} + 4 \sone_{(b}\U_{c)a}. 
\label{expret}
\end{align}
Now, any  $(0,3)$-tensor of the form $t_{abc} =  2u_{(b} S_{c)a}$ with $S_{ca}$ symmetric  satisfies $t_{abc} + t_{bca} - t_{cab} = 2 u_c S_{ab}$. 
Inserting \eqref{expret} into 
\eqref{gammasigo} and using this property gives \eqref{gamsigo}.
To conclude the proof we use 
${\sigo}{}^{d}{}_{ab} = \delta^d_f {\sigo}{}^{f}{}_{ab} 
\stackbin{\eqref{prod4}}=\left ( P^{dc} \gamma_{cf} + n^d \ell_f \right ) {\sigo}{}^{f}{}_{ab} = P^{dc}\gamma_{cf} {\sigo}{}^{f}{}_{ab} + n^d Q_{ab}$. Replacing here \eqref{expreQ} and \eqref{gamsigo} yields \eqref{sigo} after using $P^{dc}\ell_c \stackbin{\eqref{prod3}}=  -\elltwo n^d$.
\end{proof}
As mentioned before, the case when a null hypersurface data $\hypdata$ 
admits a gauge-invariant vector field $\ovkil=\alpha n$ with $\alpha\in\Fcal(\N)$ is particularly relevant, as it allows one to analyze the geometry of horizons. In these circumstances, the vector field $\ovkil$ is privileged in the sense that it is a property of the abstract null hypersurface $\N$. Therefore, it is to be expected that a number of fundamental geometric properties of $\N$ are encoded in tensor fields of the form $\pounds_{\ovkil}D$ where $D$ is an abstractly defined connection. The fact that most horizons are totally geodesic null hypersurfaces (hence characterized by null data with $\bU=0$), together with the relation \eqref{nablaXYnablao} between the hypersurface connection $\ovnabla$ and the Levi-Civita derivative of $g$ in the embedded case, strongly suggests that the appropriate object for addressing horizon geometry is the tensor  $\ovsigmakil\defi \pounds_{\ovkil}\ovnabla$. Accordingly, we devote the remainder of this section to deriving its explicit expression, and obtaining its gauge behaviour. The analysis of $\ovsigmakil$ will allow us to define a tensor field $\calP^{\ovkil}$ (see Section \ref{sec:hor:data}) whose properties are useful to characterize horizons at an abstract level. We start by deriving the explicit form of $\ovsigmakil$. 
\begin{lemma}
Let $\hypdata$ be null hypersurface data and $\ovkil=\alpha n$, $\alpha\in\Fcal(\N)$ a gauge-invariant vector field. Then, the tensor field $\ovsigmakil\defi \pounds_{\ovkil}\ovnabla$ takes the form
\begin{align}
\label{ovsigmakil}
\ovsigmakil{}^c{}_{ab}=\alpha{\sigo}{}^c{}_{ab}+2P^{cd}(\nablao_{(a}\alpha)\U_{b)d}+\Big(\nablao_a\nablao_b\alpha+2(\nablao_{(a}\alpha)\omega_{b)}+n(\alpha)\Y_{ab}-\alpha(\pounds_{n}\bY)_{ab}\Big)n^c.
\end{align}
In particular, the contractions $\ell_c\ovsigmakil{}^c_{ab}$ and $\gamma_{fc}\ovsigmakil{}^c_{ab}$ read
\begin{align}
\label{ellc:ovsigmakil}\ell_c\ovsigmakil{}^c{}_{ab}=&\spc\nablao_a\nablao_b\alpha+2(\nablao_{(a}\alpha)\omega_{b)}+n(\alpha)\Y_{ab}+\alpha\lp 2\nablao_{(a}\sone_{b)}-\pounds_{n}\Y_{ab}+\pounds_n(\elltwo\U_{ab})\rp,\\
\label{gamma:ovsigmakil} \gamma_{fc}\ovsigmakil{}^c{}_{ab}
=&\spc \alpha \Big(\nablao_{a}\U_{bf}+\nablao_b\U_{af}-\nablao_f\U_{ab}+\ell_f\pounds_n\U_{ab}+2\sone_f\U_{ab}\Big)+2 (\nablao_{(a}\alpha)\U_{b)f}.
\end{align}
\end{lemma}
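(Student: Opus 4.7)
The strategy is to apply Lemma \ref{lem:sig:fZ:sigZ} with $Z=n$, $D=\ovnabla$ and scalar $\alpha$, which rewrites $\ovsigmakil=\mS[\alpha n]$ (computed with respect to $\ovnabla$) as
\begin{align*}
\ovsigmakil{}^c{}_{ab}=\alpha\, \mS_n{}^c{}_{ab}+\textup{\textbf{Hess}}^{\ovnabla}(\alpha)_{ab}\,n^c+(\partial_a\alpha)\,\ovnabla_b n^c+(\partial_b\alpha)\,\ovnabla_a n^c,\qquad \mS_n\defi \pounds_n\ovnabla.
\end{align*}
The task is therefore reduced to computing the three ingredients $\mS_n$, $\textup{\textbf{Hess}}^{\ovnabla}(\alpha)$ and $\ovnabla n$ in terms of abstract metric/hypersurface data.

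First, since $\ovnabla-\nablao$ is the $(1,2)$-tensor with components $-\Y_{ab}n^c$ (cf.\ \eqref{def:ovnabla}), and Lie derivatives of torsion-free connections differ by the Lie derivative of the difference tensor, I would obtain $\mS_n{}^c{}_{ab}=\sigo{}^c{}_{ab}-(\pounds_n \Y_{ab})n^c$ using $\pounds_n n=0$, where $\sigo$ is provided by Lemma \ref{ExpreSigo}. Next, for the Hessian, I would use that $\ovnabla_a\theta_b=\nablao_a\theta_b+\Y_{ab}\theta(n)$ for any covector $\theta$ (a direct consequence of \eqref{def:ovnabla} and the fact that the Christoffel-type difference tensor is $-\Y_{ab}n^c$), and apply it to $\theta=d\alpha$ to get $\ovnabla_a\ovnabla_b\alpha=\nablao_a\nablao_b\alpha+\Y_{ab}\,n(\alpha)$. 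Finally, from \eqref{def:ovnabla} I would compute $\ovnabla_a n^c=\nablao_an^c-\Yn_a n^c$, and insert \eqref{nablao:n} to conclude $\ovnabla_a n^c=P^{cf}\U_{af}+\omega_a n^c$, where $\omega_a=\sone_a-\Yn_a$ as per \eqref{defY(n,.)andQ}.

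Substituting these three expressions into the formula from Lemma \ref{lem:sig:fZ:sigZ} and grouping the terms proportional to $n^c$ yields \eqref{ovsigmakil}. For the contraction \eqref{ellc:ovsigmakil} I would use $\ell_c n^c=1$, $\ell_c P^{cf}=-\elltwo n^f$ from \eqref{prod3}, $n^f\U_{bf}=0$ from \eqref{Un}, and $\ell_c\sigo{}^c{}_{ab}$ as given by \eqref{ell:sigo}; the contraction of the middle summand of \eqref{ovsigmakil} then vanishes and one recognises $n(\elltwo)\U_{ab}+\elltwo\pounds_n\U_{ab}=\pounds_n(\elltwo\U_{ab})$. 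For \eqref{gamma:ovsigmakil} I would use $\gamma_{fc}n^c=0$ together with the identity $\gamma_{fc}P^{cd}=\delta^d_f-n^d\ell_f$ (immediate from \eqref{prod4}), so that $n^d\U_{b)d}=0$ eliminates the $n^d\ell_f$ piece, and then substitute \eqref{gamsigo}.

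I do not foresee a serious obstacle: the argument is a careful bookkeeping of three ingredients, each obtained by a one-line calculation from earlier results. The only mild subtlety is ensuring the correct interpretation of $\pounds_n$ acting on the difference tensor $\ovnabla-\nablao$ (which is why one needs $\pounds_n n=0$, automatic here), and the correct sign conventions when passing between $\nablao$ and $\ovnabla$ through the tensor $-\Y_{ab}n^c$.
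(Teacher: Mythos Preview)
Your proposal is correct and uses the same two ingredients as the paper—Lemma \ref{lem:sig:fZ:sigZ} and the relation $\ovnabla-\nablao=-n\otimes\bY$—but in the opposite order. The paper first Lie-differentiates the difference tensor along $\ovkil$ to write $\ovsigmakil=\sigo[\ovkil]+n\otimes(n(\alpha)\bY-\pounds_{\ovkil}\bY)$, and then applies Lemma \ref{lem:sig:fZ:sigZ} with $D=\nablao$ to reduce $\sigo[\ovkil]$ to $\alpha\sigo$ plus $\nablao$-Hessian and $\nablao n$ terms; the $\bY$-contributions enter through $\pounds_{\ovkil}\bY$. You instead apply Lemma \ref{lem:sig:fZ:sigZ} with $D=\ovnabla$ first, so the $\bY$-contributions appear through the $\ovnabla$-Hessian, $\ovnabla n$, and $\pounds_n\ovnabla$. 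Both routes are equally short; yours has the mild advantage that $\pounds_n n=0$ makes the Lie derivative of the difference tensor one term shorter than in the paper's $\pounds_{\ovkil}$ computation.
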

\begin{proof} 
Definition \eqref{def:ovnabla} establishes the following relation between $\ovsigmakil$ and ${\sigo}[\ovkil]\defi\pounds_{\ovkil}\nablao$ (not to be confused with ${\sigo}\defi \pounds_n\nablao$):
\begin{align*}
\ovsigmakil&= {\sigo}[\ovkil]
+n \otimes \Big(n(\alpha)\bY-\pounds_{\ovkil}\bY
\Big)  
={\sigo}[\ovkil]+n \otimes \Big(n(\alpha)\bY-\alpha\pounds_{n}\bY-2d\alpha\otimes_s\bs{\Yn}
\Big) .
\end{align*}
Applying now Lemma \ref{lem:sig:fZ:sigZ} in the r.h.s.\ we conclude
\begin{align*} 
\ovsigmakil{}^c{}_{ab}=&\spc\alpha{\sigo}{}^c{}_{ab}
+2(\nablao_{(a}\alpha)\nablao_{b)}n^c
+\Big(\nablao_a\nablao_b\alpha+n(\alpha)\Y_{ab}-\alpha(\pounds_{n}\bY)_{ab}
-2(\nablao_{(a}\alpha)\Yn_{b)}
\Big) \ n^c,
\end{align*}
which becomes \eqref{ovsigmakil} upon using \eqref{nablao:n} and $\bomega\defi \bsone-\bs{\Yn}$. 
Equation \eqref{ellc:ovsigmakil} follows \eqref{ell:sigo}  
after using 
$\ell_cP^{cd}\U_{bd}\stackbin{\eqref{prod3}}=-\elltwo n^d\U_{bd}\stackbin{\eqref{Un}}=0$  and $\ell_cn^c\stackbin{\eqref{prod2}}=1$. Finally, \eqref{gamma:ovsigmakil} is obtained from \eqref{gamsigo} after noticing that $\gamma(n,\cdot)\stackbin{\eqref{prod1}}=0$ and $\gamma_{fc}P^{cd}\U_{bd}\stackbin{\eqref{prod4}}=(\delta^{d}_{f}-n^{d}\ell_{f})\U_{bd}\stackbin{\eqref{Un}}=\U_{bf}$.
\end{proof}
The gauge transformations of $\ovsigmakil$ and the contraction $\ellc(\ovsigmakil)$ are as follows. 
\begin{lemma}
Let $\hypdata$ be null hypersurface data and $\ovkil=\alpha n$, $\alpha\in\Fcal(\N)$ a gauge-invariant vector field.  
Then,
\begin{align}
\label{gauge:ovsigmakil}\G_{(z,V)}\lp \ovsigmakil\rp=&\spc \ovsigmakil+\Big( \alpha\pounds_{n}V-V(\alpha)n\Big)\otimes\bU+\alpha V\otimes\pounds_{n}\bU,\\
\nn \G_{(z,V)}\lp \ellc \lp\ovsigmakil\rp \rp
=&\spc z\Big( \ellc\big( \ovsigmakil\big)+ \gamma(V,\ovsigmakil)\Big)+z\Big(\alpha\ellc\lp \pounds_nV\rp-V(\alpha)+\alpha\gamma(V,\pounds_nV)\Big)\bU\\
\label{gauge:ellc:ovsigmakil} &+\alpha z\Big(\ellc(V)+\gamma(V,V)\Big)\pounds_{n}\bU.
\end{align}
\end{lemma}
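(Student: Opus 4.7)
The key observation is that $\ovkil$ is by hypothesis gauge-invariant, so the operator $\pounds_{\ovkil}$ is unaffected by the transformation $\G_{(z,V)}$. Combined with the already-established formula \eqref{gauge:ovnabla}, which exhibits $\G_{(z,V)}(\ovnabla) - \ovnabla$ as the honest $(1,2)$-tensor $V \otimes \bU$ (in the convention $(X,W)\mapsto \bU(X,W)V$), the whole problem reduces to a Lie derivative bookkeeping exercise.

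The plan is as follows. First I would record the general fact that if $D$ and $D'$ are two torsion-free affine connections on a manifold and $\mathcal{T} \defi D' - D$ is the associated $(1,2)$-tensor, then for any vector field $Z$ one has $\pounds_Z D' - \pounds_Z D = \pounds_Z \mathcal{T}$, where the right-hand side is the standard tensorial Lie derivative. This is a one-line verification from \eqref{defderD}: the connection terms in $\pounds_Z D'(X,W) - \pounds_Z D(X,W)$ reassemble into $\pounds_Z(\mathcal{T}(X,W)) - \mathcal{T}(\pounds_Z X, W) - \mathcal{T}(X,\pounds_Z W)$. Specializing to $D = \ovnabla$, $D' = \G_{(z,V)}(\ovnabla)$, $\mathcal{T} = V \otimes \bU$ and $Z = \ovkil$ immediately identifies $\G_{(z,V)}(\ovsigmakil) - \ovsigmakil$ with $\pounds_{\ovkil}(V \otimes \bU)$.

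Next I would expand this by Leibniz as $\pounds_{\ovkil}(V \otimes \bU) = (\pounds_{\ovkil}V) \otimes \bU + V \otimes \pounds_{\ovkil}\bU$, and use $\ovkil = \alpha n$. The first factor is $\pounds_{\alpha n} V = \alpha \pounds_n V - V(\alpha) n$. For the second, the standard identity for the Lie derivative of a symmetric $(0,2)$-tensor along a rescaled vector gives $\pounds_{\alpha n}\bU = \alpha \pounds_n \bU + 2\, d\alpha \otimes_s \bU(n,\cdot)$, which collapses to $\pounds_{\alpha n}\bU = \alpha \pounds_n \bU$ thanks to the crucial identity $\bU(n,\cdot) = 0$ from \eqref{Un}. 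Combining the two factors yields exactly \eqref{gauge:ovsigmakil}.

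For \eqref{gauge:ellc:ovsigmakil} I would write $\G_{(z,V)}(\ellc(\ovsigmakil)) = \G_{(z,V)}(\ellc)\bigl(\G_{(z,V)}(\ovsigmakil)\bigr)$, substitute the gauge formula \eqref{gaugegamma&ell2} for $\ellc$ together with \eqref{gauge:ovsigmakil} just obtained, and contract. Evaluating $\ellc$ and $\gamma(V,\cdot)$ on the two summands $(\alpha\pounds_n V - V(\alpha)n)\otimes\bU$ and $\alpha V \otimes \pounds_n \bU$ produces coefficients involving $\ellc(V)$, $\ellc(\pounds_n V)$, $\gamma(V,V)$, $\gamma(V,\pounds_n V)$, and the only simplifications needed are the standard normalizations $\ellc(n) = 1$ and $\gamma(V,n) = 0$ from \eqref{prod1}--\eqref{prod2}, which eliminate the $V(\alpha)\ellc(n)$ and $V(\alpha)\gamma(V,n)$ terms appropriately. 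The entire argument is mechanical; the only conceptually non-trivial step is the identification at the start of paragraph two, after which no real obstacle remains.
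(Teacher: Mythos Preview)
Your proposal is correct and follows essentially the same approach as the paper: use gauge-invariance of $\ovkil$ together with \eqref{gauge:ovnabla} to reduce to computing $\pounds_{\ovkil}(V\otimes\bU)$, expand by Leibniz, and simplify via $\bU(n,\cdot)=0$; then obtain \eqref{gauge:ellc:ovsigmakil} from the gauge transformation of $\ellc$ in \eqref{gaugegamma&ell2}. Your version is slightly more explicit in justifying the step $\pounds_Z D' - \pounds_Z D = \pounds_Z(D'-D)$, which the paper simply writes as $\pounds_{\ovkil}\bigl(\G_{(z,V)}\ovnabla\bigr)=\pounds_{\ovkil}(\ovnabla+V\otimes\bU)$, but the substance is identical.
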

\begin{proof}
Since $\ovkil$ is gauge-invariant, $\G_{(z,V)}\big(\ovsigmakil\big)=\G_{(z,V)}\big(\pounds_{\ovkil}\ovnabla\big)=\pounds_{\ovkil}\big(\G_{(z,V)}\ovnabla\big)\stackbin{\eqref{gauge:ovnabla}}=\pounds_{\ovkil}\big(\ovnabla+V\otimes \bU\big)=\ovsigmakil
+\pounds_{\ovkil}V\otimes \bU+V\otimes \pounds_{\ovkil}\bU$, from where \eqref{gauge:ovsigmakil} follows after using $\pounds_{\ovkil}\bU=\pounds_{\alpha n}\bU\stackbin{\eqref{Un}}=\alpha \pounds_n\bU$. 
Equation \eqref{gauge:ellc:ovsigmakil}is a consequence of the gauge transformation of $\ellc$ (cf.\ \eqref{gaugegamma&ell2}).  
\end{proof}
For totally geodesic null hypersurfaces (i.e.\ when $\bU=0$), the tensor $\ovsigmakil$ is gauge-invariant while its contraction with $\ellc$ exhibits the remarkably simple gauge-behaviour $\G_{(z,V)}\lp \ellc \lp\ovsigmakil\rp \rp
= z\ellc\big( \ovsigmakil\big)$ (because $\gamma(\ovsigmakil,\cdot)=0$, by \eqref{gamma:ovsigmakil}). The restriction $\bU=0$, however, is quite limiting and excludes physically and mathematically relevant classes of horizons, for instance homothetic or conformal Killing horizons. In that case, the gauge properties of $\ovsigmakil$ become considerably more involved, which makes this tensor less suitable as a characterizing object. This motivates the introduction of the so-called  \isotensor $\bcalP^{\ovkil}$, a tensor field closely related to $\ovsigmakil$ but with better gauge-behaviour. This is done in the next section. 

\section{Deformation tensor of an ambient vector: the notion of $\mathcal{K}$-tuple}\label{sec:hor:data}

Typically, horizons embedded in a spacetime $(\M,g)$ are characterized by a privileged vector field $\kil\in\Gamma(T\M)$ that becomes null and tangent to a null hypersurface $\phi(\N)\subset\M$. In this setting, the so-called deformation tensor $\Kkil\defi \pounds_{\kil}g$ plays a fundamental role; in particular   
$\Kkil\vert_{\phi(\N)}$ is proportional to $g$ for homothetic or conformal Killing horizons, and vanishes for Killing horizons. Given the importance of $\Kkil$, it is natural to seek a \textit{purely abstract} notion of horizon data that captures its value at $\phi(\N)$. This leads us to the concept of \textbothdata, 
which we introduce next. 
\begin{definition}\label{def:phdata}
A set $\pmdata$ is called \textbf{\textpmdata} if $(i)$ $\metdata$ is null metric hypersurface data, and $(ii)$ $\alpha,\p\in\Fcal(\N)$,  $\bqone\in\Gamma(T^{\star}\N)$ are two scalar functions and a covector field with  gauge transformations
\begin{align}
\label{gauge:bqone:alpha:p:w}
\G_{(z,V)}(\alpha)= z\alpha,\quad\spc\spc\G_{(z,V)}(\bqone)= z\big(\bqone+2\alpha\bU(V,\cdot)\big),\quad\spc\spc \G_{(z,V)}(\p)=  z^2\big( \p+2\bqone(V)+2\alpha\bU(V,V)\big).
\end{align}  
If a \textpmdata is endowed with an additional symmetric $(0,2)$-tensor $\bY$, we call \textbf{\textphdata} the set $\phdata$.
\end{definition}
Given a \textpmdata, one can define a vector field $\ovkil\in\Rad\gamma$ and a function $\w$ by 
\begin{align}
\label{def:ovkil:and:w}\ovkil\defi \alpha n,\qquad \w\defi \bqone(n).
\end{align}
It then follows from \eqref{Un}, \eqref{gaugen} and \eqref{gauge:bqone:alpha:p:w} that both are gauge-invariant, i.e.\ 
\begin{align}
\G_{(z,V)}(\ovkil)=\ovkil,\qquad \G_{(z,V)}(\w)\defi \w.
\end{align}
Recall that the surface gravity $\kappa$ of $\ovkil$ was defined in \eqref{defkappaonN}, and that it is also gauge-invariant. We now establish a connection between the abstract notion of \textphdata and the deformation tensor of an \textit{ambient} vector field. This is done through the concept of embeddedness of a \textnonedata, which endows the latter with a geometric interpretation.
\begin{definition}\label{def:embedded:pdata}
Let $\pmdata$ be a \textpmdata, and define the vector $\ovkil\defi\alpha n$. Then,  
$\pmdata$ is $\bs{(\phi,\rig)}$\textbf{-embedded} in a Lorentzian manifold $(\M,g)$ if $\metdata$ is $(\phi,\rig)$-embedded in $(\M,g)$ and, in addition, there exists an extension $\kil\in\Gamma(T\M)$ of $\phi_{\star}\ovkil$ such that 
\begin{align}
\label{ext:eta:and:def:tensor}\phi^{\star}\Kkil=2\alpha\bU,\qquad \phi^{\star}\lp\Kkil(\rig,\cdot)\rp=\bqone,\qquad  \phi^{\star}\lp\Kkil(\rig,\rig)\rp=\p,
\end{align}
where $\Kkil\defi \pounds_{\kil}g$ is the deformation tensor of $\kil$. For a \textphdata $\phdata$, embeddedness requires that $\bY=\frac{1}{2}\phi^{\star}\lp\pounds_{\rig}g\rp$ as well.
\end{definition}
\begin{remark}
In the embedded case the quantities $\alpha\bU$, $\bqone$ and $\p$ encode all the components of the deformation tensor $\Kkil$ at $\phi(\N)$. Note also that the function $\w$, defined in \eqref{def:ovkil:and:w}, satisfies $\phi^{\star}\big(\Kkil(\rig,\phi_{\star}n)\big) = \w$.  
We emphasize that  
$\alpha\bU$, $\bqone$ and $\p$ are fully arbitrary and need not behave as if $\phi(\N)$ was an actual horizon.
\end{remark}
Naturally, two questions arise at this point. The first one is whether the gauge-behaviours \eqref{gauge:bqone:alpha:p:w} are consistent with changes in the rigging in the embedded case. 
{The second is under which conditions an embedded null metric data can be completed with a tuple $\{\alpha,\p,\bqone\}$ satisfying \eqref{gauge:bqone:alpha:p:w} to obtain an embedded \textpmdata. 
These issues are addressed in the following remark and proposition.}
\begin{remark}
Let $\pmdata$ be a \textpmdata, $(\phi,\rig)$-embedded in $(\M,g)$. Let $\kil$ be the extension of $\phi_{\star}\ovkil$ off $\phi(\N)$ according to Definition \ref{def:embedded:pdata}. 
The tensor $\Kkil$ is independent of the rigging, hence for \eqref{gauge:bqone:alpha:p:w} to be consistent with a change of rigging $\rig\rightarrow\rig'\defi z(\rig+\phi_{\star}V)$, it must hold
\begin{align}
\phi^{\star}\Kkil=\G_{(z,V)}\big(2\alpha\bU\big),\qquad \phi^{\star}\lp\Kkil(\rig',\cdot)\rp=\G_{(z,V)}\big(\bqone\big),\qquad  \phi^{\star}\lp\Kkil(\rig',\rig')\rp=\G_{(z,V)}\big(\p\big).
\end{align}
The first equation follows from $\G_{(z,V)}(\alpha)=z\alpha$ and \eqref{Uprime}, while for the second and third one finds
\begin{align*}
\phi^{\star}\lp\Kkil(\rig',\cdot)\rp=&\spc z\phi^{\star}\big(\Kkil(\rig+\phi_{\star}V,\cdot)\big)=z\big(\bqone+2\alpha\bU(V,\cdot)\big)\stackbin{\eqref{gauge:bqone:alpha:p:w}}=\G_{(z,V)}(\bqone),\\
\phi^{\star}\lp\Kkil(\rig',\rig')\rp=&\spc z^2\phi^{\star}\lp\Kkil(\rig+\phi_{\star}V,\rig+\phi_{\star}V)\rp= z^2\big( \p+2\bqone(V)+2\alpha\bU(V,V)\big)\stackbin{\eqref{gauge:bqone:alpha:p:w}}=\G_{(z,V)}(\p).
\end{align*}
We conclude that \eqref{gauge:bqone:alpha:p:w} is indeed compatible with transformations of rigging at the embedded level.
\end{remark}
\begin{proposition}\label{prop:extension:eta}
Let $\metdata$ be null metric hypersurface data $(\phi,\rig)$-embedded in a Lorentzian manifold $(\M,g)$. Consider any two functions $\alpha,\p\in\Fcal(\N)$ and any covector $\bqone\in\Gamma(T^{\star}\N)$  
satisfying \eqref{gauge:bqone:alpha:p:w}, and define $\ovkil\defi \alpha n$. Then  the \textpmdata $\pmdata$ can always be $(\phi,\rig)$-embedded in $(\M,g)$, and 
embeddedness is equivalent to 
the  
extension $\kil$ of $\phi_{\star}\ovkil$ off $\phi(\H)$ satisfying  
\begin{align}
\label{lie:rig:eta}\pounds_{\rig}\kil\stackbin{\phi(\N)}=\big(\bqone-d\alpha\big)(n)\rig
+\phi_{\star}\lp 
\frac{1}{2}\big(\p-\alpha n(\elltwo)\big) \, 
n
+P\big(\bqone-2\alpha\bsone-d\alpha, \, \cdot \, \big)\rp .
\end{align}
\end{proposition}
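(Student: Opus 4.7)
The strategy is to fix $\kil|_{\phi(\N)} = \phi_{\star}\ovkil = \alpha\nu$ (as required for embeddedness) and then extract from \eqref{ext:eta:and:def:tensor} the precise condition on the first transverse derivative of $\kil$ at $\phi(\N)$. Using the rigging decomposition $T_{p}\M = T_{p}\phi(\N)\oplus\Span(\rig|_{p})$, I would write uniquely
\begin{equation*}
\pounds_{\rig}\kil\big|_{\phi(\N)} = \mathcal{A}\rig + \phi_{\star}\mathcal{B},\qquad \mathcal{A}\in\Fcal(\N),\quad \mathcal{B}\in\Gamma(T\N),
\end{equation*}
and determine $\mathcal{A}$, $\mathcal{B}$ by imposing the three conditions in \eqref{ext:eta:and:def:tensor}. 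The first condition, $\phi^{\star}\Kkil = 2\alpha\bU$, holds automatically and does not constrain the extension: since the tangential components of $\Kkil$ at $\phi(\N)$ depend only on $\kil|_{\phi(\N)}$, a direct expansion using that $\nu$ is tangent to $\phi(\N)$, $g(\nu,\phi_{\star}Z)=0$ (which follows from $n\in\Rad\gamma$), and $\phi^{\star}(\pounds_{\nu}g)=\pounds_{n}\gamma = 2\bU$ yields $\phi^{\star}\Kkil = 2\alpha\bU$.

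The second condition determines $\mathcal{A}$ and the tangential part of $\mathcal{B}$ provided one has at hand the two auxiliary identities, valid on $\phi(\N)$ for every $Y\in\Gamma(T\N)$:
\begin{equation*}
g(\nabla_{\nu}\rig,\phi_{\star}Y) = (\bsone + \bs{\Yn})(Y), \qquad g(\nabla_{\phi_{\star}Y}\rig,\nu) = -\bomega(Y).
\end{equation*}
Both follow from a short self-contained calculation combining $\phi^{\star}(\pounds_{\rig}g) = 2\bY$, the identity $\pounds_{n}\ellc = 2\bsone$ from \eqref{soneprop}, and the normalization $g(\rig,\nu) = \ellc(n) = 1$. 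Expanding $\Kkil(\rig,\phi_{\star}Y) = g(\nabla_{\rig}\kil,\phi_{\star}Y) + g(\rig,\nabla_{\phi_{\star}Y}\kil)$ via $\nabla_{\rig}\kil = \pounds_{\rig}\kil + \nabla_{\kil}\rig = \mathcal{A}\rig + \phi_{\star}\mathcal{B} + \alpha\nabla_{\nu}\rig$ on $\phi(\N)$, and using these identities together with the cancellation $\bsone + \bs{\Yn} + \bomega = 2\bsone$, the condition $\phi^{\star}(\Kkil(\rig,\cdot)) = \bqone$ reduces to
\begin{equation*}
\mathcal{A}\ellc(Y) + \gamma(\mathcal{B},Y) = (\bqone - 2\alpha\bsone - d\alpha)(Y), \qquad \forall Y\in\Gamma(T\N).
\end{equation*}
Evaluating this on $n$ (and using $\ellc(n)=1$, $\gamma(\mathcal{B},n)=0$, $\bsone(n)=0$) gives $\mathcal{A} = (\bqone - d\alpha)(n)$, while contracting with $P^{bc}$ and invoking \eqref{prod3}-\eqref{prod4} produces
\begin{equation*}
\mathcal{B} = P(\bqone - 2\alpha\bsone - d\alpha,\cdot) + \big(\mathcal{A}\elltwo + \ellc(\mathcal{B})\big)n,
\end{equation*}
so that only $\ellc(\mathcal{B})$ (equivalently, the $n$-component of $\mathcal{B}$) remains free.

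The third condition fixes this last scalar. From $\Kkil(\rig,\rig) = 2g(\nabla_{\rig}\kil,\rig)$ together with $2g(\nabla_{\nu}\rig,\rig) = \nu(g(\rig,\rig)) = n(\elltwo)$ one finds $\Kkil(\rig,\rig) = 2\mathcal{A}\elltwo + 2\ellc(\mathcal{B}) + \alpha n(\elltwo)$, so that $\phi^{\star}(\Kkil(\rig,\rig)) = \p$ becomes $\mathcal{A}\elltwo + \ellc(\mathcal{B}) = \tfrac{1}{2}(\p - \alpha n(\elltwo))$. Substituting back into the formula for $\mathcal{B}$ produces exactly \eqref{lie:rig:eta}. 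For the converse direction, an extension $\kil$ of $\phi_{\star}\ovkil$ satisfying \eqref{lie:rig:eta} always exists (e.g.\ by prescribing $\kil$ along the integral curves of $\rig$ in a tubular neighbourhood of $\phi(\N)$), and reversing the computation shows that such an extension realizes all three conditions in \eqref{ext:eta:and:def:tensor}. The main (and rather modest) technical obstacle is the derivation of the two auxiliary identities for $\nabla_{\nu}\rig$; once these are in place, the rest is essentially linear algebra on $T\N$ together with \eqref{prod3}-\eqref{prod4}.
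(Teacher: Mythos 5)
Your proposal is correct, and every step checks out: the auxiliary identities $g(\nabla_{\nu}\rig,\phi_{\star}Y)=(\bsone+\bs{\Yn})(Y)$ and $g(\nabla_{\phi_{\star}Y}\rig,\nu)=-\bomega(Y)$ follow from $g(\nabla_X\rig,W)=\bY(X,W)+\bF(X,W)$ for tangential $X,W$, the advertised cancellation $(\bsone+\bs{\Yn})+\bomega=2\bsone$ does remove all $\bY$-dependence, and the linear algebra with \eqref{prod2}--\eqref{prod4} reproduces \eqref{lie:rig:eta} exactly. The route, however, is genuinely different from the paper's. The paper \emph{verifies} the prescribed formula: it assumes \eqref{lie:rig:eta}, computes the three pull-backs in \eqref{ext:eta:and:def:tensor} purely with Lie-derivative Leibniz rules applied to $\ellc$ and $\elltwo$ (never introducing $\bY$, $\bomega$ or any covariant derivative of $\rig$), and then disposes of the necessity direction by citing Theorem~4.1 of an earlier paper. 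You instead \emph{derive} the formula: decomposing $\pounds_{\rig}\kil=\mathcal{A}\rig+\phi_{\star}\mathcal{B}$ in the rigging splitting and solving for $\mathcal{A}$, $\gamma(\mathcal{B},\cdot)$ and $\ellc(\mathcal{B})$ shows in one pass that \eqref{ext:eta:and:def:tensor} uniquely determines $\pounds_{\rig}\kil$ at $\phi(\N)$, so both implications come out of the same computation and no external reference is needed for necessity --- a genuine gain in self-containedness. The price is the detour through $\bY\defi\frac{1}{2}\phi^{\star}(\pounds_{\rig}g)$ and $\bomega$, which are not part of the metric \textpmdata and must be introduced a posteriori from the embedding before cancelling out; the paper's Lie-derivative computation avoids them entirely and is correspondingly shorter for the sufficiency direction. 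One cosmetic remark: your ingredients for the auxiliary identities are more naturally stated as $\bF=\frac{1}{2}d\ellc$ and $\bsone=\bF(n,\cdot)$ rather than $\pounds_n\ellc=2\bsone$, though the two are equivalent by Cartan's formula and $\ellc(n)=1$.
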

\begin{proof}
Equation \eqref{lie:rig:eta} always admits solutions because it only restricts the first order transverse derivative of $\kil$ at $\phi(\N)$. Hence, we only need to prove that \eqref{lie:rig:eta} holds if and only if $\pmdata$ is $(\phi,\rig)$-embedded in $(\M,g)$. Suppose first that the extension $\eta$ verifies \eqref{lie:rig:eta}, and define $\Kkil\defi \pounds_{\kil}g$. Then (recall $\bsone(n)=0$, \eqref{prod2}-\eqref{prod4}, and that $\phi_{\star}n$ is normal to $\phi(\N)$),
\begin{align*}
\phi^{\star}\Kkil&=\phi^{\star}\big(\pounds_{\kil}g\big)=\pounds_{\ovkil}\gamma=\alpha\pounds_n\gamma=2\alpha\bU,\\
\phi^{\star}\big(\Kkil(\rig,\cdot))(X)&=\phi^{\star}\big((\pounds_{\kil}g)(\rig,\cdot)\big)(X)=\phi^{\star}\Big(\pounds_{\kil}\big(g(\rig,\cdot)\big)+g(\pounds_{\rig}\kil,\cdot)\Big)(X)\\
&\stackbin{\eqref{emhd}}=\big(\pounds_{\ovkil}\ellc\big)(X)+\big(\bqone-d\alpha\big)(n)\ellc(X)+P^{ab}\big(\qone_b-2\alpha\sone_b-\nablao_b\alpha\big)\gamma_{ac}X^c\\
&\stackbin{\eqref{prod4}}=\Big(\alpha\pounds_{n}\ellc+d\alpha+\bqone(n)\ellc-d\alpha(n)\ellc\Big)(X)+(\delta^b_c-n^b\ell_c)\big(\qone_b-2\alpha\sone_b-\nablao_b\alpha\big)X^c\\
&\stackbin{\eqref{soneprop}}=\Big(2\alpha\bsone+d\alpha+\bqone(n)\ellc-d\alpha(n)\ellc+\bqone-2\alpha\bsone-d\alpha-\bqone(n)\ellc+d\alpha(n)\ellc\big)(X)\\
&=\bqone(X),\qquad \forall X \in\Gamma(T\mathcal{H}),\\
\phi^{\star}\big(\Kkil(\rig,\rig))&=\phi^{\star}\big((\pounds_{\kil}g)(\rig,\rig)\big)=\phi^{\star}\Big(\pounds_{\kil}\big(g(\rig,\rig)\big)+2g(\pounds_{\rig}\kil,\rig)\Big)\\
&\stackbin{\eqref{emhd}}=\alpha n(\elltwo)+2\elltwo\big(\bqone-d\alpha\big)(n)+\p-\alpha n(\elltwo)+2P^{ab}\ell_a \big(\qone_b-2\alpha\sone_b-\nablao_b\alpha\big)\\
&\stackbin{\eqref{prod3}}=\p+2\elltwo\big(\bqone-d\alpha\big)(n)-2\elltwo  \big(\bqone-d\alpha\big)(n)=\p.
\end{align*}
By construction $\kil$ fulfills \eqref{ext:eta:and:def:tensor}, which proves that the \textnonedata $\pmdata$ is $(\phi,\rig)$-embedded, as well as the sufficiency of \eqref{lie:rig:eta}. For the necessity, assume that $\pmdata$ is $(\phi,\rig)$-embedded in $(\M,g)$. Then, by Definition \ref{def:embedded:pdata} there exists an extension $\kil$ of $\phi_{\star}\ovkil$ so that \eqref{ext:eta:and:def:tensor} holds. Such extension necessarily verifies  \eqref{lie:rig:eta} by \cite[Thm.\ 4.1]{manzano2023field}.
\end{proof}
As mentioned before, the tensor $\ovsigmakil$ is not ideal for characterizing the horizon geometry due to its gauge transformation properties. 
To describe horizon geometry in more favorable terms, we introduce a new tensor field $\bcalP^{\ovkil}$, called the \isotensor. We first present its abstract definition together with an alternative expression for it. Then we obtain its gauge transformation, and explore its geometric meaning by considering the embedded case.
\begin{definition} \label{Def:calP:sigma:new}
Consider a \textphdata $\phdata$ with the  corresponding gauge-invariant vector field $\ovkil\defi \alpha n$. 
The \textbf{\isotensor} $\bcalP^{\ovkil}$ is the symmetric $(0,2)$-tensor field defined by
\begin{align}
\label{def:calP:ovsigmakil} \bcalP^{\ovkil}\defi \ellc\lp\ovsigmakil\rp-\frac{1}{2}\lp \alpha n\big(\elltwo\big)-\p\rp\bU-\alpha\elltwo\pounds_n\bU.
\end{align}
\end{definition}
\begin{remark}
By inserting \eqref{ellc:ovsigmakil} into \eqref{def:calP:ovsigmakil}, one obtains the following alternative expression for $\bcalP^{\ovkil}$: 
\begin{align}
\label{calP:usual:expression}\calP^{\ovkil}_{ab}= \nablao_a\nablao_b\alpha+2(\nablao_{(a}\alpha)\omega_{b)}+2\alpha\nablao_{(a}\sone_{b)}+n(\alpha)\Y_{ab}-\alpha(\pounds_{n}\bY)_{ab}+\frac{1}{2}\lp \alpha n(\elltwo)+\p\rp\U_{ab}.
\end{align}
\end{remark}
\begin{remark}
The name \isotensor is justified by the fact that, when $\bU=0$ 
(i.e.\ when the abstract hypersurface has vanishing second fundamental form), it holds
\[
{\ovsigmakil}{}^c{}_{ab}\stackbin[\eqref{ellc:ovsigmakil}]{\eqref{ovsigmakil}}= \alpha \lp {\sigo}{}^c{}_{ab}-2n^c\nablao_{(a}\sone_{b)}\rp +n^c\ell_d{\ovsigmakil}{}^d{}_{ab}
\stackbin{\eqref{sigo}}= n^c\ell_d{\ovsigmakil}{}^d{}_{ab}
\stackbin{\eqref{def:calP:ovsigmakil}}=n^c\calP^{\ovkil}_{ab}. 
\]
In this case, $\bcalP^{\ovkil}=0$ is therefore equivalent to $\ovsigmakil\defi \pounds_{\ovkil}\ovnabla=0$, which is the defining condition of an isolated horizon \textup{\cite{ashtekar2002geometry}}.   
Thus, $\bcalP^{\ovkil}$ accounts for the ``level of non-isolation" of a \textnonedata.
\end{remark}
\begin{lemma} 
Consider a \textphdata $\phdata$ and let $\ovkil\defi \alpha n$ be its associated gauge-invariant vector field. The gauge transformation of the \isotensor $\bcalP^{\ovkil}$ is given by
\begin{align}
\label{gauge:calP}\G_{(z,V)}(\calP^{\ovkil}_{ab})=z\calP^{\ovkil}_{ab}
+zV^c\Big( \qone_c\U_{ab}
+ 2\nablao_{(a}(\alpha\U_{b)c})
-\nablao_c(\alpha\U_{ab})\Big).
\end{align}
\end{lemma}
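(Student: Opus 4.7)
The plan is to apply $\G_{(z,V)}$ termwise to the definition \eqref{def:calP:ovsigmakil}, to isolate the piece $z\,\bcalP^{\ovkil}$, and to verify that the remaining anomaly collapses to the right-hand side of \eqref{gauge:calP}. The transformation of the first summand $\ellc(\ovsigmakil)$ is read off directly from the already established \eqref{gauge:ellc:ovsigmakil}. For the second summand $-\frac{1}{2}(\alpha n(\elltwo)-\p)\bU$ one combines $\G_{(z,V)}(\alpha)=z\alpha$, $\G_{(z,V)}(n)=z^{-1}n$, $\G_{(z,V)}(\bU)=z^{-1}\bU$ with the transformations of $\elltwo$ in \eqref{gaugegamma&ell2} and of $\p$ in \eqref{gauge:bqone:alpha:p:w}. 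For the third summand, the short computation $\pounds_{fX}\bU = f\pounds_X\bU + 2\,df\otimes_s\bU(X,\cdot)$ together with $\bU(n,\cdot)=0$ (equation \eqref{Un}) yields
\[
\G_{(z,V)}\bigl(\pounds_n\bU\bigr) = z^{-2}\pounds_n\bU - z^{-3}n(z)\,\bU,
\]
and multiplying by $\G_{(z,V)}(\alpha\elltwo) = z^3\alpha(\elltwo+2\ellc(V)+\gamma(V,V))$ makes the transformed third summand fully explicit.

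What remains is essentially a bookkeeping exercise. Once the three transformed summands are added and $z\,\bcalP^{\ovkil}$ is subtracted, all $n(z)$-contributions cancel between the second and third summands, and the $\pounds_n\bU$-terms from \eqref{gauge:ellc:ovsigmakil} combine with those coming from the third summand into a single $-z\alpha\ellc(V)\pounds_n\bU$ piece. The $n$-derivatives $n(\ellc(V))$ and $n(\gamma(V,V))$ produced by $\G_{(z,V)}(\alpha n(\elltwo))$ are then reduced via $\pounds_n\ellc = 2\bsone$ (equation \eqref{soneprop}) and $\pounds_n\gamma = 2\bU$ to expressions in $\bsone(V)$, $\ellc(\pounds_n V)$, $\bU(V,V)$ and $\gamma(V,\pounds_n V)$, which are exactly cancelled by the matching terms in \eqref{gauge:ellc:ovsigmakil}, by the $z\alpha\bU(V,V)\bU$ anomaly from $\G_{(z,V)}(\p)$, and by the $\pounds_n\bU$-piece inside \eqref{gamma:ovsigmakil}. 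The surviving anomaly is $z\gamma(V,\ovsigmakil) - zV(\alpha)\bU + z\bqone(V)\bU$. Substituting \eqref{gamma:ovsigmakil} for $\gamma(V,\ovsigmakil)$ and regrouping gives
\[
zV^c\bigl[\alpha\bigl(\nablao_a\U_{bc}+\nablao_b\U_{ac}-\nablao_c\U_{ab}\bigr) + 2(\nablao_{(a}\alpha)\U_{b)c}\bigr] - zV(\alpha)\U_{ab} = zV^c\bigl[2\nablao_{(a}(\alpha\U_{b)c}) - \nablao_c(\alpha\U_{ab})\bigr],
\]
which together with the leftover $zV^c\qone_c\U_{ab}$ reproduces \eqref{gauge:calP}.

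The main obstacle is the sheer number of contributions, involving $n(z)$, $V(\alpha)$, $\bsone(V)$, $\ellc(V)$, $\gamma(V,V)$, $\bU(V,V)$, $\pounds_nV$ and $d\alpha$, which must all cancel in the correct pattern. The key simplifying fact is $\bU(n,\cdot)=0$, which kills the $2\,d(z^{-1})\otimes_s\bU(n,\cdot)$ anomaly inside $\G_{(z,V)}(\pounds_n\bU)$ together with numerous analogous cross terms; the coefficients in \eqref{def:calP:ovsigmakil}, in particular the relative sign between $\alpha n(\elltwo)$ and $\p$ and the $-\alpha\elltwo\pounds_n\bU$ subtraction, are precisely those required for the residual anomalies to collapse into the compact three-term expression on the right-hand side of \eqref{gauge:calP}.
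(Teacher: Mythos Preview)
Your approach is exactly the paper's: transform each summand in \eqref{def:calP:ovsigmakil} separately, use \eqref{gauge:ellc:ovsigmakil} for the first, collect the anomaly, and then substitute \eqref{gamma:ovsigmakil}. The paper's intermediate expression is
\[
\G_{(z,V)}(\bcalP^{\ovkil})-z\bcalP^{\ovkil}
= z\gamma(V,\ovsigmakil)+z\big(\bqone-d\alpha-2\alpha\bsone\big)(V)\,\bU - z\alpha\,\ellc(V)\,\pounds_n\bU,
\]
which differs from the ``surviving anomaly'' you wrote by the two pieces $-2z\alpha\bsone(V)\bU$ and $-z\alpha\ellc(V)\pounds_n\bU$. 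You clearly know they are there (you invoke the $\pounds_n\bU$-piece inside \eqref{gamma:ovsigmakil} to cancel them), but you dropped them from the displayed anomaly while still carrying the full $z\gamma(V,\ovsigmakil)$; this makes the narrative inconsistent. If you keep them, then upon inserting \eqref{gamma:ovsigmakil} the terms $\ell_c\pounds_n\U_{ab}$ and $2\sone_c\U_{ab}$ cancel exactly against them, and your final Leibniz-rule regrouping goes through verbatim. So the argument is correct, just tighten the bookkeeping at that one step.
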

\begin{proof}
We denote $\G_{(z,V)}$-transformed quantities with a prime. We first derive the gauge transformation of the last two terms in \eqref{def:calP:ovsigmakil}.\ Using the transformation laws  
\eqref{gaugegamma&ell2}, \eqref{gaugen}-\eqref{Uprime} and \eqref{gauge:bqone:alpha:p:w}, one gets
\begin{align*}
\Big( \alpha' n'\big(\elltwo{}'\big)-\p'\Big)\bU'=&\spc \frac{1}{z} \lp \alpha n\Big(z^2\big( \ell^{(2)}+2\ellc\lp V\rp+ \gamma \lp V,V\rp\big)\Big)-z^2\big( \p+2\bqone(V)+2\alpha\bU(V,V)\big)\rp \bU\\
=&\spc  
z\lp \alpha n(\elltwo)-\p\rp \bU+2\bigg( \alpha n(z)\big( \ell^{(2)}+2\ellc\lp V\rp+ \gamma \lp V,V\rp\big)\\
&+z\alpha n\Big( \ellc\lp V\rp+ \frac{1}{2}\gamma \lp V,V\rp\Big)-z\bqone(V)-z\alpha\bU(V,V)\bigg) \bU,\\
=&\spc  
z\lp \alpha n(\elltwo)-\p\rp \bU+2\bigg( \alpha n(z)\big( \ell^{(2)}+2\ellc\lp V\rp+ \gamma \lp V,V\rp\big)\\
&+z\alpha \Big( 2\bsone\lp V\rp+\ellc\lp \pounds_nV\rp+\gamma(\pounds_nV,V)\Big)-z\bqone(V)\bigg) \bU,\\
\alpha'\elltwo{}'\pounds_{n'}\bU'=&\spc z^3\alpha \big( \ell^{(2)}+2\ellc\lp V\rp+ \gamma \lp V,V\rp\big)\pounds_{z^{-1}n}(z^{-1}\bU)\\
= &\spc \alpha \big( \ell^{(2)}+2\ellc\lp V\rp+ \gamma \lp V,V\rp\big)\big( z\pounds_{n}\bU-n(z)\bU\big).
\end{align*}
It only remains to gauge-transform the right-hand side of \eqref{def:calP:ovsigmakil} by using these two expressions as well as the transformation law for $\ellc \lp\ovsigmakil\rp$ in  \eqref{gauge:ellc:ovsigmakil} to obtain 
\begin{align*}
\bcalP^{\ovkil}{}'=&\spc z\bcalP^{\ovkil}+z\gamma(V,\ovsigmakil)+z\lp \bqone-d\alpha-2\alpha \bsone\rp(V)\hspace{0.05cm} \bU-z\alpha \ellc\lp V\rp\pounds_{n}\bU,
\end{align*}
which upon substituting $\gamma(\cdot,\ovsigmakil)$ according to \eqref{gamma:ovsigmakil} gives \eqref{gauge:calP}.
\end{proof}
A notion of ``\isotensor{}" was presented in \cite[Def.\ 1]{mars2012stability} 
for an embedded  
null hypersurface $\Hemb$ with vanishing second fundamental form.  
While in \cite{mars2012stability} this object was not directly related to the Lie derivative of the spacetime Levi-Civita connection $\nabla$, by comparison with \eqref{Sigma:Curvature} it follows that it actually coincides with the tensor $\pounds_{\nu}\nabla$  
restricted to directions tangent to $\Hemb$. 
The notion of \isotensor in \cite{mars2012stability}, however, is not abstract, applies only in the case when $\bU=0$, and involves a null vector tangent to $\Hemb$ which is not allowed to vanish therein. In the more recent work \cite[Eq.\ (5.6)]{manzano2023field}, the authors provide a generalization of the \isotensor 
to null hypersurfaces with arbitrary tensor $\bU$ embedded in a spacetime 
$(\M,g)$ admitting a privileged vector field $\kil$ that is null and tangent along the hypersurface, and which is allowed to vanish. Nevertheless, the definition of \cite{manzano2023field} was still non-abstract, as it arose from computing certain components of the tensor $\pounds_{\kil}\nabla$,  and it involved the pull-back $\phi^{\star}(\pounds_{\rig}\Kkil)$.  
In the present paper we have developed a framework, based on the notion of \textnonedata, that enables us to define and study $\bcalP^{\ovkil}$ \textit{at a purely abstract level}, and endow it with a clear geometric meaning 
that relates it to the tensor $\pounds_{\kil}\nabla$ in the embedded case, as we shall see next. 

Consider a \textphdata $\phdata$ that is $(\phi,\rig)$-embedded in a Lorentzian manifold $(\M,g)$. Let $\kil$ be an extension 
of $\phi_{\star}\ovkil$ off $\phi(\N)$ such that \eqref{ext:eta:and:def:tensor} holds. 
We want to relate the \isotensor $\bcalP^{\ovkil}$ with $\Kkil$ via the tensor 
$\Sigmakil\defi \pounds_{\kil}\nabla$. Particularizing \eqref{Sigma:times:Sym:tensor} for $Z=\kil$, $D=\nabla$, $T=g$ and multiplying by $g^{\mu\lambda}$, one  
finds \cite{manzano2023field}
\begin{equation}
\label{Sigma:eta:emb}\Sigmakil^{\lambda}{} _{\alpha \beta}  =\frac{1}{2}g^{\mu\lambda}\big( \nabla_{\alpha} \Kkil_{\beta\mu}+\nabla_{\beta} \Kkil_{\alpha\mu}-\nabla_{\mu} \Kkil_{\alpha\beta}\big).
\end{equation} 
The tensor $\Sigmakil$ is independent of the choice of rigging, hence its value at $\phi(\N)$ must be gauge-invariant. The expression of $\Sigmakil$ at $\phi(\N)$ was obtained in \cite[Lem.\ 5.2]{manzano2023field}, and reads
\begin{align}
\nn \Sigmakil&(e_a,e_b)\stackbin{\phi(\N)}= \Big( \big( \w-n(\alpha) \big)\U_{ab}- \alpha  (\pounds_n \bU)_{ab} \Big)\rig\\
\nn &+ \Big( \nablao_{a}\nablao_{b}\alpha +2(\nablao_{(a}\alpha)\omega_{b)}+ 2\alpha\nablao_{(a}\sone_{b)}+n(\alpha) \Y_{ab}-\alpha(\pounds_{n}\bY)_{ab}+\frac{1}{2}\big( \alpha n(\elltwo)+ \p\big)\U_{ab}\Big)\phi_{\star}n\\
\label{Sigma:general:xprssn} &+P^{cd}\lp 2(\nablao_{(a}\alpha) \U_{b)c}	+ \U_{ab}( \qone_c-\nablao_c\alpha )+\alpha( \nablao_a\U_{bc}+ \nablao_b\U_{ca}- \nablao_c\U_{ab}) \rp e_d ,
\end{align}
where $\{e_a\}$ is a local basis of $\Gamma(T\phi(\N))$, $\alpha,\p,\bqone$ are defined by \eqref{ext:eta:and:def:tensor}, 
and $\w\defi \bqone(n)$. Observe that the factor in parentheses in the second line of \eqref{Sigma:general:xprssn} is precisely $\bcalP^{\ovkil}$ (cf.\ \eqref{calP:usual:expression}). In fact, combining \eqref{prod3}, \eqref{Un} and the identity $n^c\big( \nablao_a\U_{bc}+\nablao_b\U_{ca}-\nablao_c\U_{ab}\big)=-(\pounds_n\bU)_{ab}$, it is straightforward to prove \cite{manzano2023field}
\begin{align} 
\label{pullback:sigmakil}\phi^{\star}\lp g(\rig,\Sigmakil)\rp=\bcalP^{\ovkil}.
\end{align}
This general identity, along with the gauge-invariance of $\Sigmakil$ when 
$\Kkil=2 c g$, $c\in\mathbb{R}$,  
suggests that $\bcalP^{\ovkil}$ is a suitable candidate for studying horizon geometry. We elaborate on this in the next remark.  
\begin{remark}\label{rem:horizons} 
Consider a \textpmdata $\pmdata$ with $\alpha\bU=c\gamma$, $\bqone=2c\ellc$ and $\p=2c\elltwo$, where $c$  
is a gauge-invariant constant. 
Then, a direct calculation 
based on \eqref{nablaogamma} and \eqref{gauge:calP}
gives 
\begin{align*}
\alpha\lp \G_{(z,V)}(\calP^{\ovkil}_{ab})
-z\calP^{\ovkil}_{ab}\rp&=zV^c\Big( \alpha\qone_c\U_{ab}
+ 2\alpha\nablao_{(a}(\alpha\U_{b)c})
-\alpha\nablao_c(\alpha\U_{ab})\Big)\\
&=czV^c\Big( 2c\ell_c \gamma_{ab}
+ \alpha\big(2\nablao_{(a}\gamma_{b)c}
-\nablao_c\gamma_{ab}\big)\Big)
\stackbin{\eqref{nablaogamma}}= 
2cz\ellc(V)\Big( c \gamma_{ab}
- \alpha\U_{ab}\Big)=0.
\end{align*}
When, in addition, the set of zeroes of $\alpha$ has empty interior\footnote{This restriction on the zero-set of $\alpha$ holds for smooth homothetic, conformal Killing and Killing horizons
.}\  in $\N$, it follows that $\G_{(z,V)}(\bcalP^{\ovkil})=z\bcalP^{\ovkil}$. 

Now, the data $\metdata$ is always $(\phi,\rig)$-embeddable in a Lorentzian manifold $(\M,g)$ \textup{\cite[Thm.\ 4.2]{mars2024transverseII}}, and 
by Proposition \ref{prop:extension:eta} we know that $\pmdata$ can also be embedded in $(\M,g)$. This means that there exists an extension $\kil\in\Gamma(T\M)$ of $\phi_{\star}\ovkil$ so that $\Kkil=2c g$ on $\phi(\N)$ (cf.\ \eqref{emhd}, Definition \ref{def:embedded:pdata}). Thus, $\phi(\N)$ is either a zeroth-order homothetic  
horizon or a zeroth-order Killing horizon, depending on whether 
$c\neq0$ or $c=0$. By comparison with \eqref{gauge:ovsigmakil}-\eqref{gauge:ellc:ovsigmakil}, it is immediate to see that the gauge-behaviours of $\ovsigmakil$ and $\ellc\big(\ovsigmakil\big)$ are significantly more involved than that of $\bcalP^{\ovkil}$ 
in the homothetic 
case where $\bU\neq0$. In particular, when 
the extension $\kil$ is an actual homothetic/Killing vector,    
$\Sigmakil$ vanishes (by \eqref{Sigma:eta:emb})  
and consequently so does the \isotensor $\bcalP^{\ovkil}$ (by \eqref{pullback:sigmakil}).  
We therefore conclude that 
abstract homothetic/Killing horizons are characterized by \textnonedata{}s with $\alpha\bU=c\gamma$ and $\bcalP^{\ovkil}=0$, conditions that are both gauge-independent, as $\alpha\bU$ is gauge-invariant (cf.\ \eqref{Uprime}, \eqref{gauge:bqone:alpha:p:w}) and $\bcalP^{\ovkil}$ simply rescales under gauge transformations. 
\end{remark}

We conclude the section by providing the relation between the \isotensor $\bcalP^{\ovkil}$ and the first transverse derivative of the deformation tensor for a given embedded \textnonedata. 
\begin{lemma}\label{lem:order_one}
Let $\phdata$ be a \textphdata $(\phi,\rig)$-embedded in a Lorentzian manifold $(\M,g)$, and   $\kil$ be the corresponding extension of $\phi_{\star}\ovkil\defi \phi_{\star}(\alpha n)$ off $\phi( \N)$. Then,
\begin{align}
\label{id:pi:and:lie_rig:def:tensor}\bcalP^{\ovkil}=-\frac{1}{2}\phi^{\star}\lp \pounds_{\rig}\Kkil-\pounds_{\mathcal{W}}g\rp,
\end{align}
where $\Kkil\defi \pounds_{\kil}g$ and 
$\mathcal{W}^{\beta}\defi g^{\beta\mu}\Kkil_{\mu\nu}\rig^{\nu}$.
\end{lemma}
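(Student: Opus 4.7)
The idea is to reduce the identity to an ambient computation and then invoke the already-established formula \eqref{pullback:sigmakil}. Specifically, the plan is to prove the off-hypersurface identity
\begin{align*}
-\tfrac12\bigl(\pounds_{\rig}\Kkil-\pounds_{\mathcal{W}}g\bigr)_{\alpha\beta}\;=\;g_{\lambda\sigma}\rig^{\sigma}\Sigmakil^{\lambda}{}_{\alpha\beta},
\end{align*}
which, after pulling back to $\phi(\N)$ via $\phi$, immediately yields \eqref{id:pi:and:lie_rig:def:tensor} thanks to \eqref{pullback:sigmakil}.

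The first step is to use the explicit expression \eqref{Sigma:eta:emb} for $\Sigmakil$. Contracting with $g_{\lambda\sigma}\rig^{\sigma}$ collapses $g^{\mu\lambda}g_{\lambda\sigma}\rig^{\sigma}=\rig^{\mu}$ and produces
\begin{align*}
g_{\lambda\sigma}\rig^{\sigma}\Sigmakil^{\lambda}{}_{\alpha\beta}=\tfrac12\rig^{\mu}\bigl(\nabla_{\alpha}\Kkil_{\beta\mu}+\nabla_{\beta}\Kkil_{\alpha\mu}-\nabla_{\mu}\Kkil_{\alpha\beta}\bigr).
\end{align*}
The second step is the direct computation of the right-hand side of the claimed off-hypersurface identity. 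Writing the Lie derivatives in terms of $\nabla$ (which is permissible since $\nabla$ is torsion-free), one has $(\pounds_{\rig}\Kkil)_{\alpha\beta}=\rig^{\mu}\nabla_{\mu}\Kkil_{\alpha\beta}+\Kkil_{\mu\beta}\nabla_{\alpha}\rig^{\mu}+\Kkil_{\alpha\mu}\nabla_{\beta}\rig^{\mu}$, while $\mathcal{W}_{\beta}=\Kkil_{\beta\nu}\rig^{\nu}$ gives
\begin{align*}
(\pounds_{\mathcal{W}}g)_{\alpha\beta}=\nabla_{\alpha}\mathcal{W}_{\beta}+\nabla_{\beta}\mathcal{W}_{\alpha}=\rig^{\nu}\bigl(\nabla_{\alpha}\Kkil_{\beta\nu}+\nabla_{\beta}\Kkil_{\alpha\nu}\bigr)+\Kkil_{\beta\nu}\nabla_{\alpha}\rig^{\nu}+\Kkil_{\alpha\nu}\nabla_{\beta}\rig^{\nu}.
\end{align*}
Subtracting, the $\Kkil\cdot\nabla\rig$ cross terms cancel in pairs (they appear with identical signs in both Lie derivatives), and the identity
\begin{align*}
(\pounds_{\rig}\Kkil-\pounds_{\mathcal{W}}g)_{\alpha\beta}=\rig^{\mu}\nabla_{\mu}\Kkil_{\alpha\beta}-\rig^{\nu}\bigl(\nabla_{\alpha}\Kkil_{\beta\nu}+\nabla_{\beta}\Kkil_{\alpha\nu}\bigr)
\end{align*}
emerges, which is exactly $-2\,g_{\lambda\sigma}\rig^{\sigma}\Sigmakil^{\lambda}{}_{\alpha\beta}$ up to sign.

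The final step is to apply $\phi^{\star}$ and use \eqref{pullback:sigmakil}, which gives $\phi^{\star}(g(\rig,\Sigmakil))=\bcalP^{\ovkil}$, delivering \eqref{id:pi:and:lie_rig:def:tensor}. The whole argument is essentially algebraic and should pose no real obstacle; the only mildly delicate point is verifying that the $\Kkil\cdot\nabla\rig$ terms in $\pounds_{\rig}\Kkil$ match exactly those appearing in $\pounds_{\mathcal{W}}g$, which is what motivates the particular definition of $\mathcal{W}$ as the $g$-raised contraction of $\Kkil$ with $\rig$. Note also that the identity we proved holds in fact as an equality of ambient tensors and not merely at $\phi(\N)$, so the pullback is only invoked at the very end to connect with the abstractly defined \isotensor.
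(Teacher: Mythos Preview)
Your proof is correct and follows essentially the same approach as the paper: both establish the ambient identity $g(\rig,\Sigmakil)=-\tfrac12(\pounds_{\rig}\Kkil-\pounds_{\mathcal{W}}g)$ by expanding the Lie derivatives in terms of $\nabla$ and using \eqref{Sigma:eta:emb}, then pull back via \eqref{pullback:sigmakil}. The only cosmetic difference is that the paper works from the $\Sigmakil$ side toward the Lie-derivative expression (grouping $\nabla_{(\alpha}(\Kkil_{\beta)\mu}\rig^{\mu})$ and invoking $2\nabla_{(\alpha}\theta_{\beta)}=(\pounds_{g^{\sharp}(\bs{\theta},\cdot)}g)_{\alpha\beta}$), whereas you expand both Lie derivatives directly and verify the cancellation of the $\Kkil\cdot\nabla\rig$ terms.
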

\begin{proof}
Let $\Sigmakil\defi \pounds_{\kil}\nabla$. Then $\rig_{\lambda}\Sigmakil^{\lambda}{}_{\alpha\beta}\stackbin{\eqref{Sigma:eta:emb}}=\frac{1}{2}\rig^{\rho}\lp 2\nabla_{(\alpha}\Kkil_{\beta)\rho}-\nabla_{\rho}\Kkil_{\alpha\beta}\rp$, therefore 
\begin{align*}
\rig_{\lambda}\Sigmakil^{\lambda}{}_{\alpha\beta}&
=\rig^{\mu}\nabla_{(\alpha}\Kkil_{\beta)\mu}-\frac{1}{2}\rig^{\mu}\nabla_{\mu}\Kkil_{\alpha\beta}
=\rig^{\mu}\nabla_{(\alpha}\Kkil_{\beta)\mu}-\frac{1}{2}\pounds_{\rig}\Kkil_{\alpha\beta}+(\nabla_{(\alpha}\rig^{\mu})\Kkil_{\beta)\mu}\\
&=-\frac{1}{2}\pounds_{\rig}\Kkil_{\alpha\beta}+\nabla_{(\alpha}(\Kkil_{\beta)\mu}\rig^{\mu}).
\end{align*}
This expression can be rewritten as 
\begin{align}
\label{id:pi:and:lie_rig:def:tensor:2}g(\rig,\Sigmakil)=-\frac{1}{2}\lp \pounds_{\rig}\Kkil-\pounds_{\mathcal{W}}g\rp
\end{align}
after noticing that $2\nabla_{(\alpha}\theta_{\beta)}=\pounds_{g^{\sharp}(\bs{\theta},\cdot)}g_{\alpha\beta}$ for any $\bs{\theta}\in\Gamma(T^{\star}\M)$. Taking the pull-back of \eqref{id:pi:and:lie_rig:def:tensor:2} to $\N$
gives \eqref{id:pi:and:lie_rig:def:tensor} because of \eqref{pullback:sigmakil}.
\end{proof}

\section{A near-horizon equation on general null hypersurfaces}\label{sec:Cov:ME:General:Hyp}

Given a Lorentzian manifold $(\M,g)$, an embedded null hypersurface $\Hemb$ 
is said to be \textbf{totally geodesic} if its second fundamental form $\btsff^{\nu}(X,W) \defi  g(\nabla_X \nu, W), \, X,W \in \Gamma(T\Hemb)$, vanishes on $\Hemb$. Common examples of totally geodesic null hypersurfaces are non-expanding, weakly isolated and isolated horizons \cite{ashtekar2000generic,ashtekar2000isolated, ashtekar2002geometry,krishnan2002isolated, gourgoulhon20063+, jaramillo2009isolated}, or Killing horizons \cite{wald1984general,frolov2012black}. When $\btsff^{\nu}=0$, any null (not necessarily everywhere non-zero) vector field $\kil$ tangent to $\Hemb$ is proportional to $\nu$, and therefore also verifies
$g(\nabla_X\kil,W)=0$.  
Thus,  $\nabla_{X} \kil\vert_{\nullhyp}$ is null and tangent to $\nullhyp$. Denoting by $\mathcal{S}$ the set of points of $\nullhyp$ where $\kil$ vanishes, we can define a one-form $\bs{w}[\kil]$ on $\nullhyp\setminus\mathcal{S}$ by 
\begin{equation}
\label{def:bkilone}\nabla_X\kil\spc\stackbin{\nullhyp\setminus\mathcal{S}}=\spc\bs{w}[\kil](X)\kil\qquad \forall X\in\Gamma(T\nullhyp\setminus\mathcal{S}).
\end{equation}
Observe that the surface gravity  
of $\kil$ is given by  
$\bs{w}[\kil](\kil)$  on $\wt\N\setminus\mathcal{S}$. Note also that, in general, $\bs{w}[\kil]$ cannot be extended to $\mathcal{S}$ is because the right-hand side of \eqref{def:bkilone} vanishes, while 
$\nabla_X\kil$ need not be zero. If the null hypersurface $\Hemb$ comes from $(\phi,\rig)$-embedded null hypersurface data $\hypdata$ with $\bU=0$, the combination of \eqref{nablaXYnablao} and \eqref{def:bkilone} yields
\begin{align}
\label{relation:omega:and:bkilone}\bs{w}[\nu](X)\nu=\nabla_X\nu=\nablao_{X}n-\bY(X,n)\nu 
\stackbin{\eqref{nablao:n}}=\big( \sone -\bs{r}\big)\big(X\big)\nu\defi \bomega(X)\nu,
\end{align}
so $\bomega$ is the abstract counterpart of $\bs{w}[\nu]$. When $\Hemb$ admits a cross-section $S$, $\bs{w}[\nu]\vert_S$ is related to the torsion one-form $\bs{\mathfrak{s}}$ of $S$ (viewed as a codimension-two spacelike submanifold), as follows:  
\begin{align*}
\forall N\in\Gamma(TS)^{\perp}  \textup{ such that } 
g(N,\nu)\vert_S\neq0, 
\textup{ and }  
\forall X\in\Gamma(TS), \quad 
\bs{\mathfrak{s}}(X)\defi -\frac{g(\nabla_X \nu,N)}{g(\nu,N)}=-\bs{w}[\nu](X).
\end{align*}
For the purposes of this section, two results from the literature are particularly relevant: the \textbf{near-horizon equation} \cite[Eq.\ (5.3)]{ashtekar2002geometry} of  isolated horizons, and the \textbf{master equation} \cite[Eq.\ (60)]{mars2018multiple} of multiple Killing horizons.  
An isolated horizon \cite{ashtekar2002geometry} is a totally geodesic embedded null hypersurface $\Hemb_{\textup{iso}}$ with product topology $S\times\mathbb{R}$, and which  
admits a null, no-where zero tangent vector field $\kil\in\Gamma(T\Hemb_{\textup{iso}})$ such that $[\pounds_{\kil},\mathcal{D}]=0$ on $\Hemb_{\textup{iso}}$, where $\mathcal{D}$ is the connection induced from the Levi-Civita derivative of the ambient spacetime\footnote{In the language of the hypersurface data formalism $\mathcal{D}\equiv\ovnabla$, see Section \ref{sec:prelim}. Moreover, When $\bU=0$ the connection $\mathcal{D}$ is independent of the rigging \cite{ashtekar2002geometry}. This translates into $\ovnabla$ being gauge invariant (cf.\ \eqref{gauge:ovnabla}).}.   
Under the additional assumption of $\kil$ having vanishing surface gravity  
on $\Hemb_{\textup{iso}}$, the near-horizon equation is an identity that holds on a 
cross-section $S$ of $\Hemb_{\textup{iso}}$, which relates the one-form $\bs{w}[\kil]$, the Ricci tensor of the induced metric $h$ on $S$, and the spacetime Ricci tensor along tangential directions. On the other hand, a multiple Killing horizon $\Hemb_{\textup{m}}$ is a null hypersurface which is a Killing horizon with respect to at least two (linearly independent) Killing vector fields $\kil_1,\kil_2$ \cite{mars2018multiple}. Again, when a multiple Killing horizon admits a cross-section $S$, the master equation is an identity on $S$ which relates second derivatives of the proportionality function $\alpha$ between $\kil_1$ and $\kil_2$ on $\Hemb_{\textup{m}}$, the one-form $\bs{w}[\kil_1]$, the Ricci tensor of $h$, and the pull-back to $S$ of the spacetime Ricci tensor. 

While these equations have proven useful in a variety of situations (see e.g.\ \cite{ashtekar2002geometry,Kunduri_2009,kunduri2013classification,dunajski2017einstein,dobkowski2018near,alaee2019existence,moore2023bakry} and references therein), they hold under very specific conditions. For isolated horizons, the hypersurface is totally geodesic, has product topology, typically satisfies energy conditions and/or the Einstein equations, and the near-horizon equation applies only on cross-sections where $\kil$ does not vanish. For multiple Killing horizons, on the other hand, one needs two Killing vectors $\kil_1,\kil_2$ sharing a common horizon which in addition must admit a cross-section, and even in this setting the master equation is only valid at points when both $\kil_1$ and $\kil_2$ are non-zero. 

At this stage, a natural question is whether one can obtain analogous equations for more general null hypersurfaces, e.g.\ for horizons with arbitrary topology, for null hypersurfaces containing points where $\kil$ vanishes, or for other types of horizons (e.g.\ homothetic or conformal Killing horizons). This is the purpose of this section. More specifically, we shall obtain a generalized form of the near-horizon equation that holds for any null hypersurface $\N$ equipped with a privileged tangent null vector field $\ovkil$. This equation is fully covariant and holds on the whole $\N$ (not only on a cross-section, which in fact need not exist). It also provides useful information about the constancy of the surface gravity of $\ovkil$. 
\begin{theorem}\label{thm:master:equation}
Consider a \textphdata $\phdata$ with gauge-invariant vector field $\hspace{0.05cm}\ovkil\defi \alpha n$ and surface gravity $\kappa\defi n(\alpha)+\alpha \Q$.   
Let $\Riemo_{ab}$ be the Ricci tensor of $\nablao$ and define 
$\bs{\Rtensor}$, $\bcalP^{\ovkil}$ by \eqref{defabsRicci}, \eqref{def:calP:ovsigmakil} respectively. Then, the following identities hold:
\begin{align}
\nonumber 0 =&\spc \nablao_{a}\nablao_{b}\alpha +2\omega_{(a}\nablao_{b)}\alpha +\frac{\alpha}{2} \lp 2\nablao_{(a}\omega_{b)}+2\omega_{a}\omega_{b}+\mathcal{R}_{ab}-\Riemo_{(ab)}\rp \\
\nn &+ \lp \kappa +\frac{\alpha}{2} \trP\bU \rp\Y_{ab}+\frac{\alpha}{2}(\trP\bY)\U_{ab}- 2\alpha P^{cd}\U_{c(a} \Y_{b)d} -\calP^{\ovkil}_{ab}\\
\label{ME:calP} &  +\frac{\alpha}{2} \lp \nablao_{(a}\sone_{b)}-\sone_a\sone_b\rp
 +\frac{1}{2}\lp \alpha n(\elltwo)+\p\rp\U_{ab}- \alpha P^{cd}\U_{c(a} \F_{b)d},\\
0=&\spc\nablao_a\kappa+\alpha\lp\Rtensor_{ab}n^b-(\trP\bU)\omega_a+\nablao_a(\trP\bU)\rp -P^{bc}\nablao_{b}\lp\alpha\U_{ac}\rp-\calP^{\ovkil}_{ab}n^b,\label{ME(n,.):calP}\\
\label{ME(n,n):calP} 0 =&\spc n(\kappa)-\calP^{\ovkil}_{ab}n^an^b.
\end{align}
\end{theorem}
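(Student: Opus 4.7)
The plan is to prove \eqref{ME:calP} first as an algebraic identity, and then derive \eqref{ME(n,.):calP} and \eqref{ME(n,n):calP} by contracting it once and twice with $n$. The whole theorem is essentially algebraic, and rests on three basic ingredients: the explicit form \eqref{calP:usual:expression} of $\bcalP^{\ovkil}$, the definition \eqref{defabsRicci} of $\bs{\Rtensor}$, and the scalar identity $\omega(n) = \sone(n) - \Yn(n) = \Q$ (obtained from $\bsone(n)=0$ and $\Yn(n)=\Y(n,n)=-\Q$), together with $\kappa = n(\alpha) + \alpha\Q$.

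For \eqref{ME:calP} I would solve \eqref{defabsRicci} for $\Riemo_{(ab)}$ and substitute into its right-hand side, then verify that the result reproduces \eqref{calP:usual:expression}. The $\Y_{ab}$-coefficient $\kappa + \frac{\alpha}{2}\trP\bU$ combines with the contribution $-\frac{\alpha}{2}(2\omega(n)+\trP\bU)$ coming from $\bs{\Rtensor}-\Riemo_{(ab)}$ to yield $n(\alpha)\Y_{ab}$ (via $\omega(n)=\Q$); the $\sone_a\sone_b$ and $(\trP\bY)\U_{ab}$ blocks cancel pairwise; the $\pounds_n\Y_{ab}$ contribution matches $-\alpha(\pounds_n\bY)_{ab}$; and the blocks $P^{cd}\U_{c(a}\Y_{b)d}$ and $P^{cd}\U_{c(a}\F_{b)d}$ cancel after relabeling the dummy indices $c\leftrightarrow d$, which uses the symmetry of $P^{ab}$ together with the symmetries of $\bU$, $\bY$ and the antisymmetry of $\bF$.

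For \eqref{ME(n,.):calP} I would contract \eqref{calP:usual:expression} directly with $n^b$. The $\U_{ab}n^b$ piece vanishes by \eqref{Un}; the Hessian term rearranges via \eqref{nablao:n} as $n^b\nablao_a\nablao_b\alpha = \nablao_a n(\alpha) - n(\alpha)\sone_a - P^{bf}\U_{af}\nablao_b\alpha$; the symmetrized gradient $n^b\nablao_{(a}\sone_{b)}$ is handled by \eqref{n:nablao:theta:sym} with $\bs{\theta}=\bsone$; and the identity $\pounds_n n = 0$ gives $n^b(\pounds_n\bY)_{ab} = \pounds_n\Yn_a$, so that combining with $\bomega = \bsone - \bsYn$ one extracts the $\alpha\pounds_n\omega_a$ piece. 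Recognizing $\nablao_a\kappa = \nablao_a n(\alpha) + \Q\nablao_a\alpha + \alpha\nablao_a\Q$, the remainder is seen to match $\alpha$ times the right-hand side of \eqref{ConstTensror(n,-)}, up to the term $-P^{bc}\nablao_b(\alpha\U_{ac})$, which yields \eqref{ME(n,.):calP}.

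Finally, \eqref{ME(n,n):calP} follows by a second contraction with $n^a$. Using $\U_{ac}n^a = 0$ together with \eqref{nablao:n} gives $n^a P^{bc}\nablao_b(\alpha\U_{ac}) = -\alpha P^{ad}P^{bc}\U_{ac}\U_{bd}$, which exactly cancels the quadratic term of $\alpha\Rtensor_{ab}n^an^b$ obtained from \eqref{ConstTensror(n,n)}; the terms $\alpha n(\trP\bU)$ and $-\alpha(\trP\bU)\Q$ cancel against the corresponding contributions in \eqref{ME(n,.):calP} after using $\omega(n)=\Q$, leaving $0 = n(\kappa) - \calP^{\ovkil}_{ab}n^an^b$. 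The principal obstacle is the bookkeeping in step \eqref{ME:calP}: the substitution of $\Riemo_{(ab)}$ from \eqref{defabsRicci} generates a large number of cross-terms, and confirming the cancellation of every block requires a systematic use of the algebraic symmetries of $P^{ab}$, $\U_{ab}$, $\Y_{ab}$ and $\F_{ab}$, a task which is tedious but routine once the correct contractions have been identified.
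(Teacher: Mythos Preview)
Your proposal is correct and follows essentially the same approach as the paper. For \eqref{ME:calP}, the paper multiplies \eqref{defabsRicci} by $\tfrac{\alpha}{2}$ and uses \eqref{calP:usual:expression} to eliminate $\alpha\pounds_n\bY$, which is algebraically equivalent to your substitution of $\Riemo_{(ab)}$; for \eqref{ME(n,.):calP} and \eqref{ME(n,n):calP}, the paper contracts \eqref{ME:calP} itself with $n^b$ (arriving at the same intermediate expression you reach by contracting \eqref{calP:usual:expression}) and then uses this intermediate form rather than the final \eqref{ME(n,.):calP} for the second contraction, which spares the cancellations via \eqref{ConstTensror(n,n)} that you describe, but the computations are otherwise identical.
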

\begin{proof}
We shall repeatedly use the equalities
$\bsone(n)=0$, $\bU(n,\cdot)=0$, $\bs{\omega}(n)\stackbin{\eqref{defY(n,.)andQ}}=\Q$ and $n(\alpha)+\alpha\Q\stackbin{\eqref{defkappaonN}}=\kappa$. 
Equation \eqref{ME:calP} follows immediately from multiplying \eqref{defabsRicci} by $\frac{\alpha}{2}$ and using \eqref{calP:usual:expression} to replace $\alpha\pounds_n\bY$ in terms of $\bcalP^{\ovkil}$. 
Contracting \eqref{ME:calP} with $n^b$ and using \eqref{(R-Rico)(n,.)}  
along with  
\begin{align*}
\nn n^b\nablao_{a}\nablao_{b}\alpha=&\spc  \nablao_{a}(n^b\nablao_{b}\alpha )-(\nablao_{a}n^b)\nablao_{b}\alpha\stackbin{\eqref{nablao:n}}= \nablao_{a}(n(\alpha ))-n(\alpha ) \sone_a-P^{bf}\U_{af}\nablao_{b}\alpha \\
=&\spc\nablao_{a}\kappa-\alpha \nablao_{a}\Q-\Q\nablao_{a}\alpha -(\kappa-\alpha \Q)\sone_a-P^{bc}\U_{ac}\nablao_{b}\alpha ,\\
n^b\nablao_{(a}\sone_{b)}\stackbin{\eqref{n:nablao:theta:sym}}=&\spc 
\dfrac{1}{2}\pounds_{n}\sone_a-P^{bc}\U_{ac}\sone_{b},\quad \quad n^b\nablao_{(a}\omega_{b)}\stackbin{\eqref{n:nablao:theta:sym}}=\frac{1}{2}\pounds_{n}\omega_a  +  \frac{1}{2} \nablao_{a}\Q  -\Q \sone_a    -   P^{bc}\omega_{b}\U_{ac}, 
\end{align*}
one obtains 
\begin{align}
\label{mid:eq:2109}0=\nablao_a\kappa-\alpha\nablao_{a}\Q+\alpha\pounds_n\omega_a-P^{bc}\U_{ac}\lp\nablao_b\alpha+2\alpha\sone_b\rp-\calP^{\ovkil}_{ab}n^b 
\end{align}
which, upon using \eqref{ConstTensror(n,-)} to rewrite $\pounds_n\bomega$ in terms of $\bs{\Rtensor}(n,\cdot)$, becomes \eqref{ME(n,.):calP}. Multiplying 
\eqref{mid:eq:2109}
by $n^a$ gives \eqref{ME(n,n):calP} since $\bomega(n)=\Q$.
\end{proof}
Equation \eqref{ME:calP} is an abstract, covariant identity involving hypersurface data $\hypdata$, curvature terms ($\Rtensor_{ab}$ and $\Riemo_{ab}$), derivatives of $\alpha$, the surface gravity $\kappa$ of $\ovkil$, and the quantities $\p$ and $\bcalP^{\ovkil}$. Its derivation requires prescribing a \textphdata, i.e.\ not only hypersurface data but also $\alpha$, $\p$, and $\bqone$. However, $\p$ only appears multiplied by $\bU$, and $\bqone$ only plays a role when one considers gauge transformations of the \isotensor $\bcalP^{\ovkil}$ (recall \eqref{def:calP:ovsigmakil}-\eqref{gauge:calP}). Furthermore, the quantities $\bU$ and $\bcalP^{\ovkil}$ vanish for abstract Killing horizons---see the corresponding discussion in Section \ref{sec:hor:data} as well as in \cite[Sect.\ 6]{manzano2023field}---, in which case \eqref{ME:calP} only depends on hypersurface data plus the function $\alpha$.

As mentioned before, \eqref{ME:calP} generalizes in several directions the already known forms of near-horizon \cite[Eq.\ (5.3)]{ashtekar2002geometry} and master equations \cite[Eq.\ (60)]{mars2018multiple}.
First, \eqref{ME:calP} holds \textit{everywhere} on the hypersurface and not only on a specific cross-section (in fact, we are not assuming that such a section exists). Secondly, \eqref{ME:calP} does not require any topological assumption on the hypersurface apart from the 
requirement that it is two sided (or equivalently that the radical of $\gamma$ admits a global nowhere zero section $n$ \cite{mars2024abstract}).  
This is also a significant difference with respect to the works \cite{ashtekar2000generic, ashtekar2000isolated,ashtekar2002geometry,krishnan2002isolated,  gourgoulhon20063+, jaramillo2009isolated,mars2018multiple} where the hypersurface is assumed to have a product topology $S\times\mathbb{R}$.  Regarding the vector field $\ovkil$ we have also kept full generality by allowing $\ovkil$ to vanish on $\N$. Observe also that, by not restricting $\bU$, $\p$ and $\bqone$ in any way, we have enforced neither any specific extension of $\ovkil$ off $\phi(\N)$ (cf.\ Proposition \ref{prop:extension:eta}), nor any particular form of the deformation tensor of $\K^{\eta}$ in case the data happens to be embedded. 
Finally, note that \eqref{ME:calP} has been obtained for a fully arbitrary constraint tensor $\bs{\Rtensor}$. We have imposed neither energy conditions nor field equations of any kind. In view of the considerations above, we call \eqref{ME:calP}  
\textbf{generalized near-horizon equation}. 
The name is justified because \eqref{ME:calP} holds \textit{everywhere on $\N$}, and whenever a cross-section $S$ exists it also yields an equation therein, i.e.\ the pullback of \eqref{ME:calP} to $S$.

Let us conclude this section with an immediate corollary of equations \eqref{ME(n,.):calP}-\eqref{ME(n,n):calP} regarding the constancy of the surface gravity $\kappa$ of $\ovkil$. The function $\kappa$ is constant 
along the null generators 
if and only if $\bcalP^{\ovkil}(n,n)=0$, and it is constant everywhere on $\N$ if and only if $$\calP^{\ovkil}_{ab}n^b=\alpha\lp\Rtensor_{ab}n^b-(\trP\bU)\omega_a+\nablao_a(\trP\bU)\rp -P^{bc}\nablao_{b}\lp\alpha\U_{ac}\rp.$$ 
Thus, for any \textnonedata satisfying $\bU=0$ and $\bcalP^{\ovkil}=0$ the surface gravity $\kappa$ is automatically constant along the generators, whereas it is constant everywhere on $\N$ provided the zero set of $\alpha$ in $\N$ has empty interior and $\bs{\Rtensor}(n,\cdot)=0$.  

\section{Horizons in Lorentzian manifolds: existence results}\label{sec:existence:results}

Given a \textnonedata with $\bU=0$, the generalized near-horizon equation \eqref{ME:calP} takes the simplified form
\begin{align}
\nn 0 =&\spc \kappa \Y_{ab} -\calP^{\ovkil}_{ab}+\nablao_{a}\nablao_{b}\alpha +2\omega_{(a}\nablao_{b)}\alpha \\
\label{ME:calP:U=0}&+\alpha \lp \nablao_{(a}\omega_{b)}+\omega_{a}\omega_{b}+\frac{1}{2}\mathcal{R}_{ab}-\frac{1}{2}\Riemo_{(ab)}+\frac{1}{2} \nablao_{(a}\sone_{b)}-\frac{1}{2}\sone_a\sone_b\rp.
\end{align}
This expression allows one to 
determine the tensor $\bY$ algebraically from metric data, the one-form $\bomega$, the function $\alpha$ and the tensors $\bs{\Rtensor}$ and $\bcalP^{\ovkil}$, provided $\kappa\neq0$ everywhere on $\N$. The purpose of this section is to identify the conditions under which a data set of the form $\{ \N,\gamma,\ellc,\elltwo,\alpha,\bomega,\bcalP^{\ovkil},\bs{\Rtensor}\}$ with $\kappa\neq0$ can be $(\phi,\rig)$-embedded in a Lorentzian manifold $(\M,g)$ in such a way that $\frac{1}{2}\phi^{\star}(\pounds_{\rig}g)$ coincides with the tensor $\bY$ prescribed by the generalized near-horizon equation \eqref{ME:calP:U=0}. One of the motivations for studying this problem comes from the fact that the full transverse expansion of the spacetime metric at Killing horizons with $\kappa\neq0$ is uniquely determined by the order zero of the metric (i.e.\ $\metdataa$, cf.\ \eqref{emhd}), a one-form that satisfies certain evolution equation along the generators, and all transverse derivatives of the ambient Ricci tensor at the horizon \cite{mars2024transverseI,mars2024transverseII}. In this context, it is not possible to provide the entire tensor $\bY$ from the outset, but only a posteriori can one reconstruct it from  first-order transverse derivatives of the spacetime metric (cf.\ \eqref{YtensorEmbDef}). It is therefore reasonable to follow a similar procedure starting with data $\{\H,\gamma,\ellc,\elltwo,\bomega,\alpha,\bcalP^{\ovkil},\bs{\Rtensor}\}$, thereby extending the previous result to totally geodesic null hypersurfaces without any restrictions on the \isotensor. In particular, we need to understand which properties need to be imposed on this data to guarantee that the tensor $\bY$ defined by \eqref{ME:calP:U=0} agrees, after all, with the tensor $\frac{1}{2}\phi^{\star}(\pounds_{\rig}g)$ of the constructed spacetime.

The section is organized as follows. We begin by proving a number of identities for null data with $\bU=0$, which will be used later (Lemmas \ref{lem:sect:6:1}, \ref{lem:B:times:n} and \ref{lem:identity:lie:constraint}). Then, we obtain (Theorem \ref{thm:exist:pi:R}) the necessary and sufficient conditions for the existence of a Lorentzian manifold $(\M,g)$ constructed from $\{ \N,\gamma,\ellc,\elltwo,\alpha,\bomega,\bcalP^{\ovkil},\bs{\Rtensor}\}$, with no topological restriction on $\N$. 
We continue by addressing the case when $\N$ admits a cross-section, prescribing suitable data therein and again proving existence of $(\M,g)$ (Theorem \ref{thm:existence:sections}). The section ends with four remarks on these results (Remarks \ref{rem:1} to 
\ref{rem:4}). 

\begin{lemma}\label{lem:sect:6:1}
Let $\metdata$ be null metric hypersurface data with $\bU=0$, $\alpha\in\Fcal(\N)$ a scalar function and $\bs{\theta}$ a covector field.\ Define the vector field $\ovkil\defi \alpha n$. Then,
\begin{align}
\label{Lie_n:nablao:cov:sym}\pounds_n(\nablao_{(a}\theta_{b)})=&\spc\nablao_{(a}(\pounds_n\theta_{b)})-2\bs{\theta}(n)\nablao_{(a}\sone_{b)},\\
\nonumber \pounds_{\alpha n}\big(\alpha\nablao_{(a}\theta_{b)} \big)=&\spc \alpha\bigg(  n(\alpha)\nablao_{(a}\theta_{b)}+\nablao_{(a}\big(\alpha\pounds_n\theta_{b)}\big)\\
\label{Lie:alpha:n:nablao:theta}&+ (\nablao_{(a}\alpha)\nablao_{b)}\big(\bs{\theta}(n)\big)-2\bs{\theta}(n)\lp \alpha\nablao_{(a}\sone_{b)} +(\nablao_{(a}\alpha)\sone_{b)}\rp \bigg),\\ 
\label{lie_n:Riemo}\pounds_{n}\Riemo_{(ab)}=&\spc\pounds_n(\nablao_{(a}\sone_{b)})-2\sone_{(a}\pounds_n\sone_{b)},\\
\label{Lie:eta:nablao:nablao:alpha}\pounds_{\ovkil}\big(\nablao_a\nablao_b\alpha\big)=&\spc \alpha\nablao_a\nablao_b\big(n(\alpha)\big)+2(\nablao_{(a}\alpha)\nablao_{b)}\big(n(\alpha)\big)-2n(\alpha)\lp \alpha\nablao_{(a}\sone_{b)}+(\nablao_{(a}\alpha)\sone_{b)}\rp.
\end{align}
\end{lemma}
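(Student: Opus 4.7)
The plan is to exploit two key simplifications that occur when $\bU=0$: the expression for $\sigo\defi\pounds_n\nablao$ obtained in Lemma \ref{ExpreSigo} collapses to $\sigo^c{}_{ab}=2n^c\nablao_{(a}\sone_{b)}$, and equation \eqref{nablao:n} reduces to $\nablao_a n^b=n^b\sone_a$, which together with $\bsone(n)=0$ yields the auxiliary facts $\nablao_a n^a=0$ and $\pounds_n\sone_a=n^c\nablao_c\sone_a$. With these at hand, the four identities become essentially consequences of the commutation formula \eqref{commutation} and the curvature identity \eqref{Lie:Sigma:and:Curvature}.

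For \eqref{Lie_n:nablao:cov:sym} I would particularize \eqref{commutation} to $Z=n$, $D=\nablao$, $T=\bs\theta$, obtaining $\pounds_n\nablao_a\theta_b=\nablao_a\pounds_n\theta_b-\sigo^\mu{}_{ab}\theta_\mu$; the contraction $\sigo^\mu{}_{ab}\theta_\mu=2\bs\theta(n)\nablao_{(a}\sone_{b)}$ is immediate and a symmetrization in $(a,b)$ gives the identity. For \eqref{Lie:alpha:n:nablao:theta}, writing $T_{ab}\defi\nablao_{(a}\theta_{b)}$, the Leibniz rule combined with the standard rescaling identity for Lie derivatives of a symmetric $(0,2)$-tensor produces
\begin{align*}
\pounds_{\alpha n}(\alpha T_{ab}) = \alpha n(\alpha)\,T_{ab} + \alpha^2\pounds_n T_{ab} + 2\alpha(\nablao_{(a}\alpha)\,T_{b)c}n^c .
\end{align*}
The contraction $n^c T_{cb}=n^c\nablao_{(c}\theta_{b)}$ is simplified through \eqref{n:nablao:theta:sym} specialized to $\bU=0$, while $\pounds_n T_{ab}$ is replaced using \eqref{Lie_n:nablao:cov:sym}. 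A short algebraic rearrangement in which one recognizes $\alpha^2\nablao_{(a}\pounds_n\theta_{b)} + \alpha(\nablao_{(a}\alpha)\pounds_n\theta_{b)} = \alpha\nablao_{(a}(\alpha\pounds_n\theta_{b)})$ then yields \eqref{Lie:alpha:n:nablao:theta}.

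The identity \eqref{lie_n:Riemo} requires more care and is where I expect the main technical obstacle to lie. The idea is to start from \eqref{Lie:Sigma:and:Curvature} and contract in $\rho=\alpha$, giving $\pounds_n\Riemo_{\mu\beta}=\nablao_\alpha\sigo^\alpha{}_{\beta\mu}-\nablao_\beta\sigo^\alpha{}_{\alpha\mu}$. Setting $K_{ab}\defi\nablao_{(a}\sone_{b)}$ and using $\sigo^\alpha{}_{\beta\mu}=2n^\alpha K_{\beta\mu}$ together with $\nablao_\alpha n^\alpha=0$, one finds $\nablao_\alpha\sigo^\alpha{}_{\beta\mu}=2n^\alpha\nablao_\alpha K_{\beta\mu}$ and $\nablao_\beta\sigo^\alpha{}_{\alpha\mu}=2n^\alpha\sone_\beta K_{\alpha\mu}+2n^\alpha\nablao_\beta K_{\alpha\mu}$. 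The crucial auxiliary observation is $n^\alpha K_{\alpha\mu}=\tfrac12\pounds_n\sone_\mu$, which follows from $\bsone(n)=0$ and $\nablao_a n^b=n^b\sone_a$. Combining this with the companion identity $n^\alpha\nablao_\alpha K_{ab}=\pounds_n K_{ab}-\sone_{(a}\pounds_n\sone_{b)}$ (derived in the same spirit), symmetrizing in $(\mu,\beta)$, and applying \eqref{Lie_n:nablao:cov:sym} with $\bs\theta=\bsone$ (which gives $\pounds_n K_{ab}=\nablao_{(a}\pounds_n\sone_{b)}$, since $\bsone(n)=0$) collapses the remaining $\nablao_{(\beta}\pounds_n\sone_{\mu)}$ term and produces \eqref{lie_n:Riemo}.

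Finally, \eqref{Lie:eta:nablao:nablao:alpha} follows by repeating the argument used for \eqref{Lie:alpha:n:nablao:theta} but now applied directly to $T_{ab}\defi\nablao_a\nablao_b\alpha$ (i.e.\ without the extra factor of $\alpha$). Specializing \eqref{Lie_n:nablao:cov:sym} to $\bs\theta=d\alpha$ and using $\pounds_n d\alpha=d(n(\alpha))$ yields $\pounds_n(\nablao_a\nablao_b\alpha) = \nablao_a\nablao_b(n(\alpha)) - 2n(\alpha)\nablao_{(a}\sone_{b)}$; a short computation of the contraction $n^c\nablao_c\nablao_b\alpha = \nablao_b(n(\alpha))-n(\alpha)\sone_b$ (again using $\nablao_a n^b = n^b\sone_a$) then produces \eqref{Lie:eta:nablao:nablao:alpha}.
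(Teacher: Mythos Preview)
Your proposal is correct and follows essentially the same approach as the paper: both exploit the simplification $\sigo^c{}_{ab}=2n^c\nablao_{(a}\sone_{b)}$ when $\bU=0$, derive \eqref{Lie_n:nablao:cov:sym} from the commutation formula \eqref{commutation}, obtain \eqref{Lie:alpha:n:nablao:theta} and \eqref{Lie:eta:nablao:nablao:alpha} via the rescaling identity for Lie derivatives combined with \eqref{n:nablao:theta:sym} and \eqref{Lie_n:nablao:cov:sym}, and prove \eqref{lie_n:Riemo} from \eqref{Lie:Sigma:and:Curvature} using $\nablao_c n^c=0$ and $n^c\nablao_{(a}\sone_{c)}=\tfrac12\pounds_n\sone_a$. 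The only cosmetic difference is that the paper recognizes $\sigo^c{}_{bc}=\pounds_n\sone_b$ directly before differentiating, whereas you carry the product $2n^\alpha K_{\alpha\mu}$ a step longer; the computations are otherwise identical.
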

\begin{proof}
Observe that in the case $\bU=0$ the tensor $\sigo$ simplifies to (cf.\ \eqref{sigo})
\begin{align}
\label{Sigmacero:U=0}{\sigo}{}^c{}_{ab}=&\spc 2n^c\nablao_{(a}\sone_{b)}.
\end{align}
Equation \eqref{Lie_n:nablao:cov:sym} is the symmetrization of \eqref{commutation} in the case $Z=n$, $D=\nablao$ and $T=\bs{\theta}$. For \eqref{Lie:alpha:n:nablao:theta} we make use of the following identity for any function $\alpha\in\Fcal(\N)$ and any symmetric $(0,2)$-tensor $T_{ab}$:\ 
\begin{align}
\label{lie:eta:lie:n}\pounds_{\alpha n}T_{ab}=\alpha\pounds_n T_{ab}+2(\nablao_{(a}\alpha)T_{b)c}n^c.
\end{align}
Therefore, 
\begin{align*}
\pounds_{\alpha n}\big(\alpha\nablao_{(a}\theta_{b)} \big)
=&\spc \alpha n(\alpha)\nablao_{(a}\theta_{b)}+\alpha\lp \alpha\pounds_n \nablao_{(a}\theta_{b)}+(\nablao_{a}\alpha)n^c\nablao_{(b}\theta_{c)}+(\nablao_{b}\alpha)n^c\nablao_{(a}\theta_{c)}\rp \\
\stackbin{\eqref{n:nablao:theta:sym}}=&\spc \alpha\lp  n(\alpha)\nablao_{(a}\theta_{b)}+ \alpha\pounds_n \nablao_{(a}\theta_{b)}\rp +\alpha (\nablao_{(a}\alpha)\lp\pounds_n\theta_{b)}+\nablao_{b)}\big(\bs{\theta}(n)\big)-2\bs{\theta}(n)\sone_{b)}\rp \\
=&\spc \alpha\bigg(  n(\alpha)\nablao_{(a}\theta_{b)}+\nablao_{(a}\big(\alpha\pounds_n\theta_{b)}\big)-2\alpha\bs{\theta}(n)\nablao_{(a}\sone_{b)} + (\nablao_{(a}\alpha)\lp\nablao_{b)}\big(\bs{\theta}(n)\big)-2\bs{\theta}(n)\sone_{b)}\rp \bigg),
\end{align*}    
from where \eqref{Lie:alpha:n:nablao:theta} follows at once.  
To prove \eqref{lie_n:Riemo} we first particularize \eqref{Lie:Sigma:and:Curvature} for $Z=n$, $D=\nablao$,  
and find $\pounds_n\Riemo_{(ab)}=\nablao_c{\sigo}{}^c{}_{ab}-\nablao_{(a}{\sigo}{}^c{}_{b)c}$.
Now, using \eqref{Sigmacero:U=0}, $\bsone(n)=0$ and the fact that $n^c\nablao_b\sone_c=-\sone_c\nablao_b n^c\stackbin{\eqref{nablao:n}}=-\sone_b\bsone(n)=0$ and $n^c \nablao_c s_b = \pounds_n \sone_b -(\nablao_b n^c)\sone_{c}\stackbin{\eqref{nablao:n}}= \pounds_n \sone_b$, one obtains
\begin{align*}
\nablao_a {\sigo}{}^c{}_{bc}
=&\spc \nablao_a\lp n^c\nablao_{c}\sone_{b}\rp 
=\nablao_a\lp \pounds_n\sone_{b}\rp=\pounds_n(\nablao_a\sone_b),
\end{align*}
where in the last step we have used \eqref{commutation} for $Z=n$, $D=\nablao$ and $T=\bsone$. On the other hand, since $2\big(\nablao_bn^c\big)\big(\nablao_{(a}\sone_{c)}\big)=2\sone_bn^c\big(\nablao_{(a}\sone_{c)}\big)=\sone_bn^c\nablao_{c}\sone_{a}=\sone_b\pounds_n\sone_a$ and $\nablao_cn^c=0$, 
it follows
\begin{align*}
\nablao_c{\sigo}{}^c{}_{ab}=&\spc 
2\Big( \pounds_n\big(\nablao_{(a}\sone_{b)}\big) -\big(\nablao_an^c\big)\big(\nablao_{(b}\sone_{c)}\big)-\big(\nablao_bn^c\big)\big(\nablao_{(a}\sone_{c)}\big)\Big)
=2\Big( \pounds_n\big(\nablao_{(a}\sone_{b)}\big)-\sone_{(a}\pounds_n\sone_{b)}\Big).
\end{align*}
From the last two identities, it is immediate to obtain \eqref{lie_n:Riemo}. Finally, to demonstrate \eqref{Lie:eta:nablao:nablao:alpha} we first use \eqref{lie:eta:lie:n} and then apply \eqref{Lie_n:nablao:cov:sym} with $\theta_b=\nablao_b\alpha$ (note that $\nablao_a\nablao_b\alpha$ is symmetric in $ab$):
\begin{align*}
\pounds_{\ovkil}\big(\nablao_a\nablao_b\alpha\big)=&\spc \alpha\pounds_n (\nablao_a\nablao_b\alpha)+2n^c(\nablao_{(a}\alpha)(\nablao_{b)}\nablao_c\alpha)\\
=&\spc \alpha\pounds_n (\nablao_a\nablao_b\alpha)+2(\nablao_{(a}\alpha)\lp \nablao_{b)}\big(n(\alpha)\big)-(\nablao_c\alpha)(\nablao_{b)}n^c)\rp \\
=&\spc \alpha\lp \nablao_{(a}(\pounds_n\nablao_{b)}\alpha)-2n(\alpha)\nablao_{(a}\sone_{b)}\rp +2(\nablao_{(a}\alpha)\lp \nablao_{b)}\big(n(\alpha)\big)-n(\alpha)\sone_{b)}\rp.
\end{align*}
This yields \eqref{Lie:eta:nablao:nablao:alpha} after noticing that $\pounds_n\nablao_b\alpha=\nablao_b\big(n(\alpha) \big)$ and reorganizing terms.
\end{proof}

\begin{lemma}\label{lem:B:times:n}
Let $\metdata$ be null metric hypersurface data with $\bU=0$. Consider a function $\alpha\in\Fcal(\N)$ and a covector field $\bs{\rho}\in\Gamma(T^{\star}\N)$. Define
\begin{align*}
\mathcal{B}_{ab}\defi \nablao_{a}\nablao_{b}\alpha
+2\rho_{(a}\nablao_{b)}\alpha
+\alpha\lp  \nablao_{(a}\rho_{b)}+\rho_{a}\rho_{b}-\frac{1}{2}\Riemo_{(ab)}+\frac{1}{2}\nablao_{(a}\sone_{b)}-\frac{1}{2}\sone_a\sone_b\rp. 
\end{align*}
Then,
\begin{align}
\label{Tensor:B:times:n}
\mathcal{B}_{ab}n^b=&\spc 
\nablao_a(n(\alpha))
-\lp n(\alpha)
+\alpha\bs{\rho}(n)\rp \lp  \sone_a-\rho_{a}\rp
+\bs{\rho}(n)\nablao_{a}\alpha
+\frac{\alpha}{2}\pounds_{n}\rho_a  
+\frac{\alpha}{2} \nablao_{a}\big(\bs{\rho}({n})\big).  
\end{align}
\end{lemma}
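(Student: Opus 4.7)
The plan is to simply contract each term of $\mathcal{B}_{ab}$ with $n^b$ and use the simplifications provided by the hypothesis $\bU=0$. Under this hypothesis, equation \eqref{nablao:n} reduces to $\nablao_a n^b = n^b \sone_a$, equation \eqref{n:nablao:theta:sym} simplifies to
\[
n^b \nablao_{(a}\theta_{b)} = \tfrac{1}{2}\pounds_n\theta_a + \tfrac{1}{2}\nablao_a(\bs{\theta}(n)) - \bs{\theta}(n)\sone_a,
\]
valid for any covector $\bs{\theta}$, and equation \eqref{Riemosym(n,-)} becomes $\Riemo_{(ab)}n^b = \tfrac{1}{2}\pounds_n\sone_a$. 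These three identities, together with $\bsone(n)=0$, are essentially all I need.

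The calculation proceeds term by term. First I would compute
\[
(\nablao_a\nablao_b\alpha)n^b = \nablao_a(n(\alpha)) - (\nablao_a n^b)\nablao_b\alpha = \nablao_a(n(\alpha)) - n(\alpha)\sone_a,
\]
which yields the first two contributions on the right-hand side of \eqref{Tensor:B:times:n} (the one proportional to $n(\alpha)\sone_a$ being part of the combined bracket). Next I would handle the linear-in-$\rho$ term trivially via $2\rho_{(a}\nablao_{b)}\alpha\cdot n^b = n(\alpha)\rho_a + \bs{\rho}(n)\nablao_a\alpha$. Then I would apply the simplified version of \eqref{n:nablao:theta:sym} to $\alpha\nablao_{(a}\rho_{b)}$ to produce the terms $\tfrac{\alpha}{2}\pounds_n\rho_a$, $\tfrac{\alpha}{2}\nablao_a(\bs{\rho}(n))$ and $-\alpha\bs{\rho}(n)\sone_a$, and treat $\alpha\rho_a\rho_b n^b = \alpha\bs{\rho}(n)\rho_a$ directly.

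The key observation — really the only non-mechanical step — is what happens to the last three terms in $\mathcal{B}_{ab}$. Contracting $-\tfrac{\alpha}{2}\Riemo_{(ab)}n^b$ gives $-\tfrac{\alpha}{4}\pounds_n\sone_a$, while applying the simplified \eqref{n:nablao:theta:sym} to $\tfrac{\alpha}{2}\nablao_{(a}\sone_{b)}n^b$ gives $+\tfrac{\alpha}{4}\pounds_n\sone_a$ (the extra $\bsone(n)$-terms vanish). These two contributions cancel exactly, and the final term $-\tfrac{\alpha}{2}\sone_a\sone_b n^b$ vanishes because $\bsone(n)=0$.

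Finally I would collect all surviving contributions: $\nablao_a(n(\alpha))$, $-(n(\alpha)+\alpha\bs{\rho}(n))\sone_a$, $(n(\alpha)+\alpha\bs{\rho}(n))\rho_a$, $\bs{\rho}(n)\nablao_a\alpha$, $\tfrac{\alpha}{2}\pounds_n\rho_a$ and $\tfrac{\alpha}{2}\nablao_a(\bs{\rho}(n))$, and factor the $\sone_a$ and $\rho_a$ terms together as $-(n(\alpha)+\alpha\bs{\rho}(n))(\sone_a-\rho_a)$ to obtain exactly \eqref{Tensor:B:times:n}. There is no real obstacle here — the statement is a purely algebraic identity whose content is the cancellation between the curvature term $\Riemo_{(ab)}n^b$ and the $\nablao_{(a}\sone_{b)}n^b$ contribution; once this cancellation is noticed, the rest is bookkeeping.
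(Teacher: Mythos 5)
Your proof is correct and follows exactly the paper's argument: contract with $n^b$ and use the $\bU=0$ reductions of \eqref{nablao:n}, \eqref{Riemosym(n,-)} and \eqref{n:nablao:theta:sym} (applied to $\bs{\rho}$ and $\bsone$) together with $\bsone(n)=0$, the cancellation of the $\pounds_n\bsone$ contributions being the same mechanism the paper relies on. Nothing to add.
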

\begin{proof}
Equation \eqref{Tensor:B:times:n} follows from the combination of  

\vspace{-0.6cm}

\noindent
\begin{minipage}[t]{0.5\textwidth}
	\begin{align}
		\label{hess:alpha:(n,.)}
		n^b\nablao_a\nablao_b\alpha\stackbin{\eqref{nablao:n}}=&\spc\nablao_a(n(\alpha))-n(\alpha)\sone_a,
	\end{align}
\end{minipage}
\hfill
\hfill
\begin{minipage}[t]{0.5\textwidth}
	\begin{align}
		\label{sym:ricco:n:proof}\Riemo_{(ab)}n^b\stackbin{\eqref{Riemosym(n,-)}}=&\spc\frac{1}{2}\pounds_n\sone_a,
	\end{align}
\end{minipage}
the identity \eqref{n:nablao:theta:sym} for $\bs{\theta}=\bs{\rho},\bsone$, and the fact that 
$\bsone(n)=0$.
\end{proof}
\begin{lemma}\label{lem:identity:lie:constraint}
Let $\hypdata$ be null hypersurface data with $\bU=0$ that is $(\phi,\rig)$-embedded in a Lorentzian manifold $(\M,g)$. Consider a gauge-invariant vector $\ovkil\defi \alpha n\in\Gamma(T\N), \, \alpha\in\Fcal(\N)$, and define $\kappa\defi n(\alpha)+\alpha\bomega(n)$ where $\bomega$ is given by \eqref{defY(n,.)andQ}. 
Assume that $\wt\bcalP$ and $\wt{\bs{\Rtensor}}$ are two symmetric $(0,2)$-tensor fields  satisfying
\begin{align}
\label{Pi(n,dot)=Lie:omega:new}\wt\bcalP(n,\cdot)=&\spc\alpha\pounds_n\bomega+\bomega(n)d\alpha+d\big(n(\alpha)\big),\\
\nn 
\kappa\Y_{ab}=&\spc
\wt{\calP}_{ab}-\nablao_{a}\nablao_{b}\alpha -2\omega_{(a}\nablao_{b)}\alpha \\
\label{kappa:Y:for:omega} &-\alpha \lp \nablao_{(a}\omega_{b)}+\omega_{a}\omega_{b}+\frac{1}{2}\wt{\mathcal{R}}_{ab}-\frac{1}{2}\Riemo_{(ab)}+\frac{1}{2}\nablao_{(a}\sone_{b)}-\frac{1}{2}\sone_a\sone_b\rp .
\end{align}
Then, the following identity holds:
\begin{align}
\nn 0=&\spc\Big(\pounds_{\ovkil}+\alpha\bomega(n)\Big)\wt\calP_{ab}-\alpha\nablao_{(a}\big(\wt\calP_{b)c}n^c\big)-\alpha\wt\bcalP(n,n)\Y_{ab}\\
\label{Lie:Pi:general}&-2\big(\alpha\omega_{(a}+\nablao_{(a}\alpha\big)\wt\calP_{b)c}n^c-\frac{\alpha}{2}\pounds_{\ovkil}\wt\Rtensor_{ab}+\frac{\alpha\kappa}{2}\lp \phi^{\star}\big(\textup{\textbf{Ric}}_g\big){}_{ab}-\wt\Rtensor_{ab}\rp.
\end{align}
\end{lemma}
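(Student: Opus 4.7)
The strategy is to reduce \eqref{Lie:Pi:general} to an abstract identity involving the genuine \isotensor\ $\bcalP^{\ovkil}$ and constraint tensor $\bs{\Rtensor}$, and then establish that identity by applying $\pounds_{\ovkil}$ to the near-horizon equation \eqref{ME:calP:U=0}.

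First I would combine the two hypotheses of the lemma with the specialization of Theorem \ref{thm:master:equation} to $\bU=0$. Subtracting \eqref{ME:calP:U=0} from \eqref{kappa:Y:for:omega} yields the purely algebraic relation $\wt\calP_{ab}-\calP^{\ovkil}_{ab}=\tfrac{\alpha}{2}(\wt\Rtensor_{ab}-\Rtensor_{ab})$. Contracting this with $n^b$ and using the $\bU=0$ specialization of \eqref{ME(n,.):calP} together with \eqref{ConstTensror(n,-)} (and the identities $\bomega(n)=\Q$, $\kappa=n(\alpha)+\alpha\bomega(n)$) shows that $\bcalP^{\ovkil}(n,\cdot)$ coincides with the right-hand side of \eqref{Pi(n,dot)=Lie:omega:new}. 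Consequently $\wt\bcalP(n,\cdot)=\bcalP^{\ovkil}(n,\cdot)$, and in particular $\alpha(\wt\Rtensor-\Rtensor)(n,\cdot)=0$.

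Next I would decompose \eqref{Lie:Pi:general} by writing $\wt\calP=\bcalP^{\ovkil}+\delta\calP$ and $\wt\Rtensor=\Rtensor+\delta\Rtensor$, so that the previous relation becomes $\delta\calP=\tfrac{\alpha}{2}\delta\Rtensor$ with $\delta\calP(n,\cdot)=0$. Every $n$-contraction of $\delta\calP$ in \eqref{Lie:Pi:general} drops, so the ``$\delta$-piece'' collapses to $(\pounds_{\ovkil}+\alpha\bomega(n))\delta\calP_{ab}-\tfrac{\alpha}{2}\pounds_{\ovkil}\delta\Rtensor_{ab}-\tfrac{\alpha\kappa}{2}\delta\Rtensor_{ab}$. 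Applying $\pounds_{\ovkil}(\alpha T)=\alpha n(\alpha)T+\alpha\pounds_{\ovkil}T$ with $T=\delta\Rtensor$ and using $\kappa=n(\alpha)+\alpha\bomega(n)$ makes this combination vanish identically. Since embeddedness gives $\phi^{\star}(\textup{\textbf{Ric}}_g)=\bs{\Rtensor}$, the lemma is therefore equivalent to establishing the abstract identity obtained from \eqref{Lie:Pi:general} by the substitutions $\wt\calP\mapsto\bcalP^{\ovkil}$, $\wt\Rtensor\mapsto\bs{\Rtensor}$ and by dropping the final $\tfrac{\alpha\kappa}{2}$-term.

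Finally, this core identity would be derived by applying $\pounds_{\ovkil}$ directly to \eqref{ME:calP:U=0} and expanding systematically. The four composite terms $\pounds_{\ovkil}(\nablao_a\nablao_b\alpha)$, $\pounds_{\ovkil}(\alpha\nablao_{(a}\omega_{b)})$, $\pounds_{\ovkil}(\alpha\nablao_{(a}\sone_{b)})$ and $\pounds_{\ovkil}(\alpha\Riemo_{(ab)})$ are handled with \eqref{Lie:eta:nablao:nablao:alpha}, \eqref{Lie:alpha:n:nablao:theta} and \eqref{lie_n:Riemo}; the Leibniz-type remainders are controlled by \eqref{Lie_n:nablao:cov:sym}. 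The Lie derivatives $\pounds_n\bomega$ and $\pounds_n\bsone$ that appear would be exchanged, via the $\bU=0$ specializations of \eqref{ConstTensror(n,-)} and \eqref{Riemosym(n,-)}, for $\bs{\Rtensor}(n,\cdot)$ and $\Riemo_{(ab)}n^b$, while the contractions $\bcalP^{\ovkil}(n,\cdot)$ and $\bcalP^{\ovkil}(n,n)$ are read off from \eqref{ME(n,.):calP}--\eqref{ME(n,n):calP}. The main obstacle is the combinatorial bookkeeping: roughly a dozen symmetrized blocks mixing second derivatives of $\alpha$ with products of $\bomega,\bsone,\nablao\alpha$ and curvature data must be correctly collected. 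Lemma \ref{lem:B:times:n} supplies precisely the grouping mechanism that collapses the Hessian-of-$\alpha$ contributions into their $n$-contractions, after which all surviving terms reorganize into the asserted right-hand side.
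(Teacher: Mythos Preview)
Your proposal is correct and its core computation---applying $\pounds_{\ovkil}$ to the near-horizon equation and expanding with the identities of Lemma~\ref{lem:sect:6:1}---coincides with what the paper does. The difference is organizational: the paper applies $\pounds_{\ovkil}$ directly to the hypothesis \eqref{kappa:Y:for:omega} (with the tilded quantities), replacing $\pounds_n\bY$ via the constraint-tensor definition \eqref{defabsRicci} (whence $\phi^{\star}(\textup{\textbf{Ric}}_g)$ enters) and substituting $\pounds_n\bomega$ through \eqref{Pi(n,dot)=Lie:omega:new}. You instead first subtract \eqref{ME:calP:U=0} from \eqref{kappa:Y:for:omega} to obtain $\wt\calP-\calP^{\ovkil}=\tfrac{\alpha}{2}(\wt\Rtensor-\Rtensor)$ with $\delta\calP(n,\cdot)=0$, show the $\delta$-piece of \eqref{Lie:Pi:general} vanishes algebraically, and then prove the residual ``core'' identity (with genuine $\bcalP^{\ovkil},\bs{\Rtensor}$) by Lie-differentiating \eqref{ME:calP:U=0}. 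Your decomposition makes transparent that the embedding hypothesis is used only to identify $\phi^{\star}(\textup{\textbf{Ric}}_g)$ with $\bs{\Rtensor}$, and that the substantive content is an abstract identity; the paper's direct route is shorter but hides this structure. Either way, the bookkeeping of symmetrized blocks is the same and is handled with the same tools.
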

\begin{proof}
Throughout the proof we shall make use of the identities 
\begin{align}
\label{lie:kil:theta:theta:new}\pounds_{\alpha n}(\alpha \hspace{0.05cm} \theta_a\theta_b)=&\spc \alpha \Big(n(\alpha)\theta_a\theta_b+2\theta_{(a}\pounds_{\alpha n}\theta_{b)}\Big),\hspace{0.7cm}\quad \forall\bs{\theta}\in \Gamma(T^{\star}\N),\\
\label{lie:eta:omega:nablao:alpha:new}\pounds_{\ovkil}\big(2\omega_{(a}\nablao_{b)}\alpha\big)=&\spc 2\big(\pounds_{\ovkil}\omega_{(a}\big)\big(\nablao_{b)}\alpha\big)+2\omega_{(a}\lp \alpha\nablao_{b)}\big( n(\alpha)\big)+n(\alpha)\nablao_{b)}\alpha \rp ,\\ 
\label{lie:eta:tilde:R:new}\pounds_{\ovkil}\lp \frac{\alpha}{2}\wt{\Rtensor}_{ab}\rp=&\spc \frac{\alpha}{2}\Big(n(\alpha)\wt{\Rtensor}_{ab}+\pounds_{\ovkil}\wt{\Rtensor}_{ab}\Big),\\
\label{lie:eta:Riemo:new} \pounds_{\ovkil}\lp \frac{\alpha}{2}\Riemo_{(ab)}\rp\stackbin[\eqref{lie:eta:lie:n}]{\eqref{lie_n:Riemo}}=&\spc \frac{\alpha}{2}\Big(n(\alpha)\Riemo_{(ab)}+\alpha\pounds_{n}\big(\nablao_{(a}\sone_{b)}\big)+\big(\nablao_{(a}\alpha-2\alpha\sone_{(a}\big)\pounds_n\sone_{b)}\Big),
\end{align}
First, notice that the contraction of \eqref{Pi(n,dot)=Lie:omega:new} with $n$ gives 
\begin{align}
\label{Pi(n,n):new}\wt\bcalP(n,n)&=\alpha n\big(\bomega(n)\big)+\bomega(n)n(\alpha)+n\big(n(\alpha)\big)=n\big(\alpha\bomega(n)+n(\alpha)\big)=n(\kappa).
\end{align}
Equation \eqref{Lie:Pi:general} is derived by computing the $\pounds_{\ovkil}$-derivative of \eqref{kappa:Y:for:omega}. For the l.h.s., one finds
\begin{align}
\nn\pounds_{\ovkil}\big(\kappa\Y_{ab}\big)=&\spc \alpha n(\kappa)\Y_{ab}+\kappa\pounds_{\ovkil}\Y_{ab}\stackbin[]{\eqref{Pi(n,n):new}}=\alpha \wt{\bcalP}(n,n)\Y_{ab}+2\kappa(\nablao_{(a}\alpha)(\sone_{b)}-\omega_{b)})+\alpha\kappa\pounds_n\Y_{ab}\\
\nn\stackbin{\eqref{defabsRicci}}=&\spc \alpha \wt{\bcalP}(n,n)\Y_{ab}+2\kappa(\nablao_{(a}\alpha)(\sone_{b)}-\omega_{b)})\\
\label{mid:eq:sdgyf}&+\alpha\kappa\lp \frac{1}{2}\Riemo_{(ab)}-\frac{1}{2}\phi^{\star}\big(\textup{\textbf{Ric}}_g\big){}_{ab}
- \bomega(n)\Y_{ab}-\nablao_{(a}\omega_{b)} -\omega_{a}\omega_{b}+  \frac{3}{2}\nablao_{(a}\sone_{b)}+\frac{1}{2}\sone_{a}\sone_{b}\rp.
\end{align}
For the r.h.s., we combine the definition of $\kappa$ with the general identities 
\eqref{Lie:eta:nablao:nablao:alpha}, 
\eqref{Lie:alpha:n:nablao:theta},   
\eqref{lie:kil:theta:theta:new} for $\bs{\theta}=\bomega,\bsone$, and 
\eqref{lie:eta:omega:nablao:alpha:new}-\eqref{lie:eta:Riemo:new}. Writing each derivative 
$\pounds_{n}\bomega$
in terms of $\wt\bcalP(n,\cdot)$ according to \eqref{Pi(n,dot)=Lie:omega:new}, and equating the left and right hand sides, \eqref{Lie:Pi:general} follows after several simplifications take place.
\end{proof}
Having the identities of Lemmas \ref{lem:sect:6:1}, \ref{lem:B:times:n} and \ref{lem:identity:lie:constraint} are at our disposal, we can now obtain the necessary and sufficient conditions for the existence of a Lorentzian manifold where a data set of the form $\{\N,\gamma,\ellc,\elltwo, \bomega,\alpha,{\bcalP},{\bs\Rtensor}\}$ is embedded. This is done in the next theorem. 
\begin{theorem}\label{thm:exist:pi:R}
Let $\{\N,\gamma,\ellc,\elltwo, \wt\bomega,\alpha,\wt {\bcalP},\wt{\bs\Rtensor}\}$ be a tuple consisting of 
\begin{itemize}
\item[$(i)$] null metric hypersurface data $\metdata$ with $\bU=0$,
\item[$(ii)$] a covector $\wt\bomega\in\Gamma(T^{\star}\H)$ with gauge behaviour $\G_{(z,V)}(\wt\bomega)=\wt\bomega-z^{-1}dz$, 
\item[$(iii)$] a function $\alpha\in\Fcal(\H)$ such that $\G_{(z,V)}(\alpha)=z\alpha$ and whose set of zeros has empty interior, and
\item[$(iv)$] two symmetric $(0,2)$-tensor fields $\wt {\bcalP},\wt{\bs\Rtensor}$ with gauge behaviours 
$$\G_{(z,V)}(\wt {\bcalP})=z\wt {\bcalP},\qquad
\G_{(z,V)}\big(\wt{\bs\Rtensor}\big)=\wt{\bs\Rtensor}.$$
\end{itemize}
Define 
$\ovkil\defi \alpha n$ and  
$\kappa\defi n(\alpha)+\alpha\wt\bomega(n)$, and suppose that $\kappa\neq0$ everywhere. 
Then,\ there exists a Lorentzian manifold $(\M,g)$,\ an embedding $\phi:\N\hookrightarrow\M$ and a rigging $\rig$ such that
\begin{itemize}
\item[\textit{1}.] $\metdata$ is $(\phi,\rig)$-embedded in $(\M,g)$;
\item[\textit{2}.] $\wt{\bomega}=\bsone-\frac{1}{2}\phi^{\star}\big(\pounds_{\rig}g\big)(n,\cdot)$;
\item[\textit{3}.] $\Sigmakil(\phi_{\star}X,\phi_{\star}W)=\wt\bcalP(X,W)\phi_{\star}n \, \, \forall X,W\in\Gamma(T\N)$, where $\kil$ is any extension of $\phi_{\star}\ovkil$ off $\phi(\N)$;
\item[\textit{4}.] $\phi^{\star}(\textup{\textbf{Ric}}_g)=\bs{\wt\Rtensor}$;
\end{itemize}
if and only if the following conditions hold:
\begin{align}
\label{Pi(n,dot)=Rtensor}\wt\bcalP(n,\cdot)=&\spc\alpha\wt{\bs\Rtensor}(n,\cdot)+d\kappa,\\
\label{Pi(n,dot)=Lie:omega}\wt\bcalP(n,\cdot)=&\spc\alpha\pounds_n\wt\bomega+\wt\bomega(n)d\alpha+d\big(n(\alpha)\big),\\
\nonumber 0=&\spc\Big(\pounds_{\ovkil}+\alpha\wt\bomega(n)\Big)\wt\calP_{ab}-\alpha\nablao_{(a}\big(\wt\calP_{b)c}n^c\big)-\alpha\wt\bcalP(n,n)\Y_{ab}\\
\label{Lie:Pi}&-2\big(\alpha\wt\omega_{(a}+\nablao_{(a}\alpha\big)\wt\calP_{b)c}n^c-\frac{\alpha}{2}\pounds_{\ovkil}\wt\Rtensor_{ab}.
\end{align}
where $\bY$ is the symmetric $(0,2)$-tensor defined by 
\begin{align}
\label{kappa:Y:prev} \hspace{-0.2cm}\kappa\Y_{ab}=
\wt{\calP}_{ab}-\nablao_{a}\nablao_{b}\alpha -2\wt{\omega}_{(a}\nablao_{b)}\alpha -\alpha \lp \nablao_{(a}\wt{\omega}_{b)}+\wt{\omega}_{a}\wt{\omega}_{b}+\frac{1}{2}\wt{\mathcal{R}}_{ab}-\frac{1}{2}\Riemo_{(ab)}+\frac{1}{2}\nablao_{(a}\sone_{b)}-\frac{1}{2}\sone_a\sone_b\rp.
\end{align}
Moreover, the pair $(\phi,\rig)$ satisfies $\frac{1}{2}\phi^{\star}\big(\pounds_{\rig}g\big)=\bY$, hence the null hypersurface data $\hypdata$ is also $(\phi,\rig)$-embedded in $(\M,g)$. 
\end{theorem}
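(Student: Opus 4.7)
The overall strategy is to split the proof into necessity and sufficiency, using the generalized near-horizon equation \eqref{ME:calP:U=0} as the main algebraic identity and the technical Lemmas \ref{lem:B:times:n} and \ref{lem:identity:lie:constraint} for the contractions with $n$ and for the propagation equation involving the Ricci tensor.

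\emph{Necessity $(\Rightarrow)$.} Properties 1--4 give the identifications $\wt\bomega=\bomega$, $\wt\bcalP=\bcalP^{\ovkil}$, $\wt{\bs{\Rtensor}}=\bs{\Rtensor}$. Specializing Theorem \ref{thm:master:equation} to $\bU=0$ then yields \eqref{kappa:Y:prev} directly from \eqref{ME:calP} and the content of \eqref{Pi(n,dot)=Rtensor} from \eqref{ME(n,.):calP} after dropping the $\bU$-terms. Condition \eqref{Pi(n,dot)=Lie:omega} follows by combining \eqref{Pi(n,dot)=Rtensor} with the contraction \eqref{ConstTensror(n,-)} at $\bU=0$, namely $\bs{\Rtensor}(n,\cdot)=-d\Q+\pounds_n\bomega$, together with the identity $d\kappa=d(n(\alpha))+\Q\,d\alpha+\alpha\,d\Q$ and $\Q=\bomega(n)$. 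Finally, \eqref{Lie:Pi} is exactly the identity produced by Lemma \ref{lem:identity:lie:constraint} once its final term $\frac{\alpha\kappa}{2}\bigl(\phi^{\star}(\textbf{\textup{Ric}}_g)-\wt{\bs{\Rtensor}}\bigr)$ is eliminated using property 4.

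\emph{Sufficiency $(\Leftarrow)$.} Define $\bY$ by \eqref{kappa:Y:prev}. First one checks that this $\bY$ satisfies the gauge law \eqref{gaugeY}, so that $\hypdata$ is genuine null hypersurface data; then one invokes a standard embedding result (e.g.\ \cite[Thm.~4.2]{mars2024transverseII} combined with the freedom in prescribing the first transverse derivative of $g$) to obtain $(\M,g,\phi,\rig)$ realizing $\hypdata$, giving property 1 and the auxiliary identity $\frac{1}{2}\phi^{\star}(\pounds_{\rig}g)=\bY$. For property 2, I would contract \eqref{kappa:Y:prev} with $n^b$ and apply Lemma \ref{lem:B:times:n} with $\bs{\rho}=\wt\bomega$: using both \eqref{Pi(n,dot)=Rtensor} and \eqref{Pi(n,dot)=Lie:omega} to eliminate $\wt\bcalP(n,\cdot)$, the resulting expression collapses to $\kappa\bY(n,\cdot)=\kappa(\bsone-\wt\bomega)$, whence $\bomega=\bsone-\bY(n,\cdot)=\wt\bomega$ since $\kappa$ is nowhere zero. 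With $\bomega=\wt\bomega$ established, Lemma \ref{lem:identity:lie:constraint} becomes applicable; its identity combined with \eqref{Lie:Pi} yields $\alpha\kappa\bigl(\phi^{\star}(\textbf{\textup{Ric}}_g)-\wt{\bs{\Rtensor}}\bigr)=0$, and since $\kappa\neq0$ while the zero-set of $\alpha$ has empty interior, continuity forces $\phi^{\star}(\textbf{\textup{Ric}}_g)=\wt{\bs{\Rtensor}}$, proving property 4. Finally, subtracting \eqref{kappa:Y:prev} from \eqref{ME:calP:U=0} (which now holds in $(\M,g)$ with the same $\bomega$ and the same $\bs{\Rtensor}$) gives $\bcalP^{\ovkil}=\wt\bcalP$, and the formula \eqref{Sigma:general:xprssn} for $\Sigmakil$ at $\bU=0$ reduces to $\Sigmakil(\phi_{\star}X,\phi_{\star}W)=\bcalP^{\ovkil}(X,W)\phi_{\star}n$ independently of the chosen extension of $\phi_{\star}\ovkil$, yielding property 3.

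\emph{Main obstacle.} The most delicate step is the verification of property 4. The abstract constraint tensor $\bs{\Rtensor}$ computed from $\hypdata$ via \eqref{defabsRicci} automatically coincides with $\phi^{\star}(\textbf{\textup{Ric}}_g)$ once the data is embedded, but it depends on $\bY$ (and therefore on the choice \eqref{kappa:Y:prev}) in a sufficiently intricate way that no direct algebraic comparison with $\wt{\bs{\Rtensor}}$ seems possible. The role of Lemma \ref{lem:identity:lie:constraint}---which becomes applicable only after the preliminary identification $\bomega=\wt\bomega$ is secured---combined with the propagation condition \eqref{Lie:Pi}, is precisely to bridge this gap by converting the would-be algebraic mismatch into the vanishing of $\alpha\kappa\bigl(\phi^{\star}(\textbf{\textup{Ric}}_g)-\wt{\bs{\Rtensor}}\bigr)$ at every point of $\N$.
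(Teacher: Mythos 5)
Your proposal is correct and follows essentially the same route as the paper's own proof: the same use of the generalized near-horizon equation to fix $\bY$, Lemma \ref{lem:B:times:n} for the $n$-contraction establishing $\bomega=\wt\bomega$, Lemma \ref{lem:identity:lie:constraint} together with \eqref{Lie:Pi}, $\kappa\neq 0$ and the empty-interior condition on the zero set of $\alpha$ to recover $\phi^{\star}(\textbf{\textup{Ric}}_g)=\wt{\bs\Rtensor}$, and the embedding theorem of \cite{mars2024transverseII} with $\frac{1}{2}\phi^{\star}(\pounds_{\rig}g)=\bY$ prescribed as the first transverse derivative. The only cosmetic difference is your extra (unneeded but harmless) check of the gauge law for $\bY$.
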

\begin{remark}
As already mentioned, we are interested in allowing for the existence of zeroes of $\ovkil$ on $\N$. However, we do want to exclude the possibility that $\ovkil$ vanishes on open subsets of $\N$, since on such subsets the available information would be very limited (e.g., $\alpha$ and $\bcalP^{\ovkil}$ would be identically zero), and some of the previous results would no longer hold. 
Hence, we require that the zero set of $\alpha$ has empty interior on $\N$ (cf.\ $(iii)$). This assumption is consistent with the well-known fact that Killing vectors cannot vanish on any codimension-one subset of a spacetime, unless they are identically zero.
\end{remark}
\begin{remark}
For the quantities $\alpha$, $\wt\bomega$, $\wt\bcalP$ and $\wt{\bs{\Rtensor}}$ to satisfy \textit{1}.-\textit{4}.\ a posteriori, one needs to prescribe the gauge-behaviours of $(ii)$-$(iv)$ from the outset (recall \eqref{gauge:omega}, \eqref{gauge:bqone:alpha:p:w}, \eqref{gauge:calP} and the fact that, for embedded data, the constraint tensor is the pull-back of the ambient Ricci tensor, hence it is gauge-invariant). 
\end{remark}
\begin{remark}\label{rem:Raychaudhuri}
The contraction of \eqref{Pi(n,dot)=Rtensor} with $n$ yields $\wt\bcalP(n,n)=\alpha\wt{\bs{\Rtensor}}(n,n)+n(\kappa)$. On the other hand, the contraction of \eqref{Pi(n,dot)=Lie:omega} with $n$ is (cf.\ \eqref{Pi(n,n):new}) 
$\wt\bcalP(n,n)=n(\kappa)$. Since the set of zeroes of $\alpha$ has empty interior on $\N$, it follows that \eqref{Pi(n,dot)=Rtensor}-\eqref{Pi(n,dot)=Lie:omega} imply $\wt{\bs{\Rtensor}}(n,n)=0$. This is consistent with the Raychaudhuri equation for totally geodesic null hypersurfaces, namely \eqref{ConstTensror(n,n)} with $\bU=0$.
\end{remark}
\begin{proof}
Suppose that \textit{1}.--\textit{4}. hold and define 
\begin{align*}
    \mathbb{Y}\defi \frac{1}{2}\phi^{\star}\big(\pounds_{\rig}g\big),\qquad \mathbb{w}\defi \bsone-\mathbb{Y}(n,\cdot),\qquad \mathbb{k}_n\defi -\mathbb{Y}(n,n),\qquad \mathbb{k}\defi n(\alpha)+\alpha \mathbb{k}_n.
\end{align*}
Consider any extension $\kil$ of $\phi_{\star}\ovkil$ away from $\phi(\N)$, and define the deformation tensor $\Kkil\defi\pounds_{\kil}g$, its components $\p\defi \phi^{\star}\lp\Kkil(\rig,\rig)\rp$, $\bqone\defi\phi^{\star}\lp\Kkil(\rig,\cdot)\rp$,   and the tensor $\Sigmakil\defi \pounds_{\kil}\nabla$. By construction, $\{\H,\gamma,\ellc,\elltwo,\mathbb{Y},\alpha,\p,\bqone\}$ is  
a \textphdata $(\phi,\rig)$-embedded in $(\M,g)$, so in particular it verifies \eqref{ME:calP} with constraint tensor $\bs{\wt{\Rtensor}}$ (by \textit{4}.) 
and $\bcalP^{\ovkil}=\wt\bcalP$ (by \textit{3}.\ and \eqref{pullback:sigmakil}), i.e.\ 
\begin{align}
\label{Me:mathbbY}
\hspace{-0.33cm}
\mathbb{k}\mathbb{Y}_{ab}=&\spc 
\wt{\calP}_{ab}
-\nablao_a\nablao_b\alpha
-2(\nablao_{(a}\alpha)\mathbb{w}_{b)}
-\alpha\Bigg(
\nablao_{(a}  \mathbb{w}_{b)}
+\mathbb{w}_a\mathbb{w}_b
-\frac{1}{2}\Riemo_{(ab)}
+\frac{1}{2}\wt\Rtensor_{ab} 
+\frac{1}{2}\nablao_{(a} \sone_{b)} 
-\frac{1}{2}\sone_a\sone_b\Bigg).
\end{align}
Now, \textit{2}.\ entails $\wt{\bomega}=\bsone-\mathbb{Y}(n,\cdot)\defi \mathbb{w}$, hence $\kappa=n(\alpha)+\alpha\mathbb{k}_n=\mathbb{k}$.
The combination of \eqref{Me:mathbbY} and the fact that $\kappa\neq0$ then yields $\mathbb{Y}=\bY$ (recall \eqref{kappa:Y:prev}), so that $\{\N,\gamma,\ellc,\elltwo,\mathbb{Y}=\bY\}$ is hypersurface data $(\phi,\rig)$-embedded in $(\M,g)$. Contracting \eqref{Me:mathbbY} with $n^b$ and using Lemma \ref{lem:B:times:n} for $\bs{\rho}=\wt\bomega$ and the definition of $\kappa$, one obtains 
\begin{align}
\label{mideq:iff:first:new:2}
0=&\spc 
\wt{\calP}_{ab}n^b
-\nablao_a(n(\alpha))
-\wt\bomega(n)\nablao_{a}\alpha
-\frac{\alpha}{2}\pounds_{n}\wt\omega_a  
-\frac{\alpha}{2} \nablao_{a}\big(\wt\bomega({n})\big)
-\alpha\wt\Rtensor_{ab}n^b
+\frac{\alpha}{2}\wt\Rtensor_{ab}n^b
\end{align}
where we have added and substracted $\frac{\alpha}{2}\wt{\Rtensor}_{ab}n^b$. Substituting the plus term using \eqref{ConstTensror(n,-)} with $\bomega=\wt\bomega$ and $\Q=\wt\bomega(n)$, equation \eqref{mideq:iff:first:new:2} becomes \eqref{Pi(n,dot)=Rtensor}. 
Once \eqref{Pi(n,dot)=Rtensor} is established, \eqref{Pi(n,dot)=Lie:omega} follows directly after rewriting $\wt{\Rtensor}_{ab}n^b$ according to \eqref{ConstTensror(n,-)} and using the definition of $\kappa$. Finally, the hypersurface data $\hypdata$ satisfies the hypotheses of Lemma \ref{lem:identity:lie:constraint} for  
$\bomega=\wt\bomega$.  
So, \eqref{Lie:Pi:general} holds, and since $\phi^{\star} (\textbf{\textup{Ric}}_g) = \wt{\bs{\Rtensor}}$, \eqref{Lie:Pi} is established. 
Thus, 
\textit{1}.-\textit{4}.\ $ \, \Longrightarrow \, $ \eqref{Pi(n,dot)=Rtensor}-\eqref{Lie:Pi}. Observe that the argument also shows that  \textit{1}.-\textit{4}.\ $ \, \Longrightarrow \, \frac{1}{2}\phi^{\star}(\pounds_{\rig}g)=\bY$.

Let us now prove the converse, namely that \eqref{Pi(n,dot)=Rtensor}-\eqref{Lie:Pi} $ \, \Longrightarrow \, $ \textit{1}.-\textit{4}..  
By \cite[Thm.\ 4.2]{mars2024transverseII} we know that, given null metric hypersurface data together with \textit{any} sequence $\{\mathbb{Y}^{(k)}\}_{k\in\mathbb{N}}$ of symmetric $(0,2)$-tensor fields on $\N$, there exists a Lorentzian manifold $(\M,g)$, an embedding $\phi:\N\longhookrightarrow\M$ and an affine-geodesic rigging $\rig$ so that the data is $(\phi,\rig)$-embedded in $(\M,g)$ and $\mathbb{Y}^{(k)}=\frac{1}{2}\phi^{\star}\big(\pounds_{\rig}^{(k)}g\big)$ for all $k\in\mathbb{N}$. We select the first term $\mathbb{Y}^{(1)}$ of the sequence $\{\mathbb{Y}^{(k)}\}_{k\geq1}$  to be given by $\bY$, i.e.\ $\mathbb{Y}^{(1)}=\bY$. By construction, $\{\N,\gamma,\ellc,\elltwo, \mathbb{Y}^{(1)}=\bY\}$ is $(\phi,\rig)$-embedded in $(\M,g)$, so in particular \textit{1}.\ holds. 
Notice that, at this point, the objects $\wt{\bs{\Rtensor}}$, $\wt{\bs{\Pi}}$,  $\wt{\bomega}$ are unrelated to the quantities $\bs{\Rtensor}$, $\bcalP^{\ovkil}$ and $\bomega$ constructed from $\{\gamma,\ellc,\elltwo, \mathbb{Y}^{(1)}=\bY\}$ according to \eqref{defabsRicci}, \eqref{def:calP:ovsigmakil} and \eqref{defY(n,.)andQ}, respectively. In fact, our main task is to demonstrate that the objects with and without tilde actually coincide. 
Since the constraint tensor for embedded hypersurface data satisfies
$\phi^{\star}(\textup{\textbf{Ric}}_g)=\bs{\Rtensor}$, 
we have 
\begin{align}
\label{Pullback:Ric:U=0}\phi^{\star}\big(\textup{\textbf{Ric}}_g\big){}_{ab}\stackbin{\eqref{defabsRicci}}=  \Riemo_{(ab)}- 2 \pounds_{{n}} \Y_{ab}
- 2\Q \Y_{ab}- 2\nablao_{(a}\omega_{b)} -2\omega_{a}\omega_{b}+  3\nablao_{(a}\sone_{b)}+\sone_{a}\sone_{b},
\end{align}
where $\bomega\defi \bsone-\bY(n,\cdot)$ and $\Q\defi-\bY(n,n)$. Now, consider again any extension $\kil$ of $\phi_{\star}\ovkil$ away from $\phi(\N)$, and let $\Sigmakil\defi \pounds_{\kil}\nabla$.  
Define the symmetric $(0,2)$-tensor $\bcalP^{\ovkil}$ according to 
\eqref{calP:usual:expression}, i.e.\
\begin{align}
\label{def:Pi:U=0} \calP^{\ovkil}_{ab}\defi \nablao_a\nablao_b\alpha+2(\nablao_{(a}\alpha)\omega_{b)}+2\alpha\nablao_{(a}\sone_{b)}+n(\alpha)\Y_{ab}-\alpha(\pounds_{n}\bY)_{ab},
\end{align}
so that 
\begin{align}
\label{last}\Sigmakil(e_a,e_b)\stackbin{\eqref{Sigma:general:xprssn}}=&\spc \lp \nablao_a\nablao_b\alpha+2(\nablao_{(a}\alpha)\omega_{b)}+2\alpha\nablao_{(a}\sone_{b)}+n(\alpha)\Y_{ab}-\alpha(\pounds_{n}\bY)_{ab}\rp \phi_{\star}n=\calP^{\ovkil}_{ab}\hspace{0.05cm}\phi_{\star}n.
\end{align}
Let us prove that $\bomega=\wt\bomega$. This will be a consequence of the fact that $\bY$ satisfies \eqref{kappa:Y:prev}. Contracting \eqref{kappa:Y:prev} 
with $n^b$ and using \eqref{Pi(n,dot)=Rtensor} to get rid of $\wt\Rtensor_{ab}n^b$, and Lemma \ref{lem:B:times:n} for $\bs{\rho}=\wt\bomega$,  
one finds
\begin{align*}
\kappa \lp \sone_{a}-\omega_{a}\rp =&\spc 
\frac{1}{2}\wt\calP_{ab}n^b
+\frac{1}{2}\nablao_a\kappa
-\nablao_a(n(\alpha))
+\kappa \lp  \sone_a-\wt\omega_{a}\rp
-\wt\bomega(n)\nablao_{a}\alpha
-\frac{\alpha}{2}\pounds_{n}\wt\omega_a  
-\frac{\alpha}{2} \nablao_{a}\big(\wt\bomega({n})\big)\\
\stackbin{\eqref{Pi(n,dot)=Lie:omega}}=&\spc 
\kappa \lp  \sone_a-\wt\omega_{a}\rp
\end{align*}
where in the last step we used the definition of $\kappa$ to get rid of the term $\nablao_a\kappa$. Since $\kappa\neq0$, we conclude that $\wt\bomega= \bomega\defi \bsone-\bY(n,\cdot)=\bsone-\frac{1}{2}\phi^{\star}(\pounds_{\rig}g)$, hence item \textit{2}.\ holds. Now, the data $\{\N,\gamma,\ellc,\elltwo,\mathbb{Y}^{(1)}=\bY\}$ fulfills Lemma \ref{lem:identity:lie:constraint} for $\wt\bcalP$, $\wt{\bs{\Rtensor}}$ and $\bomega=\wt\bomega$, so in particular \eqref{Lie:Pi:general} holds. This, together with \eqref{Lie:Pi}, $\kappa\neq0$ and the fact that the zero-set of $\alpha$ has empty interior ensures that $\phi^{\star}\big(\textup{\textbf{Ric}}_g\big)=\wt{\bs{\Rtensor}}$, which proves \textit{4}.. Item \textit{3}.\ then follows from combining \eqref{kappa:Y:prev},  \eqref{Pullback:Ric:U=0}, \eqref{def:Pi:U=0} and \eqref{last}. This concludes the proof of \eqref{Pi(n,dot)=Rtensor}-\eqref{Lie:Pi}
 $ \, \Longrightarrow \, $ \textit{1}.-\textit{4}..   

It only remains to show that the pair $(\phi,\rig)$ satisfies $\frac{1}{2}\phi^{\star}(\pounds_{\rig}g)=\bY$ both under conditions \textit{1}.-\textit{4}.\ and \eqref{Pi(n,dot)=Rtensor}-\eqref{Lie:Pi}. We have already shown that \textit{1}.-\textit{4}.\ $ \, \Longrightarrow \, \frac{1}{2}\phi^{\star}(\pounds_{\rig}g)=\bY$, so it only remains to prove that \eqref{Pi(n,dot)=Rtensor}-\eqref{Lie:Pi} $ \, \Longrightarrow \, \frac{1}{2}\phi^{\star}(\pounds_{\rig}g)=\bY$. Suppose that, from \eqref{Pi(n,dot)=Rtensor}-\eqref{Lie:Pi}, one constructs any Lorentzian manifold $(\M,g)$ satisfying \textit{1}.-\textit{4}.\ but with $\frac{1}{2}\phi^{\star}(\pounds_{\rig}g)=\bY'\neq\bY$. Then \textit{1}.-\textit{4}.\  $ \, \Longrightarrow \, \bY=\frac{1}{2}\phi^{\star}(\pounds_{\rig}g)=\bY'\neq \bY $, which is a contradiction. 
\end{proof}
Null hypersurfaces that admit a foliation by cross-sections are particularly interesting in General Relativity. For instance, in the standard characteristic initial value problem one prescribes data on two null hypersurfaces that intersect transversely on a spacelike surface, which forces them to admit cross-sections. In the following theorem, we address the case when $\N$ has product topology $S\times \mathbb{R}$, $S$ being a cross-section. We show that, given totally geodesic null metric data, together with suitable data on $S$ plus 
two scalar functions and a symmetric $(0,2)$-tensor on the hypersurface, the existence of a Lorentzian manifold realizing the prescribed quantities is guaranteed.
\begin{theorem}\label{thm:existence:sections}
Let $\metdata$ be null metric hypersurface data with $\bU=0$. Assume that $\N$ admits 
a cross-section $\iota:S\longhookrightarrow\N$, i.e.\ a codimension-one submanifold which is intersected precisely once by each integral curve of $n$.
Let $z\in\Fcal^{\star}(\N)$, $V\in\Gamma(T\N)$ be gauge parameters, and consider 
\begin{itemize}
    \item[\textup{(I)}] 
    two functions 
    $\kappa,f\in\Fcal(\H)$
    with gauge behaviour
    $\G_{(z,V)}(\kappa)=\kappa$, $\G_{(z,V)}(f) = \frac{1}{z}\lp f - \frac{n(z)}{z}\rp$, and such that $\kappa$ vanishes nowhere on $\N$;
    \item[\textup{(II)}] a function $\ov\alpha\in\Fcal(S)$ and a covector $\bkilone\in\Gamma(T^{\star}S)$, such that 
    $\G_{(z,V)}(\ov\alpha)= z\vert_S \,   \ov\alpha$, $\G_{(z,V)}(\bkilone) = \bkilone - (z^{-1}dz)\vert_S$;
    \item[\textup{(III)}] a symmetric $(0,2)$-tensor field 
    $\mathcal{T}$ in $S$ 
    with gauge behaviour 
    $\G_{(z,V)}(\mathcal{T})= z\vert_S \, \mathcal{T}$;
    \item[\textup{(IV)}] a symmetric, gauge-invariant $(0,2)$-tensor field $\wt{\bs{\Rtensor}}$ in $\N$ 
    such that $\wt{\bs{\Rtensor}}(n,n)=0$.  
\end{itemize}
Then, there exists a unique function $\alpha\in\Fcal(\N)$, one-form $\wt\bomega\in\Gamma(T^{\star}\N)$ and symmetric $(0,2)$-tensor $\wt\bcalP$ on $\N$ satisfying \eqref{Pi(n,dot)=Rtensor}-\eqref{Lie:Pi} and  
\begin{align}
\wt\bomega(n)=f,\qquad 
\iota^{\star}\alpha=\ov\alpha, \qquad 
\iota^{\star}\wt\bomega=\bkilone,\qquad  
\iota^{\star}\wt\bcalP=\mathcal{T},\qquad  
\kappa=n(\alpha)+\alpha \wt\bomega(n).
\end{align}
Moreover, 
$\G_{(z,V)}(\alpha)=z\alpha$, $\G_{(z,V)}(\wt\bomega)=\wt\bomega-z^{-1}dz$, and $\G_{(z,V)}(\wt\bcalP)=z\wt\bcalP$. Therefore, by Theorem \ref{thm:exist:pi:R} there exists a Lorentzian manifold $(\M,g)$ and an embedding pair $(\phi,\rig)$, such that conditions \textit{1}.-\textit{4}.\ hold, and $\frac{1}{2}\phi^{\star}\big(\pounds_{\rig}g\big)=\bY$, with $\bY$ given by \eqref{kappa:Y:prev}.
\end{theorem}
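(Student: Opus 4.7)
The strategy is to produce $\alpha$, $\wt\bomega$, and $\wt\bcalP$ by solving three successive linear ODEs along the null generators of $\N$, using the data prescribed on the cross-section $S$ as initial values and invoking Theorem \ref{thm:exist:pi:R} at the end. First, the prescription $\kappa = n(\alpha) + \alpha f$ reads as a linear first-order ODE for $\alpha$ in the direction of $n$; together with $\iota^{\star}\alpha = \ov\alpha$ it has a unique smooth solution on all of $\N$. Standard uniqueness of ODE solutions, together with $\G_{(z,V)}(n) = z^{-1}n$, $\G_{(z,V)}(\kappa) = \kappa$, $\G_{(z,V)}(f) = z^{-1}(f - n(z)/z)$ and $\G_{(z,V)}(\ov\alpha) = z|_S\,\ov\alpha$, forces $\G_{(z,V)}(\alpha) = z\alpha$.

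Next, I build \eqref{Pi(n,dot)=Rtensor} into the construction by \emph{defining} $\wt\bcalP(n,\cdot) := \alpha\,\wt{\bs{\Rtensor}}(n,\cdot) + d\kappa$ on $\N$. Substituting this into \eqref{Pi(n,dot)=Lie:omega} and using $d(n(\alpha)) = d\kappa - f\,d\alpha - \alpha\,df$, the equation reduces to $\alpha(\wt{\bs{\Rtensor}}(n,\cdot) + df - \pounds_n\wt\bomega) = (\wt\bomega(n) - f)\,d\alpha$. I therefore prescribe the unique covector on $S$ satisfying $\iota^{\star}\wt\bomega = \bkilone$ and $\wt\bomega(n)|_S = f|_S$, and integrate the ODE $\pounds_n\wt\bomega = \wt{\bs{\Rtensor}}(n,\cdot) + df$ along the generators. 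Contracting this with $n$ yields $n(\wt\bomega(n)) = \wt{\bs{\Rtensor}}(n,n) + n(f) = n(f)$ (by hypothesis (IV)), so the scalar $\wt\bomega(n) - f$ is preserved along $n$ and thus vanishes identically since it does on $S$. Hence \eqref{Pi(n,dot)=Lie:omega} holds globally, and the prescribed gauge behaviours of $\bkilone$ and $f$ propagate by ODE uniqueness to give $\G_{(z,V)}(\wt\bomega) = \wt\bomega - z^{-1}dz$.

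For $\wt\bcalP$ I expand \eqref{Lie:Pi} using $\pounds_{\ovkil}\wt\calP_{ab} = \alpha\pounds_n\wt\calP_{ab} + 2(\nablao_{(a}\alpha)\wt\calP_{b)c}n^c$ and the analogous identity for $\pounds_{\ovkil}\wt\Rtensor_{ab}$. The $\nablao\alpha$-terms cancel, every remaining term carries an overall factor of $\alpha$, and division by $\alpha$ on the open dense set $\{\alpha\neq 0\}$ yields
\begin{align*}
\pounds_n\wt\calP_{ab} = &\, -\wt\bomega(n)\wt\calP_{ab} + \nablao_{(a}(\wt\calP_{b)c}n^c) + \wt\bcalP(n,n)\Y_{ab} + 2\wt\omega_{(a}\wt\calP_{b)c}n^c \\
&+ \tfrac{\alpha}{2}\pounds_n\wt\Rtensor_{ab} + (\nablao_{(a}\alpha)\wt\Rtensor_{b)c}n^c,
\end{align*}
which by continuity extends smoothly to all of $\N$. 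This is a linear ODE for $\wt\bcalP$ (since $\bY$ depends linearly on $\wt\bcalP$ via \eqref{kappa:Y:prev} and $\kappa \neq 0$), and I integrate it from initial data $\iota^{\star}\wt\bcalP = \mathcal{T}$ augmented by the values $\wt\bcalP(n,\cdot)|_S = \alpha|_S\,\wt{\bs{\Rtensor}}(n,\cdot)|_S + d\kappa|_S$ fixing the remaining components of $\wt\bcalP|_S$.

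The last step, which I expect to be the main technical hurdle, is to verify that \eqref{Pi(n,dot)=Rtensor} actually propagates along the flow of $n$, so that the built-in definition of $\wt\bcalP(n,\cdot)$ is consistent with the integrated $\wt\bcalP$. Setting $\mathcal{E}_a := \wt\calP_{ab}n^b - \alpha\,\wt{\Rtensor}_{ab}n^b - \nablao_a\kappa$, one has $\mathcal{E}|_S = 0$, and I would contract the $\wt\bcalP$-ODE above with $n^a$, use that $n$ is geodesic (since $\bU = 0$, via \eqref{nablao:n}) together with \eqref{n:nablao:theta:sym} and Lemma \ref{lem:sect:6:1} to reorganize the symmetrized covariant derivatives, and invoke $\wt\bomega(n) = f$ and $\wt{\bs{\Rtensor}}(n,n) = 0$ to show that $\mathcal{E}$ satisfies a homogeneous linear ODE along $n$; uniqueness then forces $\mathcal{E} \equiv 0$. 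Once this consistency is in hand, the gauge behaviour $\G_{(z,V)}(\wt\bcalP) = z\wt\bcalP$ follows from uniqueness and the known gauge covariance of the right-hand side and of $\mathcal{T}$, and Theorem \ref{thm:exist:pi:R} supplies the spacetime $(\M,g)$ and the embedding pair $(\phi,\rig)$ with all claimed properties.
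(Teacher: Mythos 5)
Your treatment of $\alpha$ and $\wt\bomega$ coincides with the paper's (the same ODEs $n(\alpha)=\kappa-\alpha f$ and $\pounds_n\wt\bomega=\wt{\bs{\Rtensor}}(n,\cdot)+df$, the same check that $\wt\bomega(n)-f$ propagates trivially, and the same gauge argument by ODE uniqueness). The genuine gap is in the construction of $\wt\bcalP$. The equation you extract from \eqref{Lie:Pi} after dividing by $\alpha$ is correct (it is the paper's \eqref{last:proof}), but it is \emph{not} a linear ODE system for the full tensor along the generators: the term $\nablao_{(a}\big(\wt\calP_{b)c}n^c\big)$ contains derivatives of the components $\wt\bcalP(n,\cdot)$ in directions transverse to $n$ (and, via \eqref{n:nablao:theta:sym}, also hides a copy of $\pounds_n$ of those components), so the $(v_A,v_B)$-evolution is coupled to spatial derivatives of the unknown $(n,\cdot)$-components and the system is a PDE whose well-posedness as a characteristic evolution is not automatic. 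Worse, the $n^an^b$-contraction of that equation is degenerate: the $\pounds_n\wt\bcalP(n,n)$ contributions cancel identically and what remains is the algebraic relation $0=\wt\bcalP(n,n)\big(\wt\bomega(n)+\bY(n,n)\big)$, so \eqref{Lie:Pi} provides no evolution equation for $\wt\bcalP(n,n)$ at all. The paper avoids both problems by building $\wt\bcalP$ hierarchically: $\wt\bcalP(n,n):=n(\kappa)$ explicitly, then $\wt\bcalP(n,v_A)$ from an ODE (\eqref{constr:Pi(n,cdot)}) which is deliberately \emph{not} the raw $(n,v_A)$-contraction of \eqref{Lie:Pi} but a rearrangement whose right-hand side contains only known quantities and $\wt\bcalP(n,v_A)$ algebraically, and only then $\wt\bcalP(v_A,v_B)$ from \eqref{constr:piAB}, where the transverse derivatives now act on already-determined data. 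Your alternative of substituting the a priori formula $\wt\bcalP(n,\cdot)=\alpha\wt{\bs{\Rtensor}}(n,\cdot)+d\kappa$ into the offending terms could in principle restore a genuine ODE, but then the propagation of $\mathcal{E}_a$ that you flag as the ``main technical hurdle'' becomes the entire content of the proof, and you have not shown that it closes into a homogeneous system; in the paper the analogous closure rests on the specific identity \eqref{last:Y(n,.)} obtained by contracting \eqref{kappa:Y:prev} with $n$ via Lemma \ref{lem:B:times:n}, which your sketch does not invoke.

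A secondary but non-negligible omission: the gauge behaviour $\G_{(z,V)}(\wt\bcalP)=z\wt\bcalP$ does not follow from ``the known gauge covariance of the right-hand side''---establishing that covariance is itself the hard part. The paper needs the transformation law \eqref{gauge:kappa:Y:-:pi} for $\kappa\bY-\wt\bcalP$, which in turn requires proving that $\Riemo_{(ab)}-\nablao_{(a}\sone_{b)}+\sone_a\sone_b$ is gauge invariant when $\bU=0$ (combining \eqref{gauge:sone}, \eqref{gauge:nablao} and the change-of-connection formula for Ricci tensors). Without this input the ODE-uniqueness argument for the gauge behaviour of $\wt\bcalP$ cannot be run.
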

\begin{remark}
Since $\ov\alpha,\bkilone,\mathcal{T}$ are only defined on $S$, their gauge behaviours are only imposed therein.
\end{remark}
\begin{proof}
Consider a basis $\{v_A\}$ of $\Gamma(TS)$, and extend it to $\N$ by solving $\pounds_nv_A=0$. Then $\{n,v_A\}$ form a basis of $\Gamma(T\N)$ because  $\det\big(\gamma(v_A,v_B)\big)\vert_S\neq0$ and $\pounds_n\big(\gamma(v_A,v_B)\big)=(\pounds_n\gamma)(v_A,v_B)=2\bU(v_A,v_B)=0$. 
We now construct a function $\alpha$, a one-form $\wt\bomega$ and a symmetric $(0,2)$-tensor $\wt\bcalP$ such that conditions \eqref{Pi(n,dot)=Rtensor}-\eqref{Lie:Pi} of Theorem \ref{thm:exist:pi:R} hold:
\begin{itemize}
    \item[1.] We let $\alpha\in\Fcal(\H)$ be the only solution to the ODE $n(\alpha)=\kappa-\alpha f$ with initial data $\alpha \, \stackbin{S}= \, \ov\alpha$. Observe that the set $\{p\in\N \, \colon \, \alpha(p)=0\}$ has empty interior on $\N$ because  $n(\alpha)\vert_p=\kappa\vert_p\neq0$ at any point $p\in\N$ where $\alpha(p)=0$. 
    \item[2.] To construct the one-form $\wt\bomega\in\Gamma(T^{\star}\H)$, we set its component along $n$ to be given by $\wt\bomega(n):= f$, and the remaining components to be the unique solutions of the system of ODEs 
    \begin{align}   
    \label{constr:omega(vA)}\pounds_n\big(\wt\bomega(v_A)\big)=\wt{\bs{\Rtensor}}(n,v_A)+v_A(f), 
    \qquad \text{with} \qquad \wt\bomega(v_A) \, \stackbin{S}= \, 
    \bkilone(v_A).
    \end{align}
    Note that $\kappa=n(\alpha)+\alpha\wt\bomega(n)$ by construction.
    \item[3.] For  
    the tensor $\wt\bcalP$, we proceed as follows: 
    \begin{itemize}
        \item[$(a)$] We set  
        $\wt\bcalP(n,n)\defi  n(\kappa)$.
        \item[$(b)$] We let the 
        components $\wt\bcalP(n,v_A)$ be the unique solutions of the system of ODEs
        \begin{align}
        \nn 0=
        \Big(&\pounds_n\big(\wt\calP_{ab}n^b\big)
        -\nablao_a\big(\wt\bcalP(n,n)\big)
        +2\wt\bcalP(n,n)\big(\sone_a-\wt\omega_{a}-\Y_{ab}n^b\big)\\
        \label{constr:Pi(n,cdot)} &
        -\alpha\pounds_n\big(\wt{\Rtensor}_{ab}n^b\big)
        -n(\alpha)\wt{\Rtensor}_{ab}n^b\Big)v_A^a, 
        \end{align}
        with initial data $\wt\bcalP(n,v_A) \, \stackbin{S}= \, \ov\alpha \,  \wt{\bs{\Rtensor}}(n,v_A)+v_A(\kappa)$, and where $\bY$ is defined by \eqref{kappa:Y:prev}. Observe that \eqref{constr:Pi(n,cdot)} involves metric hypersurface data, the quantities $\alpha$, $\wt\bomega$, $\wt\bcalP(n,n)$ and $\wt{\bs{\Rtensor}}$ which are already defined on $\N$, and the components $\wt\bcalP(n,v_A)$ (note that, besides in the term $\pounds_n\big(\wt\bcalP(n,v_A)\big)$, they also 
        appear algebraically inside $\bY(n,\cdot)$). The key point is that no terms of the form  $\wt\bcalP(v_A,v_B)$ are present, hence \eqref{constr:Pi(n,cdot)} yields a unique solution for the components $\wt\bcalP(n,v_A)$. 
        \item[$(c)$] We let the  
        components $\wt\bcalP(v_A,v_B)$ be the unique solutions of the system of ODEs
    \end{itemize}
        \begin{align}
        \label{constr:piAB} \hspace{-0.9cm}0=\bigg( \big(\pounds_n  +\wt\bomega(n) \big) \wt\calP_{ab}-\nablao_{(a}\big(\wt\calP_{b)c}n^c\big)-\wt\bcalP(n,n)\Y_{ab}-2\wt\omega_{(a}\wt\calP_{b)c}n^c
        -\frac{\alpha}{2}\pounds_n\wt{\Rtensor}_{ab}-(\nablao_{(a}\alpha)\wt{\Rtensor}_{b)c}n^c\bigg) v_A^av_B^b,
        \end{align}
    \begin{itemize}
    \item[] with initial data $\wt\bcalP(v_A,v_B) \, \stackbin{S}= \, \mathcal{T}(v_A,v_B)$, and where again $\bY$ is given by \eqref{kappa:Y:prev}.
    \end{itemize}
\end{itemize}
Now that  $\alpha,\wt\bomega,\wt\bcalP,\wt{\bs{\Rtensor}}$ are defined everywhere on $\N$, let us show that (I)-(IV) $\, \Longrightarrow \, $ \eqref{Pi(n,dot)=Rtensor}-\eqref{Lie:Pi}. 
First, notice that \eqref{constr:Pi(n,cdot)} can be rewritten as 
    \begin{align}
    \label{constr:Pi(n,cdot):rewritten} 0=
    \Big(&\pounds_n\big(\wt\calP_{ab}n^b-\alpha\wt{\Rtensor}_{ab}n^b\big)
    -\nablao_a\big(\wt\bcalP(n,n)\big)
    +2\wt\bcalP(n,n)\big(\sone_a-\wt\omega_{a}-\Y_{ab}n^b\big)\Big)v_A^a.
    \end{align}
On the other hand, contracting 
\eqref{kappa:Y:prev} with $n^b$, applying Lemma \ref{lem:B:times:n}
with $\bs{\rho}=\wt\bomega$, and using 
$n(\alpha)+\alpha\wt\bomega(n)=\kappa$ gives
    \begin{align}
    \label{last:Y(n,.)} -\kappa\big(\sone_a-\wt\omega_a-\Y_{ab}n^b\big)=\wt\calP_{ab}n^b-\frac{\alpha}{2}\pounds_n\wt\omega_a-\nablao_a\big( n(\alpha)\big) -\frac{\alpha}{2}\nablao_a\big(\wt\bomega(n)\big)-\wt\bomega(n)\nablao_a\alpha-\frac{\alpha}{2}\wt{\Rtensor}_{ab}n^b.
    \end{align}
Inserting \eqref{last:Y(n,.)} into \eqref{constr:Pi(n,cdot):rewritten} and using \eqref{constr:omega(vA)}, $(a)$ and the fact that $v_A\big(n(\kappa)\big)=n\big(v_A(\kappa)\big)$, one obtains 
    \begin{align}
    \nn 0=&\spc 
    \pounds_n\Big(\wt\bcalP(n,v_A)-\alpha\wt{\bs{\Rtensor}}(n,v_A)-v_A(\kappa)\Big)
    -\frac{2n(\kappa)}{\kappa}\Big(\wt\bcalP(n,v_A)-\alpha\wt{\bs{\Rtensor}}(n,v_A)-v_A(\kappa)\Big).
    \end{align}
This is a system of ODEs for $\wt\bcalP(n,v_A)-\alpha\wt{\bs{\Rtensor}}(n,v_A)-v_A(\kappa)$ along the null generators of $\H$ with vanishing initial data on $S$. Hence 
$\wt\bcalP(n,v_A)-\alpha\wt{\bs{\Rtensor}}(n,v_A)-v_A(\kappa)=0$ is satisfied everywhere on $\N$. This proves that the contraction of \eqref{Pi(n,dot)=Rtensor} with $v_A$ holds. 
For \eqref{Pi(n,dot)=Rtensor} to hold it only remains to verify that its contraction with $n$ holds true, namely that
$\wt\bcalP(n,n)=\alpha \wt{\bs{\Rtensor}}(n,n)+n(\kappa)$. This is automatically satisfied because $\wt{\bs{\Rtensor}}(n,n)=0$ and we have enforced $\wt\bcalP(n,n)=n(\kappa)$. 

To demonstrate \eqref{Pi(n,dot)=Lie:omega} we combine \eqref{Pi(n,dot)=Rtensor}, \eqref{constr:omega(vA)} and $\kappa=n(\alpha)+\alpha\wt\bomega(n)$ to get 
\begin{align*}
0=&\spc \wt\bcalP(n,v_A)
-\alpha\wt{\bs{\Rtensor}}(n,v_A)
-v_A\big(\kappa\big)= \wt\bcalP(n,v_A)
-\alpha\Big(\pounds_{n}\big(\wt\bomega(v_A)\big)-v_A\big(\wt\bomega(n)\big)\Big)
-v_A\big(n(\alpha)+\alpha\wt\bomega(n)\big)\\
=&\spc\wt\bcalP(n,v_A)
-\alpha\pounds_{n}\big(\wt\bomega(v_A)\big)
-v_A(\alpha)\wt\bomega(n)
-v_A\big(n(\alpha)\big).
\end{align*}
This establishes the validity of  \eqref{Pi(n,dot)=Lie:omega} contracted with $v_A$. 
Its contraction with $n$ is 
$\wt\bcalP(n,n)=
\alpha n\big(\wt\bomega(n)\big)+\wt\bomega(n)n(\alpha)+n\big(n(\alpha)\big)
=n\big(\kappa\big)$, which is fulfilled by construction. Hence, \eqref{Pi(n,dot)=Lie:omega} also holds. 

Finally, to show that \eqref{Lie:Pi} is satisfied, we first note that \eqref{Lie:Pi} can be rewritten  as 
    \begin{align}
    0=\big( \pounds_n +\wt\bomega(n)\big) \wt\calP_{ab}-\nablao_{(a}\big(\wt\calP_{b)c}n^c\big)-\wt\bcalP(n,n)\Y_{ab}
    -2\wt\omega_{(a}\wt\calP_{b)c}n^c
    -\frac{\alpha}{2}\pounds_n\wt{\Rtensor}_{ab}-(\nablao_{(a}\alpha)\wt{\Rtensor}_{b)c}n^c \label{last:proof}
    \end{align}
by using that the set $\{p\in\H \, : \, \alpha(p)=0\}$ has empty interior on $\N$,  together with \eqref{lie:eta:lie:n} for $T=\wt\bcalP,\wt{\bs{\Rtensor}}$. 
To establish \eqref{last:proof} we first note that its contraction with $v^a_A v^b_B$ is just \eqref{constr:piAB}, so it is verified by construction.
Its contraction with $v_A^an^b$ takes the form 
\eqref{constr:Pi(n,cdot)} after using \eqref{n:nablao:theta:sym} for $\bs{\theta}=\wt\bcalP(n,\cdot)$ and $\wt{\bs{\Rtensor}}(n,n)=0$, so it is also verified.  
Finally, the contraction of \eqref{last:proof} with $n^an^b$ is just  $0=\wt\bcalP(n,n)\big(\wt\bomega(n)+\bY(n,n)\big)$,
but $\wt\bomega(n)+\bY(n,n)=0$ (as follows from multiplying \eqref{last:Y(n,.)} by $n^a$ and using $\kappa=n(\alpha)+\alpha\wt\bomega(n)$
and $\wt\bcalP(n,n)=n(\kappa)$). 
So, all contractions of \eqref{last:proof} are satisfied, 
and \eqref{Lie:Pi} holds. By how $\alpha,\wt\bomega,\wt\bcalP$ have been built, it is immediate that they satisfy $\wt\bomega(n)=f$, $\iota^{\star}\alpha=\ov\alpha$, $\iota^{\star}\wt\bomega=\bkilone$ and $\iota^{\star}\wt\bcalP=\mathcal{T}$. 

Let us prove that  
$\alpha, \wt\bomega, \wt\bcalP$ have the gauge behaviours claimed in the theorem (and which are in accordance with the hypotheses
$(ii)$-$(iv)$ in Theorem \ref{thm:exist:pi:R}). 
We consider gauge parameters $z\in\Fcal^{\star}(\N),V\in\Gamma(T\N)$ and denote $\G_{(z,V)}$-transformed quantities with a prime. 
\begin{itemize}
    \item[1.] 
    Using \eqref{gaugen} and the gauge transformation of $\kappa$ and $f$, it is immediate to get 
    \begin{align*}
    n(\alpha)+\alpha f=
    \kappa=\kappa'=
    n'(\alpha')+\alpha'f'
    =n\lp \frac{\alpha'}{z}\rp  
    +\frac{\alpha'}{z} f 
    \quad \Longrightarrow \quad 
    0= n\lp \frac{\alpha'}{z}-\alpha\rp  
    +\lp \frac{\alpha'}{z} -\alpha\rp f.
    \end{align*}
    This is an ODE for  
    $z^{-1}\alpha'-\alpha$ along the  
    generators of $\N$, with vanishing initial data on $S$. Therefore,  
    $\alpha'=z\alpha$ holds everywhere on $\N$. 
    \item[2.] To obtain the gauge-behaviour of $\wt\bomega$, we first notice that it satisfies 
    \begin{align}
    \label{bomega:auxiliary:eq}    
    \pounds_n\wt\bomega=\wt{\bs{\Rtensor}}(n,\cdot)+df.
    \end{align}
    Indeed, 
    its contraction 
    with $n$ is automatically satisfied because $\wt{\bs{\Rtensor}}(n,n)=0$, and its contraction with 
    $v_A$ is just 
    \eqref{constr:omega(vA)}.
    The gauge-transformed $\wt\bomega'$ satisfies \eqref{bomega:auxiliary:eq}   with primes everywhere.
    Combining  \eqref{bomega:auxiliary:eq} and the gauge behaviour of $f$ and $\wt{\bs{\Rtensor}}$, one finds   
    \begin{align*}
    \pounds_{n}\lp\wt\bomega'-\lp \wt\bomega-\frac{dz}{z}\rp\rp\stackbin{ \eqref{gaugen}}=&\spc  z\pounds_{n'}\wt\bomega'+\wt\bomega'(n')dz
    -\pounds_{n} \wt\bomega
    +\frac{1}{z} d\big( n(z)\big)
    -\dfrac{n(z)}{z^2} dz\\
    \stackbin[\eqref{bomega:auxiliary:eq}']{\eqref{bomega:auxiliary:eq}}=
    &\spc  
    zdf' +f'dz-df
    +\frac{1}{z} d\big( n(z)\big)
    -\dfrac{n(z)}{z^2} dz
    =0.
    \end{align*}
    Again, since $\wt\bomega'-\lp \wt\bomega-z^{-1}dz\rp=0$ on $S$, 
    it follows that $\wt\bomega'=\wt\bomega-z^{-1}dz$ everywhere  on $\N$. 
    \item[3.] Finally, to derive the transformation law of  $\wt\bcalP$, we first note that 
    \begin{align}
    z^{-1}\wt\bcalP'(n,\cdot)
    \stackbin{\eqref{gaugen}}=
    \wt\bcalP'(n',\cdot)
    \stackbin{\eqref{Pi(n,dot)=Rtensor}'}=
    \alpha'\wt{\bs\Rtensor}'(n',\cdot)+d\kappa'
    =\alpha\wt{\bs\Rtensor}(n,\cdot)+d\kappa
    \stackbin{\eqref{Pi(n,dot)=Rtensor}}=\wt\bcalP(n,\cdot),
    \end{align}
    so $\wt\bcalP'(n,\cdot)=z\wt\bcalP(n,\cdot)$. Observe that this  
    states 
    that $\wt\bcalP(n,\cdot)$ is gauge-invariant, 
    which in turn is compatible with having imposed $n(\kappa)=\wt\bcalP(n,n)$.
    To obtain the gauge behaviour of the components $\wt\bcalP(v_A ,v_B)$, we 
    compute the $\G_{(z,V)}$-transformation of \eqref{last:proof}. This requires the following two gauge-behaviours:
    \begin{align}
    \label{gauge:nablao:pi:n}\lp\nablao_{(a}\big( \wt\calP_{b)c}n^c\big)\rp'&=
    \nablao_{(a}\big( \wt\calP_{b)c}n^c\big)-\frac{\wt\bcalP(n,n)}{2z}\lp \pounds_{zV}\gamma_{ab}+2\ell_{(a}\nablao_{b)}z\rp,\\
    \label{gauge:kappa:Y:-:pi}\lp\kappa \Y_{ab}-\wt\calP_{ab}\rp'&=\kappa\lp z\Y_{ab}+\ell_{(a}\nablao_{b)}z+\frac{1}{2}\pounds_{zV}\gamma_{ab}\rp
    -z\wt\calP_{ab}.
    \end{align}
    The first one uses \eqref{gauge:nablao} and the already mentioned gauge-invariance of $\wt\bcalP(n,\cdot)$. 
    The second one requires a longer computation. An important step in the derivation is the property that $\Riemo_{(ab)}-\nablao_{(a}\sone_{b)}+s_as_b$ is gauge invariant when $\bU=0$. This was already established under $\G_{(z,0)}$-transformations in \cite[Cor.\ 2.5]{mars2024transverseII}. For $\G_{(1,V)}$-transformations, the result is a consequence of \eqref{gauge:sone}, \eqref{gauge:nablao} and $\nablao_{a}\gamma_{bc}\stackbin{\eqref{nablaogamma}}=0$, together with the general relation  $\Riemo{}'_{ab}= \Riemo{}_{ab}+2 \nablao_{[c} C^c{ }_{b]a}+2 C^c{ }_{d[c} C^d{ }_{b]a}$ 
    between the Ricci tensors of two connections that differ by a tensor $C$. In the present case $C\defi \nablao{}'-\nablao=z^{-1}   n\otimes \big( \frac{1}{2}\pounds_{zV}\gamma+\ellc\otimes_s dz\big)$ by \eqref{gauge:nablao}. With the gauge invariance of $\Riemo_{(ab)}-\nablao_{(a}\sone_{b)}+s_as_b$ at hand, expression \eqref{gauge:kappa:Y:-:pi} follows from \eqref{kappa:Y:prev} after using \eqref{gauge:nablao}, the already established gauge behaviours $\alpha' =  z \alpha$ and
    $\wt\bomega' = \wt\bomega - z^{-1} dz$ as well as the gauge invariance of $\wt{\bs{\Rtensor}}$ and $\kappa$. We now compute
    \begin{align*}
    \frac{1}{z}\pounds_{n}\wt\calP'_{ab}\stackbin{\eqref{gaugen}}=&\spc \pounds_{n'}\wt\calP'_{ab}+\frac{2\nablao_{(a}z}{z}\wt\calP_{b)c}n^c\stackbin{\eqref{last:proof}'}=
    -\wt\bomega'(n')\wt\calP'_{ab}
    +\nablao{}'_{(a}\big(\wt\calP'_{b)c}n'^c\big)
    +\wt\bcalP'(n',n')\Y'_{ab}\\
    &+2\wt\omega'_{(a}\wt\calP'_{b)c}n'^c
    +\frac{\alpha'}{2}\pounds_{n'}\wt{\Rtensor}'_{ab}
    +(\nablao{}'_{(a}\alpha')\wt{\Rtensor}'_{b)c}n'^c 
    +\frac{2\nablao_{(a}z}{z}\wt\calP_{b)c}n^c\\
    \stackbin[\eqref{gauge:kappa:Y:-:pi}]{\eqref{gauge:nablao:pi:n}}=&\spc 
    -\frac{1}{z}\lp\wt\bomega(n)-\frac{n(z)}{z}\rp \wt\calP'_{ab}
    +\nablao_{(a}\big(\wt\calP_{b)c}n^c\big)
    +\wt\bcalP(n,n)\lp  \Y_{ab}+\frac{1}{\kappa z}\lp\wt\calP'_{ab}-z\wt\calP_{ab}\rp\rp
    \\
    &+2\wt\omega_{(a}\wt\calP_{b)c}n^c
    +\frac{\alpha}{2}\pounds_{n}\wt{\Rtensor}_{ab}
    +(\nablao_{(a}\alpha)\wt{\Rtensor}_{b)c}n^c\\
    \stackbin{\eqref{last:proof}}=&\spc -\frac{1}{z}\lp\wt\bomega(n)-\frac{n(z)}{z}\rp \wt\calP'_{ab}
    +\big(\pounds_n+\wt\bomega(n)\big) \wt\calP_{ab}
    +\frac{\wt\bcalP(n,n)}{\kappa z}\lp \wt\calP'_{ab}-z\wt\calP_{ab}\rp\\
    \Longrightarrow\quad 0=&\spc 
    \lp \pounds_n+\wt\bomega(n)-\frac{n(z)}{z}-\frac{\wt\bcalP(n,n)}{\kappa }\rp\lp \wt\calP'_{ab}-z\wt\calP_{ab}\rp.
    \end{align*}
    This is a system of ODEs for all components of $\wt\bcalP'-z\wt\bcalP$ along the null generators, with vanishing initial data on $S$. Therefore, $\G_{(z,V)}\big(\wt\bcalP\big)=z\wt\bcalP$ everywhere on $\N$.
\end{itemize}
We conclude that the data $\{\N,\gamma,\ellc,\elltwo,\alpha,\wt\bomega,\wt\bcalP,\wt{\bs{\Rtensor}}\}$ satisfies all the hypotheses of Theorem \ref{thm:exist:pi:R}, and the last statement in the theorem is justified. 
\end{proof}
We conclude the paper with some comments on Theorems \ref{thm:exist:pi:R} and \ref{thm:existence:sections}.
\begin{remark}\label{rem:1}
The only restriction imposed on the tensor $\wt{\bs{\Rtensor}}$ prescribed in Theorems \ref{thm:exist:pi:R} and \ref{thm:existence:sections} is $\wt{\bs{\Rtensor}}(n,n)=0$. This, as mentioned in Remark \ref{rem:Raychaudhuri}, is necessary for $\wt{\bs{\Rtensor}}$ to be consistent with the Raychaudhuri equation \eqref{ConstTensror(n,n)} with $\bU=0$. Except for this, no other condition is imposed, such as e.g.\ $\wt{\bs{\Rtensor}}$ being zero o proportional to $\gamma$. 
Theorems \ref{thm:exist:pi:R} and \ref{thm:existence:sections} 
are hence independent of any field equations. 
\end{remark}
\begin{remark}\label{rem:2}
In \textup{\cite{mars2024transverseI,mars2024transverseII}}, the authors introduce the notion of abstract Killing horizon data. This consists of null metric hypersurface data $\metdata$ with $\bU=0$, a smooth function $\alpha\in\Fcal(\N)$ whose zero set has empty interior on $\N$, and a one-form $\bs{\tau}\in\Gamma(T^{\star}\N)$ satisfying the conditions
\begin{align*}
(a)\quad \alpha\pounds_n\bs{\tau}=n(\alpha)\bs{\tau}-\bs{\tau}(n)d\alpha,
\qquad \text{and} \qquad 
(b)\quad \alpha^{-1}(\bs{\tau}-d\alpha)\text{ extends smoothly to all }\N.
\end{align*}
The link between the one-form $\bs{\tau}$  
and $\bomega\defi \sone-\bs{r}$ 
is given by $\bs{\tau}=d\alpha+\alpha\bomega$, so condition $(b)$ is automatically satisfied in our case. On the other hand, condition $(a)$ gets rewritten in terms of $\bomega$ as 
\begin{align*}
0=\pounds_n\bomega+\bomega(n)d\alpha+d\big(n(\alpha)\big).
\end{align*}
This corresponds to \eqref{Pi(n,dot)=Lie:omega} with $\wt\bcalP=0$, which is consistent with the fact that Killing horizons have vanishing \isotensor (cf.\ Remark \ref{rem:horizons}). The problem of existence of a $\Lambda$-vacuum spacetime where a given abstract Killing horizon data can be embedded was analyzed in \textup{\cite[Thm.\ 5.8]{mars2024transverseII}}, 
where it was found that it is sufficient to complement $(a)$-$(b)$ with the condition that $\bs{\tau}(n)$ is constant and non-zero. Since $\bs{\tau}(n)$ is precisely $\kappa$, this constancy of $\kappa$ is just \eqref{Pi(n,dot)=Rtensor}. No trace of condition \eqref{Lie:Pi} appeared in \textup{\cite{mars2024transverseII}}, as in the $\Lambda$-vacuum case 
$\wt{\bs{\Rtensor}} = \Lambda \gamma$ and hence \eqref{Lie:Pi} is automatically satisfied whenever $\wt{\bcalP} =0$. 

All in all,  the results in this section can be viewed as a generalization from the Killing horizon case in $\Lambda$-vacuum to the case of arbitrary vector fields $\bar{\eta}$ with no restrictions on the field equations. For such general setting we have demonstrated that \eqref{Pi(n,dot)=Rtensor} and \eqref{Pi(n,dot)=Lie:omega} are not sufficient by themselves and they must be supplemented with \eqref{Lie:Pi} in order to guarantee the existence of a spacetime where the data can be embedded.
\end{remark}
\begin{remark}\label{rem:3}
When the null metric hypersurface data $\metdata$ admits a cross-section $S$, the function $f$ in Theorem \ref{thm:existence:sections} can be chosen to be zero without loss of generality. Indeed, for any choice of $f$, the ODE $n(\ln(z))\stackrel{\N}{=}f$ admits a unique solution $z$ for given initial data $z\vert_S\neq0$ on $S$ (and $z$ must be no-where zero, because if $z(p)=0$ at some $p\in\N$ then $n(z)\vert_p=0$, and $z$ would vanish everywhere along the null generator $\sigma_p$ containing $p$, in particular at $\sigma_p\cap S$). One can therefore take $z$, $V\in\Gamma(T\N)$ as gauge parameters, and construct a Lorentzian manifold $(\mathcal{M},g)$ according to Theorem \ref{thm:existence:sections}, with $f'\defi \G_{(z,V)}(f)=0$. Note that $n(\alpha)=\kappa\neq0$ in this case, which implies that $\alpha$ vanishes at most at one point along each null generator of $\N$.  
\end{remark}
\begin{remark}\label{rem:4}
Theorem \ref{thm:existence:sections} naturally raises the question of under which conditions the constructed Lorentzian manifold $(\M,g)$ admits a horizon, and if so, what are its defining properties. To address this issue, let us consider null metric  
data $\metdata$ with $\bU=0$ and admitting a cross-section $S$,  and 
prescribe $\kappa$, $f$, $\ov{\alpha}$, $\bkilone$ and $\mathcal{T}=0$\footnote{Note that this requirement is gauge-invariant, as $\G_{(z,V)}(\mathcal{T})\stackbin{S}=z\mathcal{T}$. }
satisfying conditions \textup{(I)-(III)} of Theorem \ref{thm:existence:sections}, with the additional restriction that $n(\kappa)=0$ on $\N$. We can then build the function $\alpha$ as in the proof of Theorem \ref{thm:existence:sections} (note that $\wt{\bs{\Rtensor}}$ plays no role in this construction), 
and define the vector field $\ovkil \defi \alpha n$ on $\N$. 
Our aim now is to first find a suitable tensor $\wt{\bs{\Rtensor}}$ fulfilling condition \textup{(IV)} of Theorem \ref{thm:existence:sections}, and then construct the manifold 
$(\M,g)$  and an extension $\kil\in\Gamma(T\M)$ of $\phi_{\star}\ovkil$ so that

\vspace{-0.55cm}

\noindent
\begin{minipage}[t]{0.5\textwidth}
	\begin{align}
		\Kkil\stackbin{\phi(\N)} = 0; \label{eq:(1)}
	\end{align}
\end{minipage}
\hfill
\hfill
\begin{minipage}[t]{0.5\textwidth}
	\begin{align}
		\phi^{\star}(\pounds_{\rig}\Kkil) = 0.  \label{eq:(2)} 
	\end{align}
\end{minipage}
 
We let $\{n,v_A\}$ be a basis of $\Gamma(T\N)$ such that $\pounds_nv_A\stackbin{\N}=0$ and $v_A\vert_S\in\Gamma(TS)$,  
and $\wt{\bs{\Rtensor}}$ be a smooth, symmetric and gauge-invariant $(0,2)$-tensor field on $\N$ satisfying

\vspace{-0.6cm}

\noindent
\begin{minipage}[t]{0.4\textwidth}
\begin{align}
\label{construction:Rtensor:horizon:final:1}
\alpha\wt{\bs{\Rtensor}}(n,\cdot)+d\kappa\stackbin{\N}=0,
\end{align}
\end{minipage}
\hfill
\begin{minipage}[t]{0.6\textwidth}
\begin{align}
\label{construction:Rtensor:horizon:final:2}
\Big(\alpha\pounds_{ n}\wt{\bs{\Rtensor}}+2d\alpha \otimes_s \wt{\bs{\Rtensor}}(n,\cdot)\Big)(v_A,v_B) \stackbin{\N}= 0.
\end{align}
\end{minipage}

\vspace{-0.14cm}

Observe that the contraction of \eqref{construction:Rtensor:horizon:final:1} with $n$ yields $\alpha\wt{\bs{\Rtensor}}(n,n)\stackbin{\N}=0$, and since the zero set of $\alpha$ has empty interior on $\N$ (see the proof of Theorem \ref{thm:existence:sections}),  
$\wt{\bs{\Rtensor}}(n,n)\stackbin{\N}=0$ and 
$\wt{\bs{\Rtensor}}$ is 
in accordance with \textup{(IV)}. Now, we construct $\wt\bcalP(n,\cdot)$ as in the proof of Theorem \ref{thm:existence:sections}. Observe that this tensor satisfies $\wt\bcalP(n,\cdot)=0$ (because $\wt\bcalP(n,n)=n(\kappa)=0$ and $\wt\bcalP(n,v_A)=\alpha \wt{\bs{\Rtensor}}(n,v_A)+v_A(\kappa)=0$). Moreover, inserting $\wt\bcalP(n,\cdot)=0$ and \eqref{construction:Rtensor:horizon:final:2} into  
\eqref{constr:piAB} 
gives
\begin{align*}
0=\big(\pounds_n  +\wt\bomega(n) \big) \wt\bcalP(v_A,v_B),
\end{align*}
and since $\wt\bcalP(v_A,v_B)\vert_S=\mathcal{T}(v_A,v_B)=0$, it follows that $\wt\bcalP=0$ everywhere on $\N$. 

At this point, we can construct a Lorentzian manifold $(\M,g)$ according to Theorem \ref{thm:existence:sections}, and check whether conditions \eqref{eq:(1)}-\eqref{eq:(2)} are fulfilled. We first notice that the data $\metdata$ can always be supplemented with the function $\alpha$ 
(which satisfies $\G_{(z,V)}(\alpha)=z\alpha$, as already proven), a function $\p\in\Fcal(\N)$ and a covector $\bqone\in\Gamma(T^{\star}\N)$ with gauge behaviours \eqref{gauge:bqone:alpha:p:w}. 
Then $\pmdata$ is a \textpmdata and, by Proposition \ref{prop:extension:eta},  
it is $(\phi,\rig)$-embedded in $(\M,g)$ if and only if 
the extension $\kil$  
satisfies \eqref{lie:rig:eta}, in which case $\phi^{\star}\Kkil=2\alpha\bU=0,$ $\phi^{\star}\lp\Kkil(\rig,\cdot)\rp=\bqone$ and $\phi^{\star}\lp\Kkil(\rig,\rig)\rp=\p$ (cf.\ Definition \ref{def:embedded:pdata}). Thus, for \eqref{eq:(1)} to hold it suffices to  
fix $\bqone=0$ and $\p=0$, or equivalently the first transverse derivative of $\kil$ to be given by (cf.\ Proposition \ref{prop:extension:eta})
\begin{align}
\label{lie:rig:eta:horizon}
\pounds_{\rig}\kil\stackbin{\phi(\N)}=-n(\alpha)\rig
-\phi_{\star}\lp 
\frac{\alpha}{2} n(\elltwo) \, n
+P\big(2\alpha\bsone+d\alpha, \, \cdot \, \big)\rp .
\end{align}
Finally, condition \eqref{eq:(2)} is equivalent to $\wt{\bcalP}\stackbin{\N}=0$ (or $\bcalP^{\ovkil}\stackbin{\N}=0$, by item \textit{3}.\ of Theorem \ref{thm:exist:pi:R}). This follows from \eqref{id:pi:and:lie_rig:def:tensor}, since in the present case $\mathcal{W}\vert_{\phi(\N)}=0$ and $\phi^{\star}(\pounds_{\mathcal{W}}g)$ only depends on the values of $\mathcal{W}$ at $\phi(\N)$. 
Both \eqref{eq:(1)} and \eqref{eq:(2)} hold, therefore $\phi(\N)$ is a horizon  
in the sense that $(\M,g)$ admits a vector $\kil$, null and tangent to $\phi(\N)$, which also verifies $(\pounds_{\kil}g)\vert_{\phi(\N)} = 0$ and $\phi^{\star}(\pounds_{\rig}\pounds_{\kil}g) = 0$. Note that the condition $n(\kappa)=0$ that we have imposed in this construction is necessary, otherwise $\wt{\bcalP}(n,n)$ cannot vanish identically (cf.\ proof of Theorem \ref{thm:existence:sections}).
\end{remark}

\section*{Acknowledgements}

Both authors acknowledge funding from the grant PID2024-158938NB-I00 funded by MICIU/AEI/ 10.13039/501100011033 and by ``ERDF A way of making Europe". In addition, M. Mars acknowledges financial support under projects SA097P24 (JCyL) and RED2022-134301-T funded by MCIN/AEI/ 10.13039/501100011033, whereas M. Manzano acknowledges the Austrian Science Fund (FWF) [Grant DOI 10.55776/EFP6].

\begingroup
\let\itshape\upshape
\bibliographystyle{plain}
\bibliography{ref}

@article{ashtekar2024null,
  title={Null infinity as a weakly isolated horizon},
  author={Ashtekar, A. and Speziale, S.},
  journal={Physical Review D},
  volume={110},
  number={4},
  pages={044048},
  year={2024},
  publisher={APS}
}

@article{penrose1969gravitational,
	title={Gravitational collapse: The role of {G}eneral {R}elativity},
	author={Penrose, R.},
	journal={Nuovo Cimento},
	volume={{1}},
	pages={252},
	year={1969}
}

@article{wald1984general,
  author  = {Wald, R. M.},
  journal = {{C}hicago, {U}niversity of {C}hicago {P}ress},
  title   = {{G}eneral {R}elativity},
  year    = {1984}
}

@article{yano1957lie,
  author={Yano, K.},
  journal={North-Holland Publishing CO},
  title={The {T}heory of {L}ie {D}erivatives and {I}ts {A}pplications},
  year = {1957}
}

@article{chrusciel2005non,
	title={On non-existence of static vacuum black holes with degenerate components of the event horizon},
	author={Chru{\'s}ciel, P. T. and Reall, H. S. and Tod, P.},
	journal={Classical and Quantum Gravity},
	volume={23},
	number={2},
	pages={549},
	year={2005},
	publisher={IOP Publishing}
}

@article{dobkowski2018near,
	title={The {N}ear {H}orizon {G}eometry equation on compact $2$-manifolds including the general solution for $g> 0$},
	author={Dobkowski-Ry{\l}ko, D. and Kami{\'n}ski, W. and Lewandowski, J. and Szereszewski, A.},
	journal={Physics Letters B},
	volume={785},
	pages={381--385},
	year={2018},
	publisher={Elsevier}
}

@article{Kunduri_2009,
	doi = {10.1088/0264-9381/26/24/245010},
	url = {https://doi.org/10.1088/0264-9381/26/24/245010},
	year = {2009},
	publisher = {},
	volume = {26},
	number = {24},
	pages = {245010},
	author = {Kunduri, H. K. and Lucietti, J.},
	title = {Static near-horizon geometries in five dimensions},
	journal = {Classical and Quantum Gravity}
}

@article{alaee2019existence,
	title={Existence and uniqueness of near-horizon geometries for $5$-dimensional black holes},
	author={Alaee, A. and Khuri, M. and Kunduri, H.},
	journal={Journal of Geometry and Physics},
	volume={144},
	pages={370--387},
	year={2019},
	publisher={Elsevier}
}

@article{dunajski2017einstein,
	title={Einstein--{W}eyl spaces and near-horizon geometry},
	author={Dunajski, M. and Gutowski, J. and Sabra, W.},
	journal={Classical and Quantum Gravity},
	volume={34},
	number={4},
	pages={045009},
	year={2017},
	publisher={IOP Publishing}
}

@article{moore2023bakry,
	title={Bakry--{\'E}mery {R}icci curvature, $X$-minimal hypersurfaces, and near horizon geometries},
	author={Moore, K. and Woolgar, E.},
	journal={Journal of Mathematical Physics},
	volume={64},
	number={2},
	pages={022504},
	year={2023},
	publisher={AIP Publishing LLC}
}

@article{mars1993geometry,
  title={{G}eometry of {G}eneral {H}ypersurfaces in {S}pacetime: {J}unction {C}onditions},
  author={Mars, M. and Senovilla, J. M. M.},
  journal={Classical and Quantum Gravity},
  volume={{10}},
  pages={1865},
  year={1993},
  publisher={IOP Publishing}
}

@article{senovilla1998singularity,
	title={Singularity theorems and their consequences},
	author={Senovilla, J. M. M.},
	journal={General Relativity and Gravitation},
	volume={30},
	number={5},
	pages={701--848},
	year={1998},
	publisher={Springer}
}

@article{rodnianski2018asymptotically,
  title={The {A}symptotically {S}elf-{S}imilar {R}egime for the {E}instein {V}acuum {E}quations},
  author={Rodnianski, I. and Shlapentokh-Rothman, Y.},
  journal={Geometric and Functional Analysis},
  volume={28},
  number={3},
  pages={755--878},
  year={2018},
  publisher={Springer}
}

@Article{ashtekar2000isolated,
  author    = {Ashtekar, A. and Fairhurst, S. and Krishnan, B.},
  journal   = {Physical Review D},
  title     = {{I}solated horizons: {H}amiltonian evolution and the first law},
  year      = {2000},
  number    = {10},
  pages     = {104025},
  volume    = {62},
  publisher = {APS}
}

@Article{ashtekar2000generic,
  author    = {Ashtekar, A. and Beetle, C. and Dreyer, O. and Fairhurst, S. and Krishnan, B. and Lewandowski, J. and Wi{\'s}niewski, J.},
  journal   = {Physical Review Letters},
  title     = {{G}eneric isolated horizons and their applications},
  year      = {2000},
  number    = {17},
  pages     = {3564},
  volume    = {85},
  publisher = {APS}
}

@phdthesis{krishnan2002isolated,
  author       = {Krishnan, B.},
  title        = {Isolated Horizons in {N}umerical {R}elativity},
  school       = {The Pennsylvania State University},
  address      = {University Park, U.S.A.},
  year         = {2002}
}

@article{ashtekar2002geometry,
  title={{G}eometry of generic isolated horizons},
  author={{A}shtekar, {A}. and {B}eetle, {C}. and {L}ewandowski, {J}.},
  journal={{C}lassical and {Q}uantum {G}ravity},
  volume={19},
  number={6},
  pages={1195},
  year={2002},
  publisher={{I}{O}{P} {P}ublishing}
}

@article{gourgoulhon20063+,
  title={{A} $3+1$ perspective on null hypersurfaces and isolated horizons},
  author={Gourgoulhon, E. and Jaramillo, J. L.},
  journal={Physics Reports},
  volume={{423}},
  pages={159--294},
  year={2006},
  publisher={Elsevier}
}

@Article{jaramillo2009isolated,
  author    = {Jaramillo, J. L.},
  journal   = {Physical Review D},
  title     = {{I}solated horizon structures in quasiequilibrium black hole initial data},
  year      = {2009},
  number    = {8},
  pages     = {087506},
  volume    = {{79}},
  publisher = {APS}
}

@book{frolov2012black,
  author    = {Frolov, V. and Novikov, I.},
  publisher = {Springer Science \& Business Media},
  title     = {{B}lack hole physics: basic concepts and new developments},
  year      = {2012},
  volume    = {96}
}

@Article{mars2012stability,
  author    = {Mars, M.},
  journal   = {Classical and Quantum Gravity},
  title     = {{S}tability of {M}{O}{T}{S} in totally geodesic null horizons},
  year      = {2012},
  number    = {14},
  pages     = {145019},
  volume    = {{29}},
  publisher = {IOP Publishing}
}

@article{kunduri2013classification,
  title={Classification of near-horizon geometries of extremal black holes},
  author={Kunduri, H. K. and Lucietti, J.},
  journal={Living Reviews in Relativity},
  volume={16},
  pages={1--71},
  year={2013},
  publisher={Springer}
}

@article{mars2013constraint,
  title={{C}onstraint equations for general hypersurfaces and applications to shells},
  author={Mars, M.},
  journal={General Relativity and Gravitation},
  volume={{45}},
  pages={2175--2221},
  year={2013},
  publisher={Springer}
}

@article{mars2020hypersurface,
  title={{H}ypersurface data: general properties and {B}irkhoff theorem in spherical symmetry},
  author={Mars, M.},
  journal={Mediterranean Journal of Mathematics},
  volume={{17}},
  pages={1--45},
  year={2020},
  publisher={Springer}
}

@article{moncrief1982neighborhoods,
	title={Neighborhoods of {C}auchy horizons in cosmological spacetimes with one {K}illing field},
	author={Moncrief, V.},
	journal={Annals of Physics},
	volume={141},
	number={1},
	pages={83--103},
	year={1982},
	publisher={Elsevier}
}

@article{mars2023staticity,
	doi = {10.1088/1361-6382/acfded},
	url = {https://doi.org/10.1088/1361-6382/acfded},
	year = {2023},
	publisher = {IOP Publishing},
	volume = {40},
	number = {22},
	pages = {225012},
	author = {Chruściel, P. T. and Mars, M.},
	title = {On staticity of bifurcate {K}illing horizons},
	journal = {Classical and Quantum Gravity}
}

@article{kroencke2024asymptotic,
  author       = {K. Kroencke and O. Petersen},
  title        = {The {A}symptotic {E}xpansion of the {S}pacetime {M}etric at the {E}vent {H}orizon},
  journal      = {Annales Henri Poincaré},
  volume       = {26},
  pages        = {2315--2353},
  year         = {2025},
  doi          = {10.1007/s00023-024-01488-1},
  url          = {https://doi.org/10.1007/s00023-024-01488-1}
}

@article{kunduri2009uniqueness,
doi = {10.1088/0264-9381/26/5/055019},
url = {https://doi.org/10.1088/0264-9381/26/5/055019},
year = {2009},
publisher = {},
volume = {26},
number = {5},
pages = {055019},
author = {Kunduri, H. K. and Lucietti, J.},
title = {Uniqueness of near-horizon geometries of rotating extremal {A}d{S}$_{4}$ black holes},
journal = {Classical and Quantum Gravity}
}

@article{lewandowski2003extremal,
doi = {10.1088/0264-9381/20/4/303},
url = {https://doi.org/10.1088/0264-9381/20/4/303},
year = {2003},
publisher = {},
volume = {20},
number = {4},
pages = {587},
author = {J. Lewandowski and T. Pawlowski},
title = {Extremal isolated horizons: a local uniqueness theorem},
journal = {Classical and Quantum Gravity}
}

@article{katona2024uniqueness,
  title={Uniqueness of the extremal {S}chwarzschild de {S}itter spacetime},
  author={Katona, D. and Lucietti, J.},
  journal={Letters in Mathematical Physics},
  volume={114},
  number={1},
  pages={18},
  year={2024},
  publisher={Springer}
}

@article{chrusciel2012stationary,
  title={Stationary black holes: {U}niqueness and beyond},
  author={Chru{\'s}ciel, P. T. and Costa, J. L. and Heusler, M.},
  journal={Living Reviews in Relativity},
  volume={15},
  number={1},
  pages={1--73},
  year={2012},
  publisher={Springer}
}

@article{wylie2023rigidity,
  title={Rigidity of compact static near-horizon geometries with negative cosmological constant},
  author={W. Wylie},
  journal={Letters in Mathematical Physics},
  year={2023},
  volume={113},
  url={https://api.semanticscholar.org/CorpusID:257335350}
}

@article{matejov2021uniqueness,
  title={Uniqueness of extremal isolated horizons and their identification with horizons of all type D black holes},
  author={Matejov, D. and Podolsk{\`y}, J.},
  journal={Classical and Quantum Gravity},
  volume={38},
  number={13},
  pages={135032},
  year={2021},
  publisher={IOP Publishing}
}

@article{kunduri2009classification,
  title={A classification of near-horizon geometries of extremal vacuum black holes},
  author={Kunduri, H. K. and Lucietti, J.},
  journal={Journal of mathematical physics},
  volume={50},
  number={8},
  year={2009},
  publisher={AIP Publishing}
}

@article{gundlach2007critical,
  title={Critical phenomena in gravitational collapse},
  author={Gundlach, C. and Martin-Garcia, J. M.},
  journal={Living Reviews in Relativity},
  volume={10},
  number={1},
  pages={5},
  year={2007},
  publisher={Springer}
}

@article{dunajski2025intrinsic,
      title={Intrinsic rigidity of extremal horizons}, 
      author={M. Dunajski and J. Lucietti},
      year={2025},
      journal={ArXiv:2306.17512}
}

@article{hounnonkpe2025horizon,
  title={Horizon classification via {R}iemannian flows},
  author={Hounnonkpe, R. A. and Minguzzi, E.},
  journal={In Annales Henri Poincar{\'e}},
  pages={1--33},
  year={2025},
  publisher={Springer}
}

@article{bahuaud2022static,
  title={Static near-horizon geometries and rigidity of quasi-{E}instein manifolds},
  author={Bahuaud, E. and Gunasekaran, S. and Kunduri, H. K. and Woolgar, E.},
  journal={Letters in Mathematical Physics},
  volume={112},
  number={6},
  pages={116},
  year={2022},
  publisher={Springer}
}

@article{mars2023zcovariant,
  title={\textcolor{white}{Z}\hspace{-0.7em}{C}ovariant definition of {D}ouble {N}ull {D}ata and geometric uniqueness of the characteristic initial value problem},
  author={Mars, M. and S{\'a}nchez-P{\'e}rez, G.},
  journal={Journal of Physics A: Mathematical and Theoretical},
  volume={{56}},
  number={25},
  pages={255203},
  year={2023},
  publisher={IOP Publishing}
}

@article{mars2023first,
year = {2023},
publisher = {IOP Publishing},
volume = {{56}},
number = {3},
pages = {035203},
author = {Mars, M. and S{\'a}nchez-P{\'e}rez, G.},
title = {Double {N}ull {D}ata and the {C}haracteristic {P}roblem in {G}eneral {R}elativity},
journal = {Journal of Physics A: Mathematical and Theoretical}
}

@article{manzano2023matching,
	title = {Abstract formulation of the spacetime matching problem and null thin shells},
	author = {Manzano, M. and Mars, M.},
	journal = {Physical Review D},
	volume = {{109}},
	issue = {4},
	pages = {044050},
	numpages = {28},
	year = {2024},
	publisher = {American Physical Society},
	doi = {10.1103/PhysRevD.109.044050},
	url = {https://link.aps.org/doi/10.1103/PhysRevD.109.044050}
}

@article{manzano2023constraint,
title = {The constraint tensor for null hypersurfaces},
journal = {Journal of Geometry and Physics},
volume = {208},
pages = {105375},
year = {2025},
issn = {0393-0440},
doi = {https://doi.org/10.1016/j.geomphys.2024.105375},
url = {https://www.sciencedirect.com/science/article/pii/S0393044024002766},
author = {M. Manzano and M. Mars}
}

@article{manzano2023field,
  title = {Null hypersurface data and ambient vector fields: {K}illing horizons of order zero and one},
  author = {Manzano, M. and Mars, M.},
  journal = {Physical Review D},
  volume = {110},
  issue = {4},
  pages = {044070},
  numpages = {17},
  year = {2024},
  publisher = {American Physical Society},
  doi = {10.1103/PhysRevD.110.044070},
  url = {https://link.aps.org/doi/10.1103/PhysRevD.110.044070}
}

@article{mars2024abstract,
	title={{A}bstract null geometry, energy-momentum map and applications to the constraint tensor}, 
	author={Mars, M.},
        volume = {1},
        number = {2},
        pages = {797--852},
	year={2025},
	journal={Beijing Journal of Pure and Applied Mathematics},
}

@article{mars2024transverseI,
title = {\textcolor{white}{A}\hspace{-0.7em}{T}ransverse expansion of the metric at null hypersurfaces {I}.\ {U}niqueness and application to {K}illing horizons},
journal = {Journal of Geometry and Physics},
volume = {209},
pages = {105416},
year = {2025},
author = {Mars, M. and Sánchez-Pérez, G.},
}

@article{mars2024transverseII,
title = {\textcolor{white}{A}\hspace{-0.7em}{T}ransverse expansion of the metric at null hypersurfaces {II}. {E}xistence results and application to {K}illing horizons},
journal = {Journal of Geometry and Physics},
volume = {217},
pages = {105605},
year = {2025},
issn = {0393-0440},
doi = {https://doi.org/10.1016/j.geomphys.2025.105605},
url = {https://www.sciencedirect.com/science/article/pii/S0393044025001895},
author = {M. Mars and G. Sánchez-Pérez}
}

@article{racz2007stationary,
doi = {10.1088/0264-9381/24/22/016},
url = {https://doi.org/10.1088/0264-9381/24/22/016},
year = {2007},
publisher = {},
volume = {24},
number = {22},
pages = {5541},
author = {Rácz, I.},
title = {Stationary black holes as holographs},
journal = {Classical and Quantum Gravity}
}

@article{chrusciel2012manyways,
doi = {10.1088/0264-9381/29/14/145006},
url = {https://doi.org/10.1088/0264-9381/29/14/145006},
year = {2012},
publisher = {IOP Publishing},
volume = {29},
number = {14},
pages = {145006},
author = {Chruściel, P. T. and Paetz, T.-T.},
title = {The many ways of the characteristic Cauchy problem},
journal = {Classical and Quantum Gravity}
}

@article{chrusciel2004analyticity,
    author = "Chrusciel, P. T.",
    title = "{On analyticity of static vacuum metrics at nondegenerate horizons}",
    journal = "Acta Physica Polonica B",
    volume = "36",
    issue = "1",
    pages = "17--26",
    year = "2005"
}

@article{mars2025KID,
doi = {10.1088/1361-6382/addea3},
url = {https://doi.org/10.1088/1361-6382/addea3},
year = {2025},
publisher = {IOP Publishing},
volume = {42},
number = {12},
pages = {125001},
author = {Mars, M. and Sánchez-Pérez, G.},
title = {\textcolor{white}{Z}\hspace{-0.7em}{K}illing and homothetic initial data for general hypersurfaces},
journal = {Classical and Quantum Gravity}
}

@article{fefferman1985conformal,
  title={Conformal invariants},
  author={Fefferman, C. and Graham, C. R.},
  journal={``Elie Cartan et les Mathematiques d'Aujourd'hui," Asterisque, hors serie},
  pages={95--116},
  year={1985}
}

@article{fefferman2012ambient,
  title={The {A}mbient {M}etric},
  author={Fefferman, C. and Graham, C. R.},
  year={2012},
  journal={Princeton University Press}
}

@Article{mars2018multiple,
  author    = {Mars, M. and Paetz, T.-T. and Senovilla, J. M. M.},
  journal   = {Classical and Quantum Gravity},
  title     = {{M}ultiple {K}illing horizons},
  year      = {2018},
  number    = {15},
  pages     = {155015},
  volume    = {{35}},
  publisher = {IOP Publishing}
}

\end{document}